\newif\ifoutputappendix % when set to true, the appendix is output
\newif\ifignore % when set to true, additional text appears containing
\newcommand{\auxproof}[1]{
\ifignore\mbox{}\newline
\textbf{BEGIN: AUX-PROOF} \dotfill\newline
{#1}\mbox{}\newline
\textbf{END: AUX-PROOF}\dotfill\newline
\fi}
\def\labelstyle{\scriptstyle}
\newcommand{\kar}{\ar|-*\dir{|}}
\theoremstyle{plain}
\newtheorem{mytheorem}[theorem]{Theorem}
\newtheorem{mylemma}[theorem]{Lemma}
\newtheorem{mysublemma}[theorem]{Sublemma}
\newtheorem{mycorollary}[theorem]{Corollary}
\theoremstyle{definition}
\newtheorem{mydefinition}[theorem]{Definition}
\theoremstyle{remark}
\newtheorem{myremark}[theorem]{Remark}
\newtheorem{myexample}[theorem]{Example}
\newtheorem{myassumptions}[theorem]{Assumptions}
\newcommand{\seq}[2]{\begingroup%
    \renewcommand{\i}{1}#1%
    ,\dotsc,%
    \renewcommand{\i}{#2}#1%
\endgroup}
\newcommand{\seqby}[3]{\begingroup%
    \renewcommand{\i}{#2}#1%
    ,\dotsc,%
    \renewcommand{\i}{#3}#1%
\endgroup}
\newcommand{\tuple}[1]{\langle #1 \rangle}
\newcommand{\iso}{\mathrel{\stackrel{
           \raisebox{.5ex}{$\scriptstyle\cong\,$}}{
           \raisebox{0ex}[0ex][0ex]{$\rightarrow$}}}}
\newcommand{\place}{\underline{\phantom{n}}\,} % place holder
\newcommand{\nat}{\mathbb{N}}
\newcommand{\Sets}{\mathbf{Sets}}
\newcommand{\Meas}{\mathbf{Meas}}
\newcommand{\C}{\mathbb{C}}
\newcommand{\id}{\mathrm{id}}
\newcommand{\bang}{\mbox{$!$}}
\newcommand{\pow}{\mathcal{P}}
\newcommand{\dist}{\mathcal{D}}
\newcommand{\giry}{\mathcal{G}}
\newcommand{\Kl}{\mathcal{K}\hspace{-.1em}\ell}
\newcommand{\oF}{\overline{F}}
\newcommand{\op}{\mathrm{op}}
\newcommand{\co}{\mathbin{\circ}}
\newcommand{\kco}{\mathbin{\odot}}
\newcommand{\relarrow}[1]{\mathrel{\ooalign{$#1$\crcr\hss\raisebox{.1ex}[0ex][0ex]{$\shortmid$}\hss}}}
\newcommand{\relto}{\relarrow{\rightarrow}}
\newcommand{\kto}{\relto}
\newcommand{\longrelto}{\relarrow{\longrightarrow}}
\newcommand{\longkto}{\longrelto}
\newcommand{\qnt}[2]{#1#2\mathpunct{.}}
\newcommand{\qnta}{\qnt{\forall}}
\newcommand{\lfp}{\operatorname{lfp}}
\newcommand{\sol}{\mathrm{sol}}
\newcommand{\X}{\mathcal{X}}
\newcommand{\Lang}{\mathrm{Lang}}
\newcommand{\sigalg}{\mathfrak{F}}
\newcommand{\FSigma}{F_{\Sigma}}
\newcommand{\tr}{\operatorname{\mathsf{tr}}}
\newcommand{\trinf}{\tr^{\infty}}
\newcommand{\trp}{\tr^{\mathrm{p}}}
\newcommand{\myTree}{\mathrm{Tree}}
\newcommand{\Run}{\mathrm{Run}}
\newcommand{\AccRun}{\mathrm{AccRun}}
\newcommand{\Dom}{\mathrm{Dom}}
\newcommand{\Branch}{\mathrm{Branch}}
\newcommand{\DelSt}{\mathrm{DelSt}}
\newcommand{\myroot}{\mathrm{rt}}
\newcommand{\AccProb}{\mathrm{AccProb}}
\newcommand{\Cyl}{\mathrm{Cyl}}
\newcommand{\NDL}{\mathrm{NoDiv}}
\newcommand{\DL}{\mathrm{Div}}
\newcommand{\hd}{\mathrm{hd}}
\newcommand{\tl}{\mathrm{tl}}
\newcommand{\accstate}{\raisebox{.65ex}{\xymatrix{*+<.95em>[o][F=]{}}}}
\newcommand{\nonaccstate}{\raisebox{.65ex}{\xymatrix{*+<.95em>[o][F-]{}}}}
\newcommand{\dlstate}{\spadesuit}
\newcommand{\dlletter}{\mathsf{o}}
\title{Coalgebraic Trace Semantics for B\"{u}chi and Parity Automata}
\author[1,2]{Natsuki Urabe}
\author[1]{Shunsuke Shimizu}
\author[1]{Ichiro Hasuo}
\affil[1]{Department of Computer Science, The University of Tokyo, Japan
%\\\texttt{\{urabenatsuki,shunsuke,ichiro\}@is.s.u-tokyo.ac.jp}
}
\affil[2]{JSPS Research Fellow}
\authorrunning{N. Urabe, S. Shimizu and I. Hasuo} %mandatory. First: Use abbreviated first/middle names. Second (only in severe cases): Use first author plus 'et. al.'
\subjclass{F.1.1 Models of Computation
}% mandatory: Please choose ACM 1998 classifications from http://www.acm.org/about/class/ccs98-html . E.g., cite as "F.1.1 Models of Computation". 
\keywords{
coalgebra,
B\"uchi/parity/probabilistic/tree automaton
% parity automaton,
% probabilistic automaton,
% tree automaton
} % mandatory: Please provide 1-5 keywords
\begin{document}

\maketitle

\begin{abstract}
Despite its success in producing numerous general results on state-based
 dynamics, the theory of \emph{coalgebra} has struggled to accommodate
 the \emph{B\"{u}chi acceptance condition}---a basic notion in the
 theory of automata for infinite words or trees. In this paper we present a clean answer to the question that builds on the ``maximality'' characterization of infinite traces (by Jacobs and C\^{\i}rstea): the accepted language of a B\"{u}chi automaton is characterized by two commuting diagrams, one for a \emph{least} homomorphism and the other for a \emph{greatest}, much like in a system of (least and greatest) fixed-point equations. This characterization works uniformly for the nondeterministic branching and the probabilistic one; and for words and trees alike. We present our results in terms of the \emph{parity} acceptance condition that generalizes B\"{u}chi's.
\end{abstract}

%\end{frontmatter}

\section{Introduction}
\label{sec:intro}
%\textbf{B\"{u}chi Automata}\quad
\subparagraph*{B\"{u}chi Automata}
\emph{Automata} are central to
   theoretical computer science. Besides their significance in formal
   language theory and as models of computation,  many
   \emph{formal verification} techniques rely on them, exploiting their balance between expressivity and 
   tractable complexity of operations on them. See
   e.g.~\cite{Vardi95aaa,GradelTW02ala}. Many current problems in
   verification are about
%   \emph{reactive} and  hence
   \emph{nonterminating} systems (like servers); for their analyses, naturally,
   automata that classify \emph{infinite} objects---such as infinite
   words and infinite trees---are employed. 

The \emph{B\"{u}chi acceptance condition} is the  simplest  nontrivial
    acceptance condition for automata for infinite objects. Instead
    of requiring finally reaching an accepting state $\accstate$---which
    makes little sense for infinite words/trees---it
    requires  accepting states  visited \emph{infinitely often}. 
   This simple condition, too, has proved  both expressive and 
   computationally tractable:  for the word case
    the B\"{u}chi condition can express any $\omega$-regular properties;
%    (but not for trees);
   and the emptiness problem for B\"{u}chi automata can be solved
    efficiently by searching for a \emph{lasso} computation.

%\vspace*{.3em}
%\noindent\textbf{Coalgebras} \quad
\subparagraph*{Coalgebras}
Studies of automata and state-based transition systems in general have
been shed a fresh \emph{categorical} light in 1990's, by
the theory of \emph{coalgebra}. 
% Following the studies of
% non-well-founded sets~\cite{Aczel88} and the principle of coinduction at
% its heart~\cite{BarwiseM96}, 
Its
%deceptively
simple modeling of
state-based dynamics---as a \emph{coalgebra}, i.e.\ an arrow $c\colon
X\to FX$ in a category $\C$---has produced numerous results
that capture mathematical essences and provide general techniques. Among its
basic results are: \emph{behavior-preserving maps} as 
homomorphisms; a \emph{final coalgebra} as a fully abstract domain of
behaviors; \emph{coinduction} (by finality) as definition and proof
principles; a general span-based definition of \emph{bisimulation}; etc. See
e.g.~\cite{Jacobs12itc,Rutten00uca}. More advanced 
results are on: coalgebraic modal logic (see
e.g.~\cite{CirsteaKPSV11mla}); process algebras and congruence formats (see
e.g.~\cite{Klin09bma}); generalization of Kleene's theorem
%  and various
% constructions on automata for finite words
(see e.g.~\cite{Silva15asi}); etc.

\auxproof{
Coalgebras are good at \emph{generalizing} existing results: a known
result, once expressed using a coalgebra $c\colon X\to FX$ in a
category $\C$, can apply to a variety of systems. This
happens by changing the endofunctor $F\colon\C\to\C$---that describes
the ``type of behaviors'' of the systems in question---and by changing the
base category $\C$.  For example, the same span diagram---with different
choices of $F$---instantiates to the common notion of bisimulation
between nondeterministic systems and to the one for \emph{probabilistic}
systems as well (see e.g.~\cite{Sokolova05cao}). Another example is the
modeling of \emph{name-passing processes}, where the base category $\C$
is changed from usual $\Sets$ to suitable presheaf category (or the
category of nominal sets). 
}

%\vspace*{.3em}
%\noindent\textbf{B\"{u}chi Automata, Coalgebraically} \quad
\subparagraph*{B\"{u}chi Automata, Coalgebraically}
In the coalgebra community, however, two  important phenomena in
automata and/or concurrency have been known to be hard
to model---many previous attempts have seen only limited success.
One is \emph{internal ($\tau$-)transitions} and \emph{weak (bi)similarity};
see e.g.\ recent~\cite{GoncharovP14cwb}.
The other one
% ,
% that is the current paper's topic,
is the  B\"{u}chi acceptance condition.

\begin{wrapfigure}[3]{r}{0pt}
\hspace{-1em}\small
\raisebox{-.2cm}[0pt][0cm]{
\begin{math}
 \vcenter{\xymatrix@C-1.4em@R=.7em{
  {FX}
     \ar[r]^-{Ff}
  &
  {FY}
  \\
  {X}
    \ar[r]_{f}
    \ar[u]^{c}
 &
  {Y}
    \ar[u]_{d}
}}
\end{math}
}
\end{wrapfigure}
Here is a (sketchy) explanation why these two
phenomena should be hard to model coalgebraically. The theory of coalgebra is centered around
\emph{homomorphisms}
as behavior-preserving maps; see the diagram on the right.  
Deep rooted in it is the idea of \emph{local matching}
between one-step transitions in  $c$ and those in
$d$. This is what fails in the two phenomena: in weak bisimilarity 
a one-step transition in $c$ is matched by a possibly multi-step
transition in $d$; and the  B\"{u}chi acceptance
condition---stipulating
that accepting states are visited \emph{infinitely often}, in the long
run---is utterly \emph{nonlocal}. 

There have been some works that study B\"{u}chi acceptance conditions
(or more general \emph{parity} or \emph{Muller}
conditions) in coalgebraic settings. One
is~\cite{CianciaV12saa}, where they rely on the lasso characterization of
nonemptiness and use  $\Sets^{2}$ as a  base category. Another line is on \emph{coalgebra automata} (see
e.g.~\cite{Venema06aaf}), where however  B\"{u}chi/parity/Muller
acceptance conditions reside outside the realm of
coalgebras.\footnote{More precisely: a coalgebra automaton is an
automaton (with B\"{u}chi/parity/Muller
acceptance conditions) that \emph{classifies} coalgebras (as
generalization of words and trees). A coalgebra automaton itself is
\emph{not} described as a coalgebra; nor is its acceptance condition.}
Inspired by these works, and also by our  work~\cite{HasuoSC16lpm}
on alternating fixed points and coalgebraic model checking, the current paper introduces
a coalgebraic modeling of B\"uchi and parity automata based on \emph{systems of
fixed-point equations}. 

%\vspace*{.3em}
%\noindent\textbf{Contributions} \quad
\subparagraph*{Contributions}
% \emph{Coalgebraic Modeling of B\"{u}chi Automata and Their Languages} 
% \quad
We present a clean answer to the  question of ``B\"{u}chi automata,
coalgebraically,'' relying on the previous work on coalgebraic
infinitary trace semantics~\cite{Jacobs04tsf,Cirstea10git} and 
%in terms of maximal morphisms
%lattice-theoretic characterization of solutions of
fixed-point equations~\cite{HasuoSC16lpm}. Our modeling, hinted
in~(\ref{eq:fromMaximalTractToBuechiTrace}), features: 1) accepting
states as a \emph{partition} of a state space;
%rather than as a label;
and 2) explicit use of $\mu$ and $\nu$---for least/greatest fixed
points---in
%commutative
 diagrams.
 We state our results for the \emph{parity} condition (that
generalizes
the B\"uchi one).
% , for characterization of accepted
% languages. 
\begin{equation}\label{eq:fromMaximalTractToBuechiTrace}
\vspace{-1.4em}
  \begin{tabular}{|c|}
  \hline
    \def\labelstyle{\textstyle}
    $\vcenter{
      \begin{xy}\footnotesize
        \xymatrix@R.8em@C+0em{
          {\overline{F}X} \kar@{->}[r]
          \ar@{}[dr]|{\color{red} =_{\nu}}
          &
          {\overline{F}Z}
          \\
          {X} \kar[u]_{c} \kar@{->}_{\trinf(c)}[r]
          &
          {Z} \kar[u]_{J\zeta}^{\cong}
        }
      \end{xy}
    }$
    \begin{tabular}{l}
      in a Kleisli\\ category $\Kl(T)$
    \end{tabular}
  \\
    \begin{minipage}[t]{.42\textwidth}
      \centering\footnotesize
      Characterization of languages under no (i.e.\ the trivial)
      acceptance condition~\cite{Jacobs04tsf,Cirstea10git}
    \end{minipage}
  \\\hline
  \end{tabular}
  \Longrightarrow
  \begin{tabular}{|c|}
  \hline
    \def\labelstyle{\textstyle}
    \begin{xy}\footnotesize
      \xymatrix@R=.8em@C+0em{
        {\overline{F}X} \kar[r]
        \ar@{}[rd]|{\color{blue}=_{\mu}}
        &
        {\overline{F}Z}
        \\
        {X_{1}} \kar[u]^{c_{1}} \kar[r]_{\trp(c_{1})}
        &
        {Z\mathrlap{\enspace}} \kar[u]_{J\zeta}^{\cong}
      }
    \end{xy}
    \;
    \def\labelstyle{\textstyle}
    \begin{xy}\footnotesize
      \xymatrix@R=.8em@C+0em{
        {\overline{F}X} \kar[r]
        \ar@{}[rd]|{\color{red}=_{\nu}}
        &
        {\overline{F}Z}
        \\
        {X_{2}} \kar[u]^{c_{2}} \kar[r]_{\trp(c_{2})}
        &
        {Z\mathrlap{\enspace}} \kar[u]_{J\zeta}^{\cong}
      }
    \end{xy}
  \\
    \begin{minipage}[t]{.40\textwidth}
      \centering\footnotesize
      Under the B\"{u}chi acceptance condition, with 
      $X_{1}=\{\nonaccstate\text{'s}\}$ and $X_{2}=\{\accstate\text{'s}\}$
    \end{minipage}
  \\\hline
  \end{tabular}
\end{equation}
Our framework is generic: its
leading examples are \emph{nondeterministic} and (generative) \emph{probabilistic} 
tree automata, with the B\"{u}chi/parity acceptance condition.

Our contributions are: 1) coalgebraic modeling of automata with the
B\"uchi/parity conditions; 2) characterizing their
accepted languages by diagrams with $\mu$'s and $\nu$'s ($\trp$
in~(\ref{eq:fromMaximalTractToBuechiTrace})); and 3) proving that
the characterization indeed captures the
conventional definitions.
% We establish that this categorical modeling indeed coincides
% with conventional definitions.
The last ``sanity-check'' proves to be
intricate in the probabilistic case, and our proof---relying on previous~\cite{Cirstea10git,Schubert09tcf}---identifies
the role of \emph{final sequences}~\cite{Worrell05otf} in
%the understanding of
probabilistic processes.

With explicit $\mu$'s and $\nu$'s---that specify in \emph{which}
homomorphism, among \emph{many} that exist, we are interested---we
depart from the powerful reasoning principle of \emph{finality}
(existence of a unique homomorphism). We believe this is a necessary
step forward, for the theory of coalgebra to take up long-standing
challenges like the B\"{u}chi condition and weak bisimilarity.
 Our characterization~(\ref{eq:fromMaximalTractToBuechiTrace})---although
it is not so simple as the uniqueness argument by finality---seems
useful, too: we have obtained some results on \emph{fair
simulation} notions between B\"{u}chi
automata~\cite{UrabeSH16buechiSimulationArXiv}, following the current work.

%\vspace*{.3em}
%\noindent
%\textbf{Organization of the Paper} \quad\
\subparagraph*{Organization of the Paper}
In~\S{}\ref{sec:prelim} we provide  backgrounds on: the
coalgebraic theory
of trace  in a \emph{Kleisli category}~\cite{Jacobs04tsf,Cirstea10git}
(where we explain the diagram on the left
in~(\ref{eq:fromMaximalTractToBuechiTrace})); and systems of fixed-point
equations. In~\S{}\ref{sec:coalgebraicModeling} we present a coalgebraic
modeling of B\"uchi/parity automata and their languages. Coincidence
with the conventional definitions is shown
in~\S{}\ref{sec:nondetParitySys}
for the  nondeterministic setting, and
in~\S{}\ref{sec:probParitySys} for the probabilistic one.

Most proofs are deferred to the appendix.

%\vspace*{.3em}
%\noindent\textbf{Future Work} \quad
\subparagraph*{Future Work}
Here we are based on the coalgebraic theory of trace and
simulation~\cite{PowerT97ecs,Jacobs04tsf,HasuoJS07gts,UrabeH15cit}; 
it has been developed under the \emph{trivial} acceptance condition
(any run that does not diverge, i.e.\ that does not come to a deadend, is
 accepted). The current paper is about accommodating the B\"uchi/parity
 conditions in the \emph{trace} part of the theory; for the
 \emph{simulation} part 
 we also have exploited the current results
 to obtain sound \emph{fair simulation} notions for nondeterministic
 B\"uchi tree automata and probabilistic B\"uchi word automata~\cite{UrabeSH16buechiSimulationArXiv}. 

On the practical side our future work mainly consists of proof methods
for \emph{trace/language inclusion}, a problem omnipresent in formal
verification. \emph{Simulations}---as one-step, local witnesses for
trace inclusion---have been often used as a sound (but not necessarily
complete) proof method that is computationally more tractable; with the
observations in~\cite{UrabeSH16buechiSimulationArXiv} we are
naturally interested in them. Possible directions are: synthesis of
\emph{simulation matrices} between finite systems by linear programming,
like in~\cite{UrabeH16qsb}; synthesis of simulations
by other optimization techniques for program verification (where problem
instances are infinite due to the integer type); and simulations as a
proof method in interactive theorem proving.

% On the theoretical side we aim at a unifying \emph{fibrational} picture
% of the coalgebraic theory of trace~\cite{Jacobs04tsf,HasuoJS07gts,UrabeH15cit}. In~(\ref{eq:fromMaximalTractToBuechiTrace})
% we see two fixed-points: the greatest (left) for 
% trivial acceptance; and another one (right) for B\"uchi
% acceptance. Accommodating the \emph{coalgebraic finite trace semantics}
% in~\cite{HasuoJS07gts}---where we use an initial algebra
% $\alpha\colon FA\iso A$
% in place of a final coalgebra $\zeta\colon Z\iso FZ$ and we  get
% proper finality---in this spectrum of fixed points is our first concrete
% goal. We plan to build on~\cite{HasuoCKJ13cpa} and investigate the use
% of fibrational \emph{comprehension}.

\section{Preliminaries}
\label{sec:prelim}
\subsection{Coalgebras in a Kleisli Category}
\label{subsec:coalgebra}
We assume some basic category theory, most of which is covered in~\cite{Jacobs12itc}.

The conventional coalgebraic modeling of
systems---as a function $X\to FX$---is known to capture
\emph{branching-time} semantics (such as
bisimilarity)~\cite{Jacobs12itc,Rutten00uca}. In contrast
accepted languages of B\"{u}chi automata  (with nondeterministic or
probabilistic branching)  constitute \emph{linear-time} semantics;
see~\cite{vanGlabbeek01tlt} for the so-called \emph{linear time-branching time spectrum}.

For the coalgebraic modeling of such linear-time semantics we 
follow the ``Kleisli modeling'' tradition~\cite{PowerT97ecs,Jacobs04tsf,HasuoJS07gts}.
Here a system is parametrized by a monad $T$ and an endofunctor $F$ on $\Sets$:
the former represents the \emph{branching type} while
the latter represents the \emph{(linear-time) transition type}; and 
a system is modeled as a function of the type $X\to TFX$.\footnote{
Another
 eminent approach to coalgebraic linear-time semantics is the
 \emph{Eilenberg-Moore} one (see e.g.~\cite{JacobsSS15tsv,AdamekBHKMS12acp}):
 notably
% in the former a system is expressed as $X\to TFX$ while 
in the latter
 a system is expressed as $X\to FTX$. 
The Eilenberg-Moore approach can be seen as a categorical
 generalization of \emph{determinization} or the \emph{powerset
 construction}. It is however not clear how determinization 
 serves our current goal (namely a coalgebraic modeling of the
 B\"{u}chi/parity
 acceptance conditions). 
% This however makes the approach hard to apply to
%  \emph{infinite} words or trees, since already for B\"{u}chi word
%  automata, it is known that deterministic ones are less expressive than
%  general, nondeterministic ones.
}

 A function $X\to TFX$ is nothing but an
\emph{$\oF$-coalgebra} $X\kto \oF X$ in the \emph{Kleisli category}
$\Kl(T)$---where $\oF$ is a suitable lifting of $F$.
%This allows us to 
This means we can
apply the standard coalgebraic machinery
 to linear-time behaviors,
by changing the base category from $\Sets$ to $\Kl(T)$.

%\begin{mydefinition}[Kleisli category $\Kl(T)$]
%\label{def:KleisliCategory}
A monad $T=(T,\eta,\mu)$ on a category $\mathbb{C}$ induces the \emph{Kleisli category} $\Kl(T)$.
The objects of $\Kl(T)$ are the same as $\mathbb{C}$'s; and for each pair
 $X,Y$  of objects,
the homset $\Kl(T)(X,Y)$ is given by  $\mathbb{C}(X,TY)$.
An arrow $f\in\Kl(T)(X,Y)$---that is  $X\to TY$ in
 $\mathbb{C}$---is called a \emph{Kleisli arrow} and is denoted by
 $f:X\kto Y$ for distinction.
Given two successive Kleisli arrows $f:X\kto Y$ and $g:Y\kto Z$, 
their \emph{Kleisli composition} is given by
 $\mu_Z\circ Tg\circ f:X\kto Z$ (where $\circ$ is composition in
 $\mathbb{C}$). This composition in $\Kl(T)$ is denoted by $g\odot f$ for distinction. 
The \emph{Kleisli inclusion} $J:\mathbb{C}\to\Kl(T)$ is defined by 
$J(X)=X$
%for $X\in\mathbb{C}$,
and $J(f)=\eta_{Y}\circ f:X\kto Y$.
 %for $f:X\to Y$.
%Here composition is denoted by $\odot$ instead of $\circ$ for the sake of distinction.
%Kleisli arrow $\kto$, Kleisli composition $\odot$ 
%\end{mydefinition}
% The above data indeed form a category. 
% For example, for each $X\in \Kl(T)$, the identity arrow $\id_{X}:X\kto X$ is given by $\eta_X:X\to TX$.

In this paper we mainly use two combinations of $T$ and $F$.
The first  is 
%a category $\Sets$ of sets, 
the \emph{powerset monad} $\pow$
and a polynomial functor on $\Sets$;
the second  is 
%a category $\Meas$ of measurable spaces and measurable function,
the \emph{(sub-)Giry monad}~\cite{Giry82cao} $\giry$ and
a polynomial functor on $\Meas$, 
the category of measurable spaces and measurable functions.
The \emph{Giry monad}~\cite{Giry82cao} is commonly used
for modeling (not necessarily discrete) probabilistic processes. We
shall use its ``sub'' variant; a \emph{subprobability measure} over
$(X,\sigalg_X)$ is a measure $\mu$ such that $0\le \mu(X)\le 1$ (we do
not require $\mu(X)= 1$).

%A coalgebraic system is parametrized by a category $\mathbb{C}$

%In this paper, we mainly use two combinations of 
%%$\mathbb{C}$, 
%$T$ and $F$. 
%The first combination is 
%%a category $\Sets$ of sets, 
%the \emph{powerset monad} $\pow$
%and a polynomial functor on $\Sets$.
%The second combination is 
%%a category $\Meas$ of measurable spaces and measurable function,
%the \emph{(sub-)Giry monad}~\cite{Giry82cao} $\giry$ and a polynomial functor on $\Meas$,
%a category of measurable spaces and measurable functions.

\begin{mydefinition}[$\pow,\giry$]\label{def:powersetMonadAndSubGiryMonad}
The \emph{powerset monad} $\pow$ on $\Sets$ is:
$\pow X=\{A\subseteq X\}$; $(\pow f)(A)=\{f(x) \mid x\in A\}$;
its unit is $\eta^{\pow}_X(x)=\{x\}$; and its multiplication is $\mu^{\pow}_{X}(M)=\bigcup_{A\in M}A$.

The \emph{sub-Giry monad} is a monad
$\giry=(\giry,\eta^{\giry},\mu^{\giry})$ on $\Meas$ such that
$\giry(X,\sigalg_X)=(\giry X, \sigalg_{\giry X})$, where $\giry X$ is
the set of all \emph{subprobability measures} on $(X,\sigalg_X)$, and
$\sigalg_{\giry X}$ is the smallest $\sigma$-algebra such that, for each
$S\in\sigalg_X$, the function $\text{ev}_S:\giry X\to[0,1]$ defined by
$\text{ev}_S(P)=P(S)$ is measurable.  Moreover,
$\eta^{\giry}_{(X,\sigalg_X)}(x)(S)$ is $1$ if $x\in S$ and $0$
otherwise (the \emph{Dirac} distribution), and
$\mu^{\giry}_{(X,\sigalg_X)}(\Psi)(S)=\int_{\giry (X,\sigalg_X)}
\text{ev}_S \,d\Psi$.
\end{mydefinition}

\begin{mydefinition}[polynomial functors on $\Sets$ and $\Meas$]\label{def:polynFunc}
A  \emph{polynomial functor} $F$ on $\Sets$  is defined by the BNF notation
$F\Coloneqq \id\mid A\mid F_1\times F_2\mid \coprod_{i\in I} F_i$.
Here $A\in\Sets$.

A \emph{(standard Borel) polynomial functor} $F$ on $\Meas$ is defined by the BNF notation
$F\Coloneqq \id\mid (A,\sigalg_A) \mid F_1\times F_2 \mid \coprod_{i\in I} F_i$. %\end{equation*} 
Here $I$ is countable, and
we require each constant $(A,\sigalg_A)\in\Meas$ be a \emph{standard Borel space}
(see e.g.~\cite{Doberkat09scl}).
The $\sigma$-algebra $\sigalg_{FX}$ associated to $FX$ is defined as usual,
with (co)product $\sigma$-algebras, etc.
$F$'s action on arrows is obvious.

A standard Borel polynomial functor shall often be called
simply a \emph{polynomial functor}.
\end{mydefinition}
The technical requirement of being standard Borel---meaning that it
arises from a \emph{Polish space}~\cite{Doberkat09scl}---will be used in the probabilistic
setting of~\S{}\ref{sec:probParitySys}; we
follow~\cite{Cirstea10git,Schubert09tcf} in its use.

There is a well-known correspondence between a polynomial functor and a
\emph{ranked alphabet}---a set $\Sigma$ with an \emph{arity map}
$|\place|\colon \Sigma\to \nat$. In this paper a functor $F$ (for the
linear-time behavior type) is restricted to be polynomial; this
essentially means that we are dealing with systems that generate
\emph{trees} over some ranked alphabet (with additional $T$-branching).

\begin{definition}[$\myTree_{\Sigma}$]\label{def:sigmaTree}
An \emph{(infinitary) $\Sigma$-tree}, as in the standard definition, is
a possibly infinite tree whose nodes are labeled with the ranked alphabet $\Sigma$ and
whose branching degrees are consistent with the arity of labels.
The set of $\Sigma$-trees is denoted by $\myTree_{\Sigma}$.
\end{definition}

\begin{mylemma}\label{lem:treeAndFinalCoalg}
Let $\Sigma$ be a ranked alphabet, and $F_{\Sigma}=\coprod_{\sigma\in\Sigma}(\place)^{|\sigma|}$ be
the corresponding polynomial functor on $\Sets$.
The set $\myTree_{\Sigma}$ of (infinitary) $\Sigma$-trees
carries a final $F_{\Sigma}$-coalgebra.
The same holds in $\Meas$, for countable $\Sigma$ and the corresponding
polynomial functor $F_{\Sigma}$.
\qed
\end{mylemma}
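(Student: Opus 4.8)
The plan is to exhibit $\myTree_{\Sigma}$, equipped with its \emph{root-decomposition} map, as the final coalgebra and then verify finality directly. I represent a $\Sigma$-tree $t$ as a partial function $t\colon\nat^{*}\rightharpoonup\Sigma$ whose domain is nonempty, prefix-closed, and arity-consistent (so that $t(w)=\sigma$ forces $wi\in\Dom(t)$ exactly for $1\le i\le|\sigma|$). Decomposing a tree into its root label and its list of immediate subtrees gives a map $\zeta\colon\myTree_{\Sigma}\to F_{\Sigma}\myTree_{\Sigma}=\coprod_{\sigma\in\Sigma}\myTree_{\Sigma}^{|\sigma|}$, sending $t$ to $(\sigma,(t_{1},\dots,t_{|\sigma|}))$ in the $\sigma$-summand, where $\sigma=t(\empseq)$ and $t_{i}(w)=t(iw)$. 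This $\zeta$ is a bijection, since a label together with a tuple of subtrees reassembles into a unique tree, so $(\myTree_{\Sigma},\zeta)$ is a well-defined coalgebra with invertible structure map.

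For finality in $\Sets$, I take an arbitrary coalgebra $c\colon X\to F_{\Sigma}X$ and construct the mediating homomorphism $h\colon X\to\myTree_{\Sigma}$ by coinductive unfolding. Fixing $x\in X$, I define a partial run $\rho_{x}\colon\nat^{*}\rightharpoonup X$ by $\rho_{x}(\empseq)=x$ and, whenever $\rho_{x}(w)=y$ is defined with $c(y)=(\sigma,(y_{1},\dots,y_{|\sigma|}))$, by $\rho_{x}(wi)=y_{i}$ for $1\le i\le|\sigma|$ (undefined otherwise). I then set $h(x)(w)=\sigma$ where $c(\rho_{x}(w))$ lies in the $\sigma$-summand, defined precisely on $\Dom(\rho_{x})$. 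It is routine to check that $h(x)$ satisfies the domain conditions, hence is a genuine $\Sigma$-tree, and that $\zeta\circ h=F_{\Sigma}h\circ c$ by unwinding one step of $\rho_{x}$. Uniqueness is the one point requiring care: two trees coincide iff they carry the same label at every \emph{finite} position, so given two homomorphisms $g,h$ I prove $g(x)(w)=h(x)(w)$ for all $x$ by induction on $|w|$, the base case reading off $c(x)$ and the inductive step using the homomorphism equation to pass to an immediate successor state. Thus the infinite trees are pinned down by a finitary induction.

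For the $\Meas$ case with $\Sigma$ countable, I would equip $\myTree_{\Sigma}$ with the $\sigma$-algebra $\sigalg_{\myTree_{\Sigma}}$ generated by the cylinders $\{t\mid w\in\Dom(t)\text{ and }t(w)=\sigma\}$ for $w\in\nat^{*}$ and $\sigma\in\Sigma$. Countability of $\Sigma$, and hence of $\nat^{*}\times\Sigma$, makes this $\sigma$-algebra countably generated and $(\myTree_{\Sigma},\sigalg_{\myTree_{\Sigma}})$ standard Borel, so it is a legitimate object of the ambient framework, and with this choice $\zeta$ becomes a measurable isomorphism onto $F_{\Sigma}\myTree_{\Sigma}$ with its (co)product $\sigma$-algebra. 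The finality argument is then re-run verbatim, the only new obligation being that $h$ is measurable. I discharge this by showing $h^{-1}$ of each generating cylinder lies in $\sigalg_{X}$, arguing by induction on $|w|$: measurability of $c$ together with the countable-coproduct and finite-product structure of $F_{\Sigma}$ lets me write the preimage of a depth-$(k+1)$ cylinder as a countable union of measurable rectangles built from preimages of depth-$k$ cylinders, which are measurable by hypothesis; since the cylinders generate $\sigalg_{\myTree_{\Sigma}}$, measurability of $h$ follows.

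I expect the \emph{measurable} half to be the real obstacle. In $\Sets$ everything reduces to the classical fact that polynomial functors have final coalgebras carried by trees, and the work is bookkeeping. In $\Meas$ the delicate points are selecting a $\sigma$-algebra on $\myTree_{\Sigma}$ that is simultaneously standard Borel and renders $\zeta$ an isomorphism, and verifying measurability of $h$; both rely essentially on the countability of $\Sigma$ to stay within countably generated, standard Borel spaces. Where convenient I would instead invoke the existence of final coalgebras for standard Borel polynomial functors from~\cite{Cirstea10git,Schubert09tcf} and merely identify the carrier with $\myTree_{\Sigma}$ through the universal property, sidestepping the explicit $\sigma$-algebra computations.
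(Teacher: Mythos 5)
Your proof is correct, but note that the paper itself offers no proof of this lemma at all: it is stated with a \qed as a standard fact, and the only construction of the final coalgebra the paper actually leans on is the abstract one in the appendix (Sublemma~\ref{sublem:finalCoalgInMeas}, following~\cite{Schubert09tcf}), where $Z$ arises as the limit of the final sequence $1\leftarrow F1\leftarrow F^{2}1\leftarrow\cdots$ in $\Meas$, with $F$ preserving that limit. Your route is the complementary, concrete one: you exhibit $\myTree_{\Sigma}$ with the root-decomposition map $\zeta$, build the mediating homomorphism by unfolding, prove uniqueness by induction on the length of positions, and then redo everything measurably over the cylinder $\sigma$-algebra. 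All of these steps check out, including the measurability induction for $h$ (the preimage of a depth-$(k{+}1)$ cylinder is a countable union, over the countable alphabet, of sets of the form $c^{-1}\bigl(\kappa_{\tau}(X^{i}\times A\times X^{|\tau|-i-1})\bigr)$ with $A$ measurable by the induction hypothesis), and your single-position cylinders generate the same $\sigma$-algebra $\mathfrak{F}_{\Sigma}$ as the partial-tree cylinders of Def.~\ref{def:measurableStrOfTreeAndRunBriefly}. What your explicit construction buys is a self-contained, elementary argument and a concrete description of the carrier, which the paper needs anyway to define languages as subsets/measures on $\myTree_{\Sigma}$; what the final-sequence presentation buys is precisely the handle the paper exploits downstream (the projections $\pi_{k}\colon Z\to F^{k}1$ drive Lem.~\ref{lem:inverseExists} and Sublemma~\ref{lem:limitInKlGiry}), so even with your direct proof in hand one must still identify $\myTree_{\Sigma}$ with that $\omega$-limit, as your closing remark anticipates. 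The only cosmetic discrepancy is indexing: the paper numbers children $0,\dotsc,|\sigma|-1$ rather than $1,\dotsc,|\sigma|$.
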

\noindent
We collect some standard notions and notations
for such trees in 
Appendix~\ref{sec:detailsOfTreeRunAcc}.

\begin{wrapfigure}[3]{r}[0pt]{0pt}
\hspace{-.4cm}
\raisebox{.3em}[0pt][0cm]{
 \begin{xy}\small
  \xymatrix@R=.6em@C-1.7em{
   {\Kl(T)}
     \ar[r]^{\oF}
  &
   {\Kl(T)}
  \\
   {\C}
     \ar[r]^{F}
     \ar[u]^{J}
  &
     {\C}
     \ar[u]_{J}
 }
 \end{xy}
 \hspace{-.6cm}\text{\begin{minipage}{1cm}\begin{equation}\label{eq:FAndOF}\end{equation}\end{minipage}}
}
\end{wrapfigure}
It is known~\cite{HasuoJS07gts,UrabeH15cit} that for
$(\mathbb{C},T)\in\{(\Sets,\pow),(\Meas,\giry)\}$ and polynomial
$F$ on $\mathbb{C}$, there is a canonical
\emph{distributive law}~\cite{Mulry94ltf}
$\lambda\colon FT\Rightarrow TF$---a natural transformation
compatible with $T$'s monad structure.
Such $\lambda$ induces a functor $\overline{F}\colon \Kl(T)\to\Kl(T)$
that makes the diagram~(\ref{eq:FAndOF}) commute.

Using this \emph{lifting} $\overline{F}$ of $F$ from $\C$ to $\Kl(T)$,
an arrow $c:X\to TFX$ in $\mathbb{C}$---that is
how we model an automaton---can be regarded as
an \emph{$\overline{F}$-coalgebra} $c\colon X\kto \overline{F}X$ in $\Kl(T)$.

Then the 
%state-based 
dynamics of $\mathcal{A}$---ignoring its initial and accepting states---is modeled as an $\overline{F}$-coalgebra $c:X\kto \overline{F}X$ in $\Kl(\pow)$
where: $F=\{a,b\}\times(\place)$, $X=\{x_1,x_2\}$ and $c:X\to \pow FX$
is the function  $c(x_1)=c(x_2)=\{(a,x_1),(b,x_2)\}$.
% condition, as a $TF$-coalgebra
The information on initial and accepting states is redeemed
%in our modeling
 later in
 % Note that the initial state and the accepting states of $\mathcal{A}$ is not modeled yet.
 % An approach to model them is given in
 \S{}\ref{subsec:coalgModeling}.
%\end{myexample}
%\end{minipage}
%\begin{minipage}{0.2\hsize}
%\end{minipage}
\vspace{1mm}

%\noindent\begin{minipage}{0.75\hsize}
%\renewcommand\windowpagestuff{
%\vspace{-2cm}
%\smash{
\begin{wrapfigure}[5]{r}[0pt]{0pt}
\hspace{-.4cm}
\raisebox{-1.3cm}[0pt][0cm]{
\tiny
\begin{xy}
  (0,0)*+[Fo]{x_1} = "x1",
  (0,10)*+[Fo]{x_2} = "x2",
  \ar (0,-5);"x1"*+++{}
  \ar @(ur,dr)^{a,\frac{1}{2}} "x1";"x1"*++{}
  \ar @/^2mm/^{b,\frac{1}{2}} "x1";"x2"*++{}
  \ar @(dr,ur)_{b,\frac{1}{2}} "x2";"x2"*++{}
  \ar @/^2mm/^{a,\frac{1}{2}} "x2";"x1"*++{}
\end{xy}
}
\end{wrapfigure}
%\opencutright
\noindent\begin{minipage}{0.175\hsize}\begin{myexample}\label{ex:NAAsTFCoalg}\end{myexample}\end{minipage}%
Let $\mathcal{M}$ be the Markov chain on the right.
 The dynamics of $\mathcal{M}$ is modeled as an $\overline{F}$-coalgebra $c:X\kto \overline{F}X$ in $\Kl(\giry)$ 
where: 
%$F=\left(\{a,b\},\pow(\{a,b\})\right)\times(\place)$, 
$F=\{a,b\}\times(\place)$,
%is the same as in Example~\ref{ex:NAAsTFCoalg},
$X=\{x_1,x_2\}$ with the discrete measurable structure,
%$X=\left(\{x_1,x_2\},\pow(\{x_1,x_2\})\right)$ 
and $c\colon X\to \giry FX$ is the (measurable) 
function defined by
$c(x)\bigl\{(a,x_{1})\bigr\}
=c(x)\bigl\{(b,x_{2})\bigr\}
=1/2
$, and
$c(x)\bigl\{(d,x')\bigr\}=0
$ for the other $(d,x')\in \{a,b\}\times X$.
% \begin{displaymath}
% c(x)\bigl(\{(d,x')\}\bigr)=\begin{cases}
% \frac{1}{2} & (\text{if }(d,x')\in\{(a,x_1),(b,x_2)\}) \\
% 0 & (\text{otherwise}) 
% \end{cases}
% \end{displaymath}
% for each $x\in\{x_1,x_2\}$ and $d\in\{a,b\}$.
% A Markov chain as a $TF$-coalgebra 
% 
% ...
%

Later we will equip Markov chains with accepting states and obtain
\emph{(generative) probabilistic B\"{u}chi automata}.
Their probabilistic accepted languages will be our subject of study.
% \end{cutout}
%\end{myexample}
%\end{minipage}
%\begin{minipage}{0.25\hsize}
%\quad
%\begin{xy}
%(0,0)*+[Fo]{x_1} = "x1",
%(0,12)*+[Fo]{x_2} = "x2",
%\ar (0,-5);"x1"*+++{}
%\ar @(ur,dr)^{a,\frac{1}{2}} "x1";"x1"*++{}
%\ar @/^3mm/^{b,\frac{1}{2}} "x1";"x2"*++{}
%\ar @(dr,ur)_{b,\frac{1}{2}} "x2";"x2"*++{}
%\ar @/^3mm/^{a,\frac{1}{2}} "x2";"x1"*++{}
%\end{xy}
%\end{minipage}

\begin{myremark}\label{rem:deadendProb}
 Due to the use of the \emph{sub-}Giry monad is that, in
 $f\colon X\kto Y$ in $\Kl(\giry)$, the probability
 $f(x)(Y)$ can be  smaller than $1$. The missing  $1-f(x)(Y)$ 
 is understood as that for \emph{divergence}.  In the
 nondeterministic case $f\colon X\kto Y$ in $\Kl(\pow)$ \emph{diverges} at $x$
 if $f(x)=\emptyset$. 

 This is in contrast with a system coming to halt generating a $0$-ary 
 symbol (such as $\checkmark$ in~(\ref{eq:infTraaa}) later); this is deemed as \emph{successful
 termination}. 
\end{myremark}

\subsection{Coalgebraic Theory of Trace}
\label{subsec:coalgTrSim}
The above ``Kleisli'' coalgebraic modeling  has produced 
some general results on: linear-time process semantics (called 
\emph{trace semantics}); and \emph{simulations} as 
witnesses of trace inclusion, generalizing the  theory
in~\cite{LynchV95fab}.
Here we review the former; it underpins our
developments later. A rough summary is in Table~\ref{table:traceAndSim}:
%They are very roughly summarized as in Table~\ref{table:traceAndSim};
typically the results apply to $T\in \{\pow,\dist,\giry\}$---where
$\dist$ is the \emph{subdistribution monad} on $\Sets$, a
discrete variant of $\giry$---and
polynomial $F$.
In what follows we present these previous results in precise terms,
sometimes strengthening the assumptions for the sake of presentation.
% This will set an appropriate technical
% context for the current work;
The current paper's goal is to incorporate the
B\"{u}chi acceptance condition in
% (the latter row of)
(the right column of)
Table~\ref{table:traceAndSim}. 

\begin{table}[tbp]\setlength{\tabcolsep}{0pt}
  \begin{tabular}{l||c|c}
    Semantics
    &
    Finite trace
    &
    Infinitary trace
  \\\hline
    \multirow{2}{*}{%
      \vbox{\hbox{Coalgebraic\hspace{.5em}}\hbox{modeling}}%
    }%
    &
    \hspace{1em}
    \begin{xy}
      \scriptsize\def\labelstyle{\textstyle}
      \xymatrix@R=1.4em@C=1.2em{
        {\overline{F}X} \ar@{}[drr]|{=} \kar@{-->}[rr]^{\overline{F}(\tr(c))}
        & &
        {\overline{F}A}
        \\
        {X} \kar[u]^{c} \kar@{-->}_{\tr(c)}[rr]
        & &
        {A} \kar[u]_{J\alpha^{-1}}_(.8){\hspace{1em}\text{final}}^{\cong}
      }
    \end{xy}
    \begin{minipage}{2em}
      \begin{equation}\label{eq:finiteTrace}\end{equation}
    \end{minipage}
    &
    \hspace{1em}
    \begin{xy}
      \scriptsize\def\labelstyle{\textstyle}
      \xymatrix@R=1.4em@C=1.2em{
        {\overline{F}X} \ar@{}[drr]|{\color{red}=_{\nu}} \kar@{->}[rr]^{\overline{F}(\trinf(c))}
        & &
        {\overline{F}Z}
        \\
        {X} \kar[u]^{c} \kar@{->}_{\trinf(c)}[rr]
        & &
        {Z} \kar[u]_{J\zeta}_(.8){\hspace{1em}\substack{\text{weakly}\\\text{final}}}^{\cong}
      }
    \end{xy}
    \begin{minipage}{2em}
      \begin{equation}\label{eq:infiniteTrace}\end{equation}
    \end{minipage}
  \\
    &
    \hspace{.5em}Finality in $\Kl(T)$ (Thm.~\ref{thm:finalityinKlT})\hspace{.5em}
    &
    \hspace{.5em}(Weak finality + maximality) in $\Kl(T)$ (Thm.~\ref{thm:infinitaryTrace})
  \end{tabular}
 \caption{Overview of existing results on coalgebraic trace semantics.
 %The diagrams are in the Kleisli category $\Kl(T)$.
 }
\label{table:traceAndSim}
\vspace{-1.7em}
\end{table}

Firstly, \emph{finite trace semantics}---linear-time behaviors that
eventually terminate, such as the accepted languages
of \emph{finite} words for NFAs---is captured by finality in $\Kl(T)$.

%\noindent\begin{minipage}{0.7\hsize}
\begin{mytheorem}[{\cite{HasuoJS07gts}}]\label{thm:finalityinKlT}
Let $T\in \{\pow,\dist\}$ and $F$ be a polynomial functor on $\Sets$.
An initial $F$-algebra $\alpha\colon FA\iso A$ in $\Sets$
yields a final $\oF$-coalgebra in $\Kl(T)$, as in~(\ref{eq:finiteTrace})
 in Table~\ref{table:traceAndSim}. 
\qed
\end{mytheorem}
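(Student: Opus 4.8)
The plan is to prove this by the classical route of Hasuo--Jacobs--Sokolova: exhibit $(A,J\alpha^{-1})$ as the colimit of the initial $\oF$-chain in $\Kl(T)$ and identify that colimit with the limit of the final $\oF$-chain, so that finality drops out of the limit--colimit coincidence in an order-enriched setting. Concretely, I would first record that for $T\in\{\pow,\dist\}$ the category $\Kl(T)$ is enriched over pointed $\omega$-cpos: each homset $\Kl(T)(X,Y)=\Sets(X,TY)$ carries a pointwise order (inclusion of subsets for $\pow$, the pointwise order on subdistributions for $\dist$) with a least element (the constantly-$\emptyset$, resp.\ constantly-zero, Kleisli arrow) and with directed suprema computed pointwise, and Kleisli composition $\kco$ is continuous in both arguments. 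I would then check that the lifting $\oF$ is locally continuous, i.e.\ monotone and sup-preserving on each homset; because $F$ is polynomial this reduces to the explicit shape of the canonical distributive law $\lambda\colon FT\Rightarrow TF$ on identity, constants, products and coproducts, where continuity is routine.

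The second ingredient is that the initial object $0$ of $\Sets$ becomes a \emph{zero} object in $\Kl(T)$. It stays initial because $J$ is a left adjoint and hence preserves the initial object; it is also final because $T0\cong 1$ (indeed $\pow\emptyset=\{\emptyset\}$ and $\dist\emptyset=\{0\}$ are singletons), so $\Kl(T)(X,0)=\Sets(X,T0)$ is a singleton for every $X$. With a zero object available, the initial $\oF$-chain $0\kto\oF 0\kto\oF^2 0\kto\cdots$ and the final $\oF$-chain $\cdots\kto\oF^2 0\kto\oF 0\kto 0$ become linked by embedding--projection pairs in the enriched category.

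For the core of the argument I would invoke the limit--colimit coincidence: in such an $\omega$-cpo-enriched category with a locally continuous endofunctor, the colimit of the initial chain of embeddings is simultaneously the limit of the final chain of the associated projections, and the resulting object carries both an initial-algebra and a final-coalgebra structure for $\oF$. Since $J$ preserves the colimit $A=\operatorname{colim}_n F^n 0$ that computes the initial $F$-algebra in $\Sets$ (again because $J$ is a left adjoint), and since $\oF J=JF$ gives $\oF^n 0=JF^n 0$, the colimit of the initial $\oF$-chain in $\Kl(T)$ is exactly $JA$ with coalgebra structure $J\alpha^{-1}$; the coincidence then makes $(A,J\alpha^{-1})$ final, so that each $c\colon X\kto\oF X$ admits a unique $\tr(c)$ satisfying~(\ref{eq:finiteTrace}).

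I expect the main obstacle to be this limit--colimit coincidence step: showing that the initial-chain colimit genuinely serves as the final-chain limit---that the canonical cocone is an op-cone whose mediating maps are unique---is where the enrichment and the local continuity of $\oF$ must be used in full, and it is precisely here that \emph{uniqueness} of the trace map (as opposed to mere existence of a homomorphism) is secured. The enrichment verifications of the first paragraph and the identification $JA=\operatorname{colim}_n\oF^n 0$ are comparatively routine, but they are prerequisites that must all be in place before the coincidence argument can be run.
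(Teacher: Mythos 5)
Your proposal is correct and follows essentially the same route as the source the paper cites for this theorem (\cite{HasuoJS07gts}, which the paper invokes without reproving): $\omega$-cpo enrichment of $\Kl(T)$ with a locally continuous lifting $\oF$, the observation that $T0\cong 1$ makes $0$ a zero object, preservation of the initial-chain colimit by the left adjoint $J$, and the Smyth--Plotkin limit--colimit coincidence to turn $(A,J\alpha^{-1})$ into a final $\oF$-coalgebra. You also correctly locate the crux in the coincidence step, which is exactly where the cited proof secures uniqueness of $\tr(c)$.
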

% \end{minipage}
% \begin{minipage}{0.3\hsize}
%  \begin{equation}\label{eq:finiteTrace}
%                \def\labelstyle{\textstyle}
%   \begin{xy}
%   \scriptsize
%   \xymatrix@R=1.3em@C=1.8em{
%   {\overline{F} X} \ar@{}[drr]|{=} \kar@{-->}[rr]^{\overline{F}(\tr(c))} & & {\overline{F}A}  \\
%   {X} \kar[u]_{c}  \kar@{-->}_{\tr(c)}[rr]  & & {A} \kar[u]_{J\alpha^{-1}}_(.8){\hspace{1em}\text{final}}^{\cong}
%   }
% \end{xy}
% \end{equation}
% \end{minipage}
%\vspace{1mm}
 The carrier $A$ of an initial $F$-algebra in $\Sets$ is given by finite
words/trees (over the alphabet that corresponds to $F$). The
significance of Thm.~\ref{thm:finalityinKlT} is that: for many examples,
the
unique homomorphism
$\tr(c)$ induced by finality~(\ref{eq:finiteTrace}) captures the finite
trace semantics of the system $c$. Here the word ``finite'' means that
we collect only behaviors that eventually terminate. 

What if we are also interested in \emph{nonterminating} behaviors, like
the infinite word $b^{\omega}=bbb\ldots$ accepted by the automaton in
Example~\ref{ex:NAAsTFCoalg}? 
There is a categorical characterization of such \emph{infinitary trace
semantics} too, although proper finality is now lost. 

\begin{mytheorem}[{\cite{Jacobs04tsf,Cirstea10git,UrabeH15cit}}]\label{thm:infinitaryTrace}
%\end{minipage}%
% Assume that $T=\pow$ and $F$ is a polynomial functor on $\Sets$, or 
% $T=\giry$ and $F$ 
   Let $(\mathbb{C},T)\in \{(\Sets,\pow),(\Meas,\giry)\}$ and $F$ be a polynomial functor on $\mathbb{C}$.  
%    where 
%   $\mathbb{C}$ is $\Sets$ if $T=\pow$ and $\Meas$ if $T=\giry$.
%   the base category of $T$ ($\Sets$ or $\Meas$).
   %$\Sets$. 
 A final $F$-coalgebra $\zeta\colon Z\iso FZ$ in $\mathbb{C}$
 gives rise to a weakly final $\oF$-coalgebra in $\Kl(T)$,
 as in~(\ref{eq:infiniteTrace}) in
 Table~\ref{table:traceAndSim}. Moreover, the coalgebra $J\zeta$
	 additionally admits the \emph{greatest} homomorphism $\trinf(c)$ with respect to 
 the pointwise order $\sqsubseteq$ in the homsets of $\Kl(T)$
  (given by inclusion for $T=\pow$, and by 
  pointwise $\leq$ on subprobability measures
%  $\le$ on real numbers 
  for
 $T=\giry$). That is: for each homomorphism $f$ from $c$ to $J\zeta$ we
 have
 $f\sqsubseteq \trinf(c)$.
 \qed
 \end{mytheorem}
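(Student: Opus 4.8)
The plan is to recast ``being a homomorphism'' as ``being a fixed point'' and then to obtain $\trinf(c)$ as a greatest fixed point. Since $\zeta$ is an isomorphism in $\mathbb{C}$, the Kleisli inclusion $J\zeta$ is an isomorphism in $\Kl(T)$ with inverse $J(\zeta^{-1})$. Hence a Kleisli arrow $f\colon X\kto Z$ is a homomorphism from $c$ to $J\zeta$ (i.e.\ makes the square in~(\ref{eq:infiniteTrace}) commute, $\oF f\kco c = J\zeta\kco f$) exactly when it is a fixed point of the endofunction
\[
  \Phi(f)\;=\;J(\zeta^{-1})\kco \oF f\kco c
\]
on the homset $\Kl(T)(X,Z)$. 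First I would record that the homsets carry the pointwise order $\sqsubseteq$, that Kleisli composition $\kco$ and the lifting $\oF$ are monotone for it, and hence that $\Phi$ is monotone. With this in place both claims reduce to a single assertion: that $\Phi$ admits a greatest fixed point, which I then name $\trinf(c)$. Its mere existence supplies the homomorphism required for weak finality, and ``greatest fixed point'' is precisely the maximality statement $f\sqsubseteq\trinf(c)$ for every homomorphism $f$. For $(\mathbb{C},T)=(\Sets,\pow)$ this is immediate, since $\Kl(\pow)(X,Z)=(\pow Z)^{X}$ is a complete lattice and Knaster--Tarski yields $\trinf(c)=\gfp\Phi$.

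The probabilistic case $(\mathbb{C},T)=(\Meas,\giry)$ is where the substance lies, because $\Kl(\giry)(X,Z)$ ordered pointwise is only an $\omega$-cpo, not a complete lattice, so Knaster--Tarski is unavailable. Here I would use the \emph{final sequence}: as $F$ is polynomial it preserves the limit of the terminal $\omega^{\op}$-chain $1\leftarrow F1\leftarrow F^{2}1\leftarrow\cdots$, so $Z\iso\lim_{n}F^{n}1$ with limit projections $\pi_{n}\colon Z\to F^{n}1$. Iterating the one-step dynamics gives finite approximants $f_{n}\colon X\kto F^{n}1$, with $f_{0}$ the terminal arrow and $f_{n+1}=\oF f_{n}\kco c$ (viewing $F^{n+1}1=\oF(F^{n}1)$); after checking compatibility with the connecting maps these form a cone, i.e.\ a projective system of subprobability measures indexed by $n$. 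The crux is to glue this system into a single subprobability measure on $Z$, realising the cone $(f_{n})_{n}$ as one arrow $X\kto Z=\lim_{n}F^{n}1$ in $\Kl(\giry)$. This is exactly a projective-limit (Kolmogorov/Bochner-type) extension of measures, and it is here that the standing hypothesis that the constants of $F$ are \emph{standard Borel} (Polish) spaces is indispensable, since projective limits of measures can fail to exist without such regularity; I would take this limit measure to be $\trinf(c)$.

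Maximality in the probabilistic case then needs its own argument rather than coming for free from a lattice. Given any homomorphism $f\colon X\kto Z$, postcomposing with the projections gives a cone $(J\pi_{n}\kco f)_{n}$ over the final sequence, and I would show by induction on $n$ that $J\pi_{n}\kco f\sqsubseteq f_{n}$, using that $f_{n}$ records the greatest $n$-step behaviour. Since a measure on the projective limit is determined by, and monotone in, its finite marginals, termwise domination at every level lifts to $f\sqsubseteq\trinf(c)$. I expect the genuine obstacle to be this measure-theoretic glueing in the $\giry$ case---verifying coherence of the projective system, measurability of the resulting Kleisli arrow, and applicability of the extension theorem under the standard-Borel hypothesis---whereas the fixed-point reformulation, the monotonicity bookkeeping, and the nondeterministic case are routine.
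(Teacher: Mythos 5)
Your fixed-point reformulation of the homomorphism square, the $(\Sets,\pow)$ case via Knaster--Tarski, and your maximality induction $J\pi_{n}\kco f\sqsubseteq f_{n}$ are all sound. The genuine gap is the \emph{existence} step for $(\Meas,\giry)$: the family $f_{0}=J!_{X}$, $f_{n+1}=\oF f_{n}\kco c$ does \emph{not} form a cone over the final sequence, so there is no projective system to which a Kolmogorov/Bochner-type extension could apply. Concretely, already at the bottom level, using $\oF J=JF$ one computes $J!_{F1}\kco f_{1}=J!_{F1}\kco JF!_{X}\kco c=J!_{FX}\kco c$, which sends $x$ to the measure on $1$ of total mass $c(x)(FX)$, whereas $f_{0}(x)$ has mass $1$. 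These agree only when $c$ loses no mass; with the \emph{sub-}Giry monad, mass leakage is exactly divergence (Rem.~\ref{rem:deadendProb}), and in general one only gets the lax inequalities $JF^{n}!_{F1}\kco f_{n+1}\sqsubseteq f_{n}$. Projective-limit extension theorems require the finite marginals to be \emph{equal}, not merely dominated, so the gluing step---which you correctly identify as the crux---fails as stated.

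The repair, which is the route of the works this theorem is cited from and of the machinery this paper builds in its appendix, is to take a fixed point \emph{one level down} before passing to the limit. On $\Kl(\giry)(X,1)$---which is also not a complete lattice (Example~\ref{example:KlGiryX1IsNotACompleteLattice}), so Knaster--Tarski is unavailable there as well---the map $\Psi(h)=J!_{F1}\kco\oF h\kco c$ is $\omega^{\op}$-continuous, and the descending Kleene chain $\top\sqsupseteq\Psi(\top)\sqsupseteq\Psi^{2}(\top)\sqsupseteq\cdots$ has a measurable pointwise infimum, giving $h^{\nu}=\gfp\Psi$. For an actual fixed point $h$ of $\Psi$, the equation $h=J!_{F1}\kco\oF h\kco c$ is precisely the coherence condition you were missing, so the induced family $\gamma_{0}=h$, $\gamma_{n+1}=\oF\gamma_{n}\kco c$ \emph{is} a genuine cone; the 2-limit property of $Z$ in $\Kl(\giry)$ (Sublemma~\ref{lem:limitInKlGiry}, which is where the standard-Borel hypothesis enters via~\cite{Schubert09tcf}) yields a mediating arrow $X\kto Z$, and this assignment is an order isomorphism between fixed points of $\Psi$ and homomorphisms into $J\zeta$---this is exactly Lem.~\ref{lem:inverseExists} specialized to $X_{A}=X$, $X_{B}=\emptyset$. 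One then defines $\trinf(c)$ as the image of $h^{\nu}$ under this isomorphism, and greatestness transfers along it; your level-wise induction survives as (half of) the proof that the isomorphism is monotone, but it is the existence of the limit measure, not maximality, where your argument breaks.
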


\begin{wrapfigure}[3]{r}[0pt]{0pt}
  \hspace{-.3cm}
  \raisebox{-.9cm}[0pt][0cm]{
  \begin{xy}
    (0,0)*{\circ} = "x1",
    (10,0)*+{\checkmark} = "x2",
    (5,8)*+{\text{\begin{minipage}{1cm}\begin{equation}\label{eq:infTraaa}\end{equation}\end{minipage}}}="",
    \ar (-5,0);"x1"
    \ar @(ur,ul)_{a} "x1";"x1"
    \ar "x1";"x2"
  \end{xy}
}
\end{wrapfigure}
% In the setting of Thm.~\ref{thm:infinitaryTrace} the largest
% homomorphism from $c$ to $J\zeta$ is denoted by $\trinf(c)$;
In many
examples the greatest homomorphism $\trinf(c)$ captures the infinitary trace semantics of the system $c$. 
(Here by \emph{infinitary}  we mean \emph{both finite
and infinite} behaviors.) For example, for the system 
(\ref{eq:infTraaa})
%illustrated
% on  the right
%\begin{displaymath}
% one state, with a loop, and checkmark, without double circles
%\end{displaymath}
where $\checkmark$ denotes successful termination, its finite trace
semantics is $\{\varepsilon, a,aa,\dotsc\}$ whereas its infinitary trace
semantics is $\{\varepsilon, a,aa,\dotsc\}\cup\{a^{\omega}\}$. The
latter is captured by the diagram~(\ref{eq:infiniteTrace}), with
$T=\pow$ and $F=\{\checkmark\}+\{a\}\times (\place)$.

% The ultimate goal of the current work is to incorporate \emph{B\"{u}chi
% acceptance conditions} in the picture of Table~\ref{table:traceAndSim}.
% The goal is not yet fully achieved: for infinitary trace
% semantics\footnote{Note that B\"{u}chi conditions do not make sense in
% finite trace semantics, much like they do not for finite words.} we do
% have a clean characterization that generalizes the above
% Thm.~\ref{thm:infinitaryTrace}, see~\S{}\ref{sec:traceSem}; for
% simulations however we have obtained only limited results. We believe
% that each of these partial results presented in~\S{}\ref{sec:fwdFairSimForBoth}---although not totally satisfactory
% from the coalgebraic viewpoint---is worth spelling out. They provide
% simulation-based proof techniques that are previously unknown; they do
% so for a variety of systems such as nondeterministic B\"{u}chi tree
% automata and finite-state probabilistic B\"{u}chi word automata. 
% These partial yet concrete results do exploit coalgebras' power of
% abstraction, as described in~\S{}\ref{sec:coalgDiscuss}. 

\subsection{Equational Systems for Alternating Fixed Points}
\label{subsec:eqsys} 
\emph{Nested, alternating} greatest and least fixed
points---as in a $\mu$-calculus formula $\nu u_{2}. \mu u_{1}.\, (p\land
u_{2})\lor \Box u_{1}$---are omnipresent in specification and
verification. For their relevance to the B\"uchi/parity acceptance condition
one can recall 
the well-known translation of LTL formulas to B\"{u}chi automata and
vice versa (see e.g.~\cite{Vardi95aaa}). 
To express such fixed points we follow~\cite{CleavelandKS92fmc,ArnoldN01rom} and
use
\emph{equational systems}---we prefer them to the textual $\mu$-calculus-like
presentations. 
\begin{mydefinition}[equational system]
\label{def:eqSys}
Let $L_{1},\dotsc,L_{n}$ be posets. An \emph{equational system} $E$
over $L_{1},\dotsc,L_{n}$ is an expression 
\begin{equation}\label{eq:sysOfEq}
 \begin{array}{c}
  u_{1}=_{\eta_{1}}f_{1}(u_{1},\dotsc, u_{n})\enspace,\quad
  \dotsc,\quad
  u_{n}=_{\eta_{n}} f_{n}(u_{1},\dotsc, u_{n})
 \end{array}
\end{equation}
where: $u_{1},\dotsc,u_{n}$ are \emph{variables},
 $\eta_{1},\dotsc,\eta_{n}\in\{\mu,\nu\}$, and 
$f_{i}\colon L_{1}\times\cdots\times L_{n}\to L_{i}$ is a monotone
 function.
% for each $i\in [1,m]$.
 A variable $u_{j}$ is  a \emph{$\mu$-variable} if
  $\eta_{j}=\mu$; it is a \emph{$\nu$-variable} if $\eta_{j}=\nu$.
 % We say $u_{i}$ has a \emph{bigger priority} than $u_{j}$ if $j<i$.

The \emph{solution} of the equational system $E$ is defined as follows,
under the assumption that $L_{i}$'s have enough supremums and infimums.
It proceeds as:
1) we solve the first equation to obtain an interim solution
$u_{1}=l^{(1)}_{1}(u_{2},\dotsc,u_{n})$;
2) it is used in the second equation to eliminate $u_{1}$ and yield a new equation
$u_{2}=_{\eta_{2}}f^{\ddagger}_{2}(u_{2},\dotsc,u_{n})$;
3) solving it again gives an interim solution
$u_{2}=l^{(2)}_{2}(u_{3},\dotsc,u_{n})$;
4) continuing this way from left to right eventually eliminates all variables and
leads to a closed solution $u_{n}=l^{(n)}_{n}\in L_{n}$; and
5) by propagating these closed solutions back from right to left,
we obtain closed solutions for all of $u_{1},\dotsc,u_{n}$.
A precise definition is found in Appendix~\ref{sec:appendixEqSys}.
\end{mydefinition}

\noindent It is important that the order of equations \emph{matters}: 
for $(u=_{\mu}v, v=_{\nu} u)$ the solution is $u=v=\top$ while 
for $(v=_{\nu} u, u=_{\mu}v)$ the solution is $u=v=\bot$.

Whether a solution is well-defined depends on how ``complete''
the posets $L_{1},\dotsc,L_{n}$ are.
It suffices if they are \emph{complete lattices},
in which case every monotone function $L_{i}\to L_{i}$ has
greatest/least fixed points (the \emph{Knaster-Tarski} theorem).
This is used in the nondeterministic setting:
note that $\pow Y$, hence the homset $\Kl(\pow)(X,Y)$, are complete lattices.
\begin{mylemma}\label{lem:eqSysSolvableCompLat}
The system $E$~(\ref{eq:sysOfEq}) has a solution if each $L_{i}$ is a complete lattice.
\qed
\end{mylemma}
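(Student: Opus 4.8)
The plan is to formalize the left-to-right elimination procedure of Definition~\ref{def:eqSys} and verify, by induction on the number of already-eliminated variables, that each step is well-defined. The only ingredient needed beyond monotonicity of the $f_i$ is the Knaster--Tarski theorem: every monotone endofunction on a complete lattice has both a least fixed point $\lfp$ and a greatest fixed point $\gfp$. Since each $L_i$ is a complete lattice, so is any finite product $L_{k}\times\cdots\times L_{n}$, and this is precisely what guarantees ``enough'' supremums and infimums---keeping the whole procedure within the scope of Knaster--Tarski at every stage.

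The crux is a parametrized-fixed-point lemma ensuring that the interim solutions $l^{(k)}_{k}$ are themselves monotone, so that the procedure can continue: if $g\colon L\times M\to L$ is monotone (with $L,M$ complete lattices), then $m\mapsto \lfp\bigl(g(\place,m)\bigr)$ is a monotone map $M\to L$, and likewise for $\gfp$. I would prove the $\lfp$ case via the prefix-point characterization of least fixed points: for $m\sqsubseteq m'$, writing $x'=\lfp\bigl(g(\place,m')\bigr)$ one has $g(x',m)\sqsubseteq g(x',m')=x'$, so $x'$ is a prefix point of $g(\place,m)$; since $\lfp\bigl(g(\place,m)\bigr)$ is the least prefix point, $\lfp\bigl(g(\place,m)\bigr)\sqsubseteq x'$. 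The $\gfp$ case is dual, using postfix points.

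With this lemma in hand the induction is routine. At the $k$-th step the accumulated function $f^{\ddagger}_{k}\colon L_{k}\times\cdots\times L_{n}\to L_{k}$ is monotone (by the induction hypothesis, being a composite of monotone maps), so its least/greatest fixed point in the first argument exists by Knaster--Tarski and, by the lemma, yields a monotone interim solution $l^{(k)}_{k}\colon L_{k+1}\times\cdots\times L_{n}\to L_{k}$; substituting it into $f_{k+1}$ produces a monotone $f^{\ddagger}_{k+1}$, which closes the induction. After $n$ steps all variables are eliminated and $l^{(n)}_{n}\in L_{n}$ is a closed value, and back-substitution from right to left produces closed values for every $u_{i}$. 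I expect the parametrized monotonicity lemma to be the only real obstacle---everything else is bookkeeping---so I would isolate and prove it first.
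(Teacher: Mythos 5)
Your proposal is correct and follows exactly the route the paper intends: the paper treats this lemma as an immediate consequence of the Knaster--Tarski theorem together with the (asserted but unproven) monotonicity of the interim solutions $f^{\ddagger}_{i}$ and $l^{(i)}_{j}$ in Def.~\ref{def:solOfEqSys}. Your parametrized-monotonicity lemma, proved via the pre-fixed-point characterization of $\lfp$ (and dually for $\gfp$), is precisely the missing detail behind the paper's ``it is easy to see,'' so nothing further is needed.
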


\noindent
This does not work in the probabilistic case, since
%For the probabilistic case we need to argue differently: 
the homsets $\Kl(\giry)(X,Y)=\Meas(X,\giry Y)$ with the pointwise
order---on which we consider equational
systems---are not complete lattices. For example $\giry Y$ lacks the
greatest element in general; even if $Y=1$ (when $\giry
1\cong [0,1]$), the homset $\Kl(\giry)(X,1)$ can
fail to be a complete lattice. See Example~\ref{example:KlGiryX1IsNotACompleteLattice}.
Our strategy is: 1) to apply the 
following \emph{Kleene}-like result to the homset $\Kl(\giry)(X,1)$; and 2) to ``extend''
fixed points in $\Kl(\giry)(X,1)$ along a final $F$-sequence. See~\S{}\ref{subsec:fixedPtProbParitySys} later.

% It does not mean, however, $L_{i}$ which is not a complete lattice is ruled out.
% In fact, if $f_{i}$ is $\omega$-continuous and $\omega^{\op}$-continuous,
% $L_{i}$ is required to be only a pointed $\omega$/$\omega^{\op}$-cpo,
% as stated in the following lemma.
% Its proof is in Appendix~\ref{sec:omittedProofs}.
\begin{mylemma}\label{lem:eqSysSolvableOmegaCont}
The equational system $E$~(\ref{eq:sysOfEq})  has a solution
if: each $L_{i}$ is both a pointed $\omega$-cpo and a pointed $\omega^{\op}$-cpo;
and each $f_{i}$ is both $\omega$-continuous and $\omega^{\op}$-continuous.
\qed
\end{mylemma}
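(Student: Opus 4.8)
The plan is to argue by induction on the number $n$ of equations, mimicking the solution procedure of Definition~\ref{def:eqSys} (Gaussian elimination from left to right). For the base case $n=1$ we must solve a single equation $u_{1}=_{\eta_{1}}f_{1}(u_{1})$ on $L_{1}$. If $\eta_{1}=\mu$, the pointed $\omega$-cpo structure together with $\omega$-continuity of $f_{1}$ lets us invoke Kleene's theorem: the ascending chain $\bot\sqsubseteq f_{1}(\bot)\sqsubseteq f_{1}^{2}(\bot)\sqsubseteq\cdots$ has a supremum $\bigsqcup_{k}f_{1}^{k}(\bot)$, which is the least fixed point $\lfp f_{1}$. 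Dually, if $\eta_{1}=\nu$, the pointed $\omega^{\op}$-cpo structure and $\omega^{\op}$-continuity give $\gfp f_{1}=\bigsqcap_{k}f_{1}^{k}(\top)$ as the descending Kleene limit. Either way the (closed) solution exists.

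For the inductive step I would solve the first equation treating $u_{2},\dots,u_{n}$ as parameters, obtaining an interim solution $l^{(1)}_{1}\colon L_{2}\times\cdots\times L_{n}\to L_{1}$ (the parametrised least or greatest fixed point in $u_{1}$), substitute it to form the reduced functions $f^{\ddagger}_{i}(u_{2},\dots,u_{n})=f_{i}\bigl(l^{(1)}_{1}(u_{2},\dots,u_{n}),u_{2},\dots,u_{n}\bigr)$, and apply the induction hypothesis to the reduced system over $L_{2},\dots,L_{n}$. For this to be legitimate the reduced data must again meet the hypotheses: the posets $L_{2},\dots,L_{n}$ are unchanged, so the crux is a key sub-lemma asserting that the parametrised fixed point $l^{(1)}_{1}$ preserves both continuity properties, i.e.\ is itself both $\omega$- and $\omega^{\op}$-continuous; then each $f^{\ddagger}_{i}$, being a composite of such a map with the bicontinuous $f_{i}$, inherits both continuities. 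The \emph{matching} half of this sub-lemma is routine: when $l^{(1)}_{1}$ is a parametrised $\lfp$ one uses the ascending Kleene formula $l^{(1)}_{1}(\vec{p})=\bigsqcup_{k}f_{1,\vec{p}}^{\,k}(\bot)$ and pushes a parameter-supremum through the iterates by joint $\omega$-continuity, after which two suprema may be freely interchanged; dually a parametrised $\gfp$ is $\omega^{\op}$-continuous.

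The main obstacle is the \emph{cross} half of the sub-lemma: showing that a parametrised least fixed point is also $\omega^{\op}$-continuous (and, dually, that a parametrised greatest fixed point is $\omega$-continuous). Concretely one is asked to justify an interchange of the shape $\bigsqcap_{j}\bigsqcup_{k}f_{1,p_{j}}^{\,k}(\bot)=\bigsqcup_{k}\bigsqcap_{j}f_{1,p_{j}}^{\,k}(\bot)$ along a descending parameter chain $p_{0}\sqsupseteq p_{1}\sqsupseteq\cdots$, that is, an exchange of a supremum with an infimum. Such an exchange is \emph{not} valid for arbitrary monotone double families (a decreasing family of genuine least fixed points can jump down in the limit), so this step cannot be treated as a formality: it is exactly here that the full strength of the hypotheses---both the $\omega$- and the $\omega^{\op}$-cpo structure of every $L_{i}$, and both continuities of every $f_{i}$---must be brought to bear, exploiting the special structure of the Kleene iterates (monotone in $k$ and antitone in $j$) rather than a generic lattice identity. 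I expect essentially all of the work of the lemma to concentrate in this interchange; a plausible fallback, should the abstract argument resist, is to pass to order-completions of the $L_{i}$, solve the system there by the complete-lattice result (Lemma~\ref{lem:eqSysSolvableCompLat}), and then argue via the $\omega$-/$\omega^{\op}$-Kleene limits that the closed solutions land back inside each $L_{i}$. Once the interchange is secured, the reduced system satisfies the hypotheses and the induction closes.
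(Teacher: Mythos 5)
Your overall architecture is exactly the paper's: induction along the equations from left to right, Kleene's theorem (ascending chain from $\bot$ for a $\mu$-equation, descending chain from $\top$ for a $\nu$-equation) for each single equation, and a sub-lemma asserting that the reduced functions $f^{\ddagger}_{i}$ and the interim solutions $l^{(i)}_{j}$ are again both $\omega$- and $\omega^{\op}$-continuous so that the induction can continue. The ``matching half'' (a parametrised $\lfp$ is $\omega$-continuous) is handled in the paper by the same interchange of two suprema over the Kleene iterates that you describe.

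The gap is that you never prove the ``cross half,'' and you yourself flag it as the place where ``essentially all of the work'' concentrates; identifying the obstacle is not overcoming it, and your fallback via order-completions is undeveloped (showing that the completed solution lands back inside each $L_{i}$ is the original difficulty in disguise). What the paper does at this point is \emph{not} a $\bigsqcup$/$\bigsqcap$ interchange on the iterates: it argues that the candidate $\bigsqcap_{k}l^{(i)}_{i}(l_{i+1,k},\dotsc,l_{n,k})$ directly satisfies the universal property defining $l^{(i)}_{i}(\bigsqcap_{k}l_{i+1,k},\dotsc,\bigsqcap_{k}l_{n,k})$, namely that it is the \emph{least pre-fixed point} of $f^{\ddagger}_{i}(\place,\bigsqcap_{k}l_{i+1,k},\dotsc,\bigsqcap_{k}l_{n,k})$. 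Concretely, it is a fixed point by the $\omega^{\op}$-continuity of $f^{\ddagger}_{i}$ (already available at this stage of the induction), and it is compared against an arbitrary pre-fixed point $\tilde{l}_{i}$ by rewriting $\tilde{l}_{i}\sqsupseteq f^{\ddagger}_{i}(\tilde{l}_{i},\bigsqcap_{k}l_{i+1,k},\dotsc)=\bigsqcap_{k}f^{\ddagger}_{i}(\tilde{l}_{i},l_{i+1,k},\dotsc,l_{n,k})$ and then invoking the minimality of each $l^{(i)}_{i}(l_{i+1,k},\dotsc,l_{n,k})$. So the missing ingredient in your proposal is precisely this argument, which trades on the extremal (least/greatest) character of the interim solutions rather than on any generic lattice identity. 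Your instinct that the naive sup--inf exchange fails for arbitrary monotone double families is sound---and even the paper's comparison step is delicate, since it requires relating $\tilde{l}_{i}$ to the fixed points at the approximating parameters---but as it stands your proof is incomplete without supplying some version of this step.
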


\auxproof{
\noindent
Here $L_{i}$ is a \emph{pointed $\omega$-cpo} if
% $L_{i}$ 
it has the least element $\bot$ and
the supremum $\bigsqcup_{j<\omega}l_{j}$ of any $\omega$-chain $l_{0} \sqsubseteq l_{1} \sqsubseteq \cdots$; and
$f_{i}\colon L_{1}\times\cdots L_{n}\to L_{i}$ is \emph{$\omega$-continuous} if
\begin{math}
  \textstyle
  \bigsqcup_{j<\omega}f(\seq{l_{\i,j}}{n})
  =
  f(\,\bigsqcup_{j<\omega}l_{1,j},\,\dotsc,\bigsqcup_{j<\omega}l_{n,j}\,)
  \,.
\end{math}
\emph{Pointed $\omega^{\op}$-cpo} and \emph{$\omega^{\op}$-continuity}
are defined 
analogously, using the opposite order.
}

In Appendix~\ref{sec:appendixEqSys} we have  additional lemmas on 
``homomorphisms'' of 
equational systems and preservation of solutions. 
They play important roles in the proofs of the later results.

\section{Coalgebraic Modeling of Parity Automata and Its Trace Semantics}
\label{sec:coalgebraicModeling}
Here we present our modeling of B\"uchi/parity automata.
We shall do so axiomatically with parameters $\C$, $T$ and $F$---much like
in~\S{}\ref{subsec:coalgebra}--\ref{subsec:coalgTrSim}.
Our examples cover: both nondeterministic and probabilistic branching; and
automata for trees (hence words as a special case).

\begin{myassumptions}\label{asm:assumptionsOnCTF}
In what follows a monad $T$ and an endofunctor $F$, both on $\mathbb{C}$, satisfy:
\begin{itemize}
  \item The base category $\mathbb{C}$ has a final object $1$ and finite coproducts.
  \item The functor $F$ has a final coalgebra $\zeta\colon Z\to FZ$ in $\mathbb{C}$.
  \item There is a \emph{distributive law} $\lambda\colon FT\Rightarrow TF$~\cite{Mulry94ltf},
        hence $F:\mathbb{C}\to\mathbb{C}$ is lifted to $\overline{F}\colon \Kl(T)\to\Kl(T)$. See~(\ref{eq:FAndOF}).
  \item For each $X,Y\in\Kl(T)$, the homset $\Kl(T)(X,Y)$ carries an order
        $\sqsubseteq_{X,Y}$ (or simply $\sqsubseteq$).
  \item Kleisli composition $\odot$ and  cotupling $[\place,\place]$ are
        monotone with respect to the order $\sqsubseteq$.
        The latter gives rise to an order isomorphism
        $\Kl(T)(X_{1}+X_{2},Y)\cong\Kl(T)(X_{1},Y)\times \Kl(T)(X_{2},Y)$, where
        $+$ is inherited along a left adjoint $J\colon \C\to\Kl(T)$.
  \item $\overline{F}:\Kl(T)\to\Kl(T)$ is \emph{locally monotone}:
        for $f,g\in\Kl(T)(X,Y)$, $f\sqsubseteq g$ implies $\overline{F}f\sqsubseteq\overline{F}g$.
\end{itemize}
\end{myassumptions}

\begin{myexample}\label{ex:nondet}
The category $\Sets$, the powerset monad $\pow$ (Def.~\ref{def:powersetMonadAndSubGiryMonad}) and
a polynomial functor $F$ on $\Sets$ (Def.~\ref{def:polynFunc}) satisfy Asm.~\ref{asm:assumptionsOnCTF}.
Here for $X,Y\in\Kl(\pow)$, an order $\sqsubseteq_{X,Y}$ is defined by:
$f\sqsubseteq g$ if $f(x)\subseteq g(x)$ for each $x\in X$.
\end{myexample}

\begin{myexample}\label{ex:probability}
The category $\Meas$, the sub-Giry monad $\giry$ (Def.~\ref{def:powersetMonadAndSubGiryMonad}) and
a polynomial functor $F$ on $\Meas$ (Def.~\ref{def:polynFunc}) satisfy Asm.~\ref{asm:assumptionsOnCTF}.
For $X,Y\in\Kl(\giry)$, a natural order $\sqsubseteq_{(X,\sigalg_X),(Y,\sigalg_Y)}$ is defined by:
$f\sqsubseteq g$ iff $f(x)(A)\le g(x)(A)$ (in $[0,1]$) for each $x\in X$ and $A\in\sigalg_Y$.
\end{myexample}

\subsection{Coalgebraic Modeling of B\"uchi/Parity Automata}
\label{subsec:coalgModeling}
The B\"uchi and parity acceptance conditions have been big challenges to
the coalgebra community, because of their \emph{nonlocal} and
\emph{asymptotic} nature (see~\S{}\ref{sec:intro}).
One possible modeling is to take the distinction between $\nonaccstate$ vs.\
$\accstate$---or different priorities in the parity case---as \emph{state labels}.
This is much like in the established coalgebraic modeling of deterministic automata
as $2\times (\place)^{\Sigma}$-coalgebras (see e.g.~\cite{Rutten00uca,Jacobs12itc}).
Here the set $2$ tells if a state is accepting or not.

A key to our current modeling,
however, is that accepting states should rather be specified by a
\emph{partition} $X=X_{1}+X_{2}$ of a state space, with
$X_{1}=\{\nonaccstate\text{'s}\}$ and $X_{2}=\{\accstate\text{'s}\}$.
 This idea  smoothly generalizes to  \emph{parity}  conditions, too, by $X_{i}=\{\text{states of priority $i$}\}$. Equipping such
partitions to coalgebras (with explicit initial states, as
in~\S{}\ref{subsec:coalgTrSim}) leads to the following.

Henceforth we state results for the parity condition, with B\"uchi
being a special case.

\begin{mydefinition}[parity $(T,F)$-system]\label{def:parityTFSys}
A \emph{parity $(T,F)$-system} is given by a triple
\begin{math}
  \X
  =
  \bigl(\,
    (X_{1},\dotsc,X_{n}),\,
    c\colon X\kto \overline{F}X,\,
    s\colon 1\kto X
  \,\bigr)
\end{math}
where  $n$ is a positive integer, and:
\begin{itemize}
 \item $(X_{1},\dotsc,X_{n})$ is an $n$-tuple of objects in $\C$
   for \emph{states} (with their \emph{priorities}),
   and we define $X=X_{1}+\cdots+ X_{n}$ (a coproduct in $\C$);
 \item $c\colon X \kto \overline{F}X$ is an arrow in $\Kl(T)$ for \emph{dynamics}; and
 \item $s\colon 1\kto X$ is an arrow in $\Kl(T)$ for \emph{initial states}.
\end{itemize}
For each $i\in[1,n]$
we define $c_{i}\colon X_{i}\kto \overline{F}X$ to be
the restriction $c\circ \kappa_{i}\colon X_{i}\kto \overline{F}X$ along the
coprojection $\kappa_{i}\colon X_{i}\hookrightarrow X$,
in case the maximum priority is $n=2$, a parity $(T,F)$-system is
referred to as a \emph{B\"uchi $(T,F)$-system}.
%Here the intuition is that $X_{2}$ is the collection of accepting states $\accstate$. 
\end{mydefinition}

\auxproof{
The intuitions are as follows.
The object  $X_{i}$ is the collection of states with the priority $i\in [1,m]$; and
$X:=X_{1}+\cdots + X_{m}$ is the collection of all states.
The arrow $c$ specifies the dynamics of
the system---with linear-time behaviors of type $F$ and additional
branching of type $T$---and
 % their cotupling 
% \begin{displaymath}
%  [c_{1},\dotsc,c_{m}]
%  \;\colon \;
%  X_{1}+\cdots + X_{m} = X \longrightarrow 
%  TFX
% \end{displaymath}
is a \emph{$TF$-coalgebra} with a carrier object $X$. Finally, much
like in~\cite{Hasuo06gfa,UrabeH14gfa,UrabeH15cit}, we
prefer to work with explicit initial states (unlike most coalgebraic
studies).
They are expressed as the arrow $s\colon 1\to TX$; the
occurrence of $T$ in the codomain means that $T$-branching is present
also in the choice of an initial state.

In other words, the notion of parity $(T,F)$-system is that of
$(T,F)$-system (in e.g.~\cite{Hasuo10gfa}) additionally
equipped with a partition $(X_{1},\dotsc,X_{m})$ of the state space $X$.

In what follows we will prefer the following presentation of parity $(T,F)$-systems
in the Kleisli category $\Kl(T)$:
\begin{equation}\label{eq:parityTFSysInKleisliCat}
 \X \;=\;
 \bigl(\,
  (X_{1},X_{2},\dotsc,X_{m}),
\;
%  (\,c_{i}\colon X_{i}\kto FX\,)_{i\in[1,m]}
c\colon X\kto FX
,
\;
  s\colon 1\kto X
\,\bigr)
\end{equation}
 where  $\kto$ denotes an arrow in $\Kl(T)$
 (see \S{}\ref{subsec:coalgebra}).
}

\subsection{Coalgebraic Trace Semantics under the Parity Acceptance Condition}
On top of the modeling in Def.~\ref{def:parityTFSys} we characterize
\emph{accepted languages}---henceforth referred to as \emph{trace
semantics}---of parity $(T,F)$-systems. \sloppy We use 
systems of fixed-point equations; this naturally extends the previous
characterization of infinitary traces (i.e.\ under the trivial acceptance
conditions) by maximality
(Thm.~\ref{thm:infinitaryTrace}; see also~(\ref{eq:fromMaximalTractToBuechiTrace})).
%~\cite{Jacobs04tsf,Cirstea10git,UrabeH15cit}.

% The (accepted) language of a B\"uchi automaton is captured as
% follows, in coalgebraic terms, with the help of equational systems. The
% characterization naturally extends the previous observations of
% infinitary traces as greatest coalgebra
% homomorphisms~\cite{Jacobs04tsf,Cirstea10git,kerstan13coalgebraictrace,UrabeH15cit}.

\begin{mydefinition}[trace semantics of parity $(T,F)$-systems]
\label{def:acceptedLangParityTFSys}
Let
$\X=\bigl(\,(X_{1},\dotsc,X_{n}),
%\,c\colon X\kto FX,\, s\colon 1\ktoX\,
c,s\bigr)$
be a parity $(T,F)$-system.
It induces the following equational system $E_{\X}$,
where $\zeta\colon Z \iso FZ$ is a final coalgebra in $\C$
(see Asm.~\ref{asm:assumptionsOnCTF}).
The variable $u_{i}$ ranges over the poset $\Kl(T)(X_{i},Z)$.
\begin{equation*}
  E_{\X}
  \;\coloneqq\;
  \left[
    \begin{array}{rllr}
      u_{1}
      &=_{\mu}&
      (J\zeta)^{-1}
      \odot
      \oF[u_{1},\dotsc,u_{n}]
      \odot
      c_{1}
      & \in\Kl(T)(X_{1},Z)
      \\
      u_{2}
      &=_{\nu}&
      (J\zeta)^{-1}
      \odot
      \oF[u_{1},\dotsc,u_{n}]
      \odot
      c_{2}
      & \in\Kl(T)(X_{2},Z)
      \\
      & \;\vdots &
      \\
      u_{n}
      &=_{\eta_{n}}&
      (J\zeta)^{-1}
      \odot
      \oF[u_{1},\dotsc,u_{n}]
      \odot
      c_{n}
      & \in\Kl(T)(X_{n},Z)
    \end{array}
  \right]
\end{equation*}
Here $\eta_{i}=\mu$ if $i$ is odd and $\eta_{i}=\nu$ if $i$ is even.
%We note that $[u_{1},u_{2},\cdots,u_{n}]\colon X\kto Z$ is the cotupling of $u_{1}$ and $u_{2}$.
The functions in the equations are seen to be monotone, thanks to the
monotonicity assumptions on cotupling, $\oF$ and $\odot$
(Asm.~\ref{asm:assumptionsOnCTF}).

We say that $(T,F)$ constitutes a $\emph{parity trace situation}$, if
$E_{\X}$ has a solution for any parity $(T,F)$-system $\X$, 
 denoted by
\begin{math}
  \trp_{1}(\X)\colon X_{1} \kto Z
,
 \dotsc
,
  \trp_{n}(\X)\colon X_{n} \kto Z
\end{math}.
The composite
\begin{align*}
  \trp(\X)
  \;\coloneqq\;
  \bigl(\,
    1
    \stackrel{s}{\longkto}
    X=X_{1}+X_{2}+\cdots+X_{n}
    \relarrow{\xrightarrow{[\trp_{1}(\X),\trp_{2}(\X),\dotsc,\trp_{n}(\X)]}}
    Z
  \,\bigr)
\end{align*}
is called the \emph{trace semantics} of the parity $(T,F)$-system $\X$.
\end{mydefinition}
\noindent
If $\X$ is a B\"uchi $(T,F)$-system,
the equational system $E_{\X}$---with their
solutions $\trp_{1}(\X)$ and $\trp_{2}(\X)$ in place---can be expressed as
the following diagrams (with explicit $\mu$ and $\nu$). See~(\ref{eq:fromMaximalTractToBuechiTrace}).
\begin{equation}\label{eq:diagramsForEqSysForBuechiAcceptance}
  \vcenter{\xymatrix@C+3.3em@R=.8em{
    {FX}
      \kar[r]^{\overline{F}[\trp(c_{1}),\trp(c_{2})]}
      \ar@{}[rd]|{\color{blue}=_{\mu}}
    &
    {FZ}
    \\
    {X_{1}}
      \kar[u]^{c_{1}}
      \kar[r]_{\trp(c_{1})}
     &
    {Z \mathrlap{\enspace}}
      \kar[u]_{J\zeta}^{\cong}
  }}
  \quad
  \vcenter{\xymatrix@C+3.3em@R=.8em{
    {FX}
       \kar[r]^{\overline{F}[
       \trp(c_{1}),\trp(c_{2})
       ]}
       \ar@{}[rd]|{\color{red}=_{\nu}}
    &
    {FZ}
    \\
    {X_{2}}
      \kar[u]^{c_{2}}
      \kar[r]_{\trp(c_{2})}
    &
    {Z \mathrlap{\enspace}}
      \kar[u]_{J\zeta}^{\cong}
  }}
\end{equation}

\section{Coincidence with the Conventional Definition: Nondeterministic}
\label{sec:nondetParitySys}
The rest of the paper is devoted to showing that our coalgebraic
characterization (Def.~\ref{def:acceptedLangParityTFSys}) indeed captures
the conventional definition of accepted languages.
In this section we study the nondeterministic case; we let $\C=\Sets$,
$T=\pow$, and $F$ be a polynomial functor.

We first have to check that Def.~\ref{def:acceptedLangParityTFSys} makes sense.
Existence of enough fixed points is obvious because $\Kl(\pow)(X_{i},Z)$
is a complete lattice (Lem.~\ref{lem:eqSysSolvableCompLat}).
See also Example~\ref{ex:nondet}.
\begin{mytheorem}\label{thm:constParityTraceSituPow}
  $T=\pow$ and a polynomial $F$ constitute a parity trace situation (Def.~\ref{def:acceptedLangParityTFSys}).
  \qed
\end{mytheorem}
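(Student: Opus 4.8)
The plan is to reduce the claim to Lemma~\ref{lem:eqSysSolvableCompLat}, which already furnishes a solution of any equational system whose underlying posets are complete lattices. By Definition~\ref{def:acceptedLangParityTFSys}, to show that $(\pow,F)$ constitutes a parity trace situation I must exhibit, for every parity $(\pow,F)$-system $\X$, a solution of the induced system $E_{\X}$. So it suffices to verify two things: that each poset $L_i=\Kl(\pow)(X_i,Z)$ on which $E_{\X}$ is posed is a complete lattice, and that the defining functions $f_i$ are monotone.

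First I would fix the ambient data. The final $F$-coalgebra $\zeta\colon Z\iso FZ$ in $\Sets$ exists by Lemma~\ref{lem:treeAndFinalCoalg} (concretely $Z=\myTree_{\Sigma}$ for the ranked alphabet corresponding to $F$), so $E_{\X}$ is well posed, and Example~\ref{ex:nondet} records that $\Sets$, $\pow$ and a polynomial $F$ satisfy Assumptions~\ref{asm:assumptionsOnCTF}. For the completeness of the posets, I would unfold $\Kl(\pow)(X_i,Z)=\Sets(X_i,\pow Z)$ and note that, under the pointwise inclusion order of Example~\ref{ex:nondet}, this homset is order-isomorphic to the product $\prod_{x\in X_i}\pow Z$. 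Since $\pow Z$ ordered by $\subseteq$ is a complete lattice (arbitrary unions as joins, intersections as meets) and a product of complete lattices is again a complete lattice, each $L_i$ is a complete lattice.

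Monotonicity of $f_i(u_1,\dots,u_n)=(J\zeta)^{-1}\odot\oF[u_1,\dots,u_n]\odot c_i$ is exactly the observation already made in Definition~\ref{def:acceptedLangParityTFSys}: cotupling, the lifting $\oF$ and Kleisli composition $\odot$ are all monotone by Assumptions~\ref{asm:assumptionsOnCTF} (verified in Example~\ref{ex:nondet}), and a composite of monotone maps is monotone. Lemma~\ref{lem:eqSysSolvableCompLat} then yields a solution of $E_{\X}$, and since $\X$ was arbitrary the claim follows.

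I do not expect a genuine obstacle here: this theorem is a sanity check that Definition~\ref{def:acceptedLangParityTFSys} is well posed in the nondeterministic setting, and it is easy precisely because every relevant homset lives in a complete lattice. The only point demanding the slightest care is confirming that the pointwise order inherits completeness from $\pow Z$, which is immediate from the product description. The real difficulty is deferred to the probabilistic case of~\S\ref{sec:probParitySys}, where the homsets $\Kl(\giry)(X_i,Z)$ fail to be complete lattices and one must instead appeal to the Kleene-style Lemma~\ref{lem:eqSysSolvableOmegaCont} together with an extension of fixed points along the final $F$-sequence.
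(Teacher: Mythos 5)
Your proposal is correct and matches the paper's own (one-line) argument exactly: the paper likewise observes that each $\Kl(\pow)(X_{i},Z)$ is a complete lattice and invokes Lemma~\ref{lem:eqSysSolvableCompLat}, with monotonicity of the $f_i$ already noted in Definition~\ref{def:acceptedLangParityTFSys}. Your version merely spells out the product-of-powersets description of the homset, which the paper leaves implicit.
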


Here is the conventional definition of automata~\cite{GradelTW02ala}. 
\begin{mydefinition}[NPTA]\label{def:nondetParityTreeAutom}
A \emph{nondeterministic parity tree automaton} (NPTA) is  a quadruple
\begin{displaymath}
  \X
  \;=\;
  \bigl(\,
   (X_{1},\ldots, X_{n}),\,
   \Sigma,\,
   \delta\colon X\to \pow\bigl(\textstyle\coprod_{\sigma\in\Sigma} X^{|\sigma|}\bigr),\,
   s\in \pow X
  \,\bigr)
  \,,
\end{displaymath}
where $X=X_1+\cdots+X_n$,
each $X_{i}$ is the set of \emph{states} with the \emph{priority} $i$,
$\Sigma$ is a ranked alphabet (with the arity map $|\place|\colon \Sigma\to\nat$),
$\delta$ is a \emph{transition function} and
$s$ is the set of \emph{initial states}.
\end{mydefinition}

The accepted language of an NPTA $\X$ is conventionally defined in the following way.
Here we are sketchy due to the lack of space;
precise definitions are  in Appendix~\ref{sec:detailsOfTreeRunAcc}.
% Before the definition of its accepted language,
% we go on to briefly review the standard notions of \emph{run}, \emph{accepting run},
% and so on.
% Precise definitions are found in Appendix~\ref{sec:detailsOfTreeRunAcc}.

%\begin{mydefinition}\label{def:treesRunsAccRunsBriefly}
A (possibly infinite) $(\Sigma \times X)$-labeled tree $\rho$ is
a \emph{run} of an NPTA $\X=(\vec{X},\Sigma,\delta,s)$ if:
for each node with a label $(\sigma,x)$,
it has $|\sigma|$ children and we have
$\bigl(\sigma,(x_{1},\dotsc,x_{|\sigma|})\bigr)\in \delta(x)$ where
$x_{1},\dotsc,x_{|\sigma|}$ are the $X$-labels of its children.
For a pedagogical reason we do not require the root $X$-label to be an initial state.
A run $\rho$ of an NPTA $\X$ is  \emph{accepting} if
any infinite branch $\pi$ of the tree $\rho$ satisfies the parity acceptance condition
(i.e.\ $\max\{i\mid\pi$ visits states in $X_{i}$ infinitely often$\}$ is even).
The sets of runs and accepting runs of $\X$ are denoted by
$\Run_{\X}$ and $\AccRun_{\X}$, respectively.

The function $\myroot\colon\Run_{\X}\to X$
is defined to return the root $X$-label of a run.
For each $X'\subseteq X$, we define $\Run_{\X,X'}$ by
$\{\rho\in\Run_{\X}\mid\myroot(\rho)\in X'\}$; the set
$\AccRun_{\X,X'}$ is similar.
The map $\DelSt\colon \Run_{\X}\to \myTree_{\Sigma}$
takes a run, removes all $X$-labels and returns a $\Sigma$-tree.
%\end{mydefinition}

\begin{mydefinition}[$\Lang(\X)$ for NPTAs]\label{def:acceptedLangNPTA}
Let $\X$ be an NPTA.
Its \emph{accepted language} $\Lang(\X)$ is defined by $\DelSt(\AccRun_{\X,s})$.
\end{mydefinition}

\begin{wrapfigure}[4]{r}[0pt]{0pt}
  \hspace{-0.25em}%
  \raisebox{-0.25cm}[0pt][0cm]{
    \begin{tikzpicture}[baseline=-20pt, level distance=3em,sibling distance=.2em]
     \Tree[.\node{$(\sigma,x)$};
       [.\node{$\rho_{1}$};
         \edge[roof];
         {\phantom{hoge}}
       ]
       \edge[draw=none];
       [.\node{$\cdots$}; ]
       [.\node{$\rho_{|\sigma|}$}; 
         \edge[roof];
         {\phantom{hoge}}
       ]
     ]
    \end{tikzpicture}%
    \begin{minipage}{2em}\begin{equation}\label{eq:DiaX}\end{equation}\end{minipage}%
  }%
\end{wrapfigure}
The following coincidence result for the nondeterministic setting is fairly
straightforward. A key is the fact that accepting runs are
characterized---among all possible runs---using an equational system
that is parallel to the one in
Def.~\ref{def:acceptedLangParityTFSys}. 
\begin{mylemma}
\label{lem:eqSysCharacterizationOfAcceptingRuns}

Let $\X=(
%(X_{1},\ldots, X_{n})
\vec{X}
,\Sigma,\delta,s)$ be an NPTA, and
 $\seq{l^{\sol}_{\i}}{n}$ be the solution of the following equational
 system, whose variables $u_{1},\dotsc,u_{n}$ range over  $\pow(\Run_\X)$. 
\begin{equation}\label{eq:eqSysCharacterizationOfAcceptingRuns:eqSys}
  u_{1}
  \;=_{\eta_{1}}\;
  \Diamond_{\X}(u_{1} \cup\cdots\cup u_{n}) \cap \Run_{\X,X_{1}}
  \;,
  \quad
  \dotsc
  \;,
  \quad
  u_{n}
  \;=_{\eta_{n}}\;
  \Diamond_{\X}(u_{1} \cup\cdots\cup u_{n}) \cap \Run_{\X,X_{n}}
%  \,,
\end{equation}
Here:
$\Diamond_\X:\pow(\Run_\X)\to\pow(\Run_\X)$ is given by
\begin{math}
  \Diamond_{\X} R
  \coloneqq
  \bigl\{
    \bigl((\sigma,x),(\seq{\rho_{\i}}{|\sigma|})  \bigr) \in\Run_\X
    \,\big|\,
    \sigma\in\Sigma,x\in X,
 %   \qnta{i\in[1,|\sigma|]} 
 \rho_{i} \in R
  \bigr\}
\end{math}
(see the figure~(\ref{eq:DiaX}) above);
 $X=X_1+\cdots+X_n$; and $\eta_{i}$ is $\mu$ (for odd $i$) or $\nu$
 (for even $i$).
Then
the $i$-th solution  $l^{\sol}_{i}$
coincides with
 $\AccRun_{\X,{X_i}}$.
\qed
\end{mylemma}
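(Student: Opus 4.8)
My plan is to prove the single equality $\bigcup_{i=1}^{n} l^{\sol}_{i} = \AccRun_{\X}$ and deduce the per-priority statement from it. First I would reduce the $n$ claimed equalities $l^{\sol}_{i} = \AccRun_{\X,X_{i}}$ to this one. Each solution component is a fixed point, so $l^{\sol}_{i} = \Diamond_{\X}(l^{\sol}_{1}\cup\cdots\cup l^{\sol}_{n}) \cap \Run_{\X,X_{i}} \subseteq \Run_{\X,X_{i}}$; since the sets $\Run_{\X,X_{i}}$ partition $\Run_{\X}$ (every run has a unique root priority, via $\myroot$) and $\AccRun_{\X,X_{i}} = \AccRun_{\X}\cap\Run_{\X,X_{i}}$, intersecting $\bigcup_{i} l^{\sol}_{i} = \AccRun_{\X}$ with $\Run_{\X,X_{i}}$ recovers $l^{\sol}_{i} = \AccRun_{\X,X_{i}}$ for each $i$.

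Next I would record the \emph{local recursion} for acceptance: a run $\rho$ is accepting iff all of its immediate subtrees are accepting. This holds because every infinite branch of $\rho$ is an infinite branch of a unique child subtree with the root prepended, and prepending a finite prefix changes neither the priorities occurring infinitely often nor the run condition (which is local, so subtrees of runs are again runs). Consequently the stratified family $(\AccRun_{\X,X_{i}})_{i}$ \emph{is} a fixed point of the operators $u_{i}\mapsto\Diamond_{\X}(\bigcup_{k}u_{k})\cap\Run_{\X,X_{i}}$. The content of the lemma is therefore not that $\AccRun_{\X}$ solves the equations---many sets do, e.g.\ $\emptyset$ and $\Run_{\X}$---but that it is precisely the \emph{nested} $\mu/\nu$ solution selected by the priorities. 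I would also note that, by the finite branching of $\Sigma$-trees, $\Diamond_{\X}$ preserves directed unions and arbitrary intersections, so every fixed point computed below stabilises at stage $\omega$.

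The core is then to match the nested fixed point to the parity condition, which I would do by induction on the number of priorities $n$, conditioning on the top priority. The outermost equation $u_{n} =_{\eta_{n}} \Diamond_{\X}(\cdots)\cap\Run_{\X,X_{n}}$ is a least fixed point when $n$ is odd and a greatest one when $n$ is even---exactly mirroring the parity requirement that the largest priority may be visited infinitely often along a branch only if it is even. Unfolding this outermost fixed point as an $\omega$-indexed iteration and freezing $u_{n}$ at each stage, the induction hypothesis applied to the lower subsystem $u_{1},\dots,u_{n-1}$ characterises membership in terms of branch segments that stay below priority $n$ and satisfy the lower parity condition; reassembling the stages yields that $\rho\in\bigcup_{i}l^{\sol}_{i}$ iff along every branch of $\rho$ the largest priority seen infinitely often is even, i.e.\ iff $\rho\in\AccRun_{\X}$.

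I expect the main obstacle to be the soundness half of the inductive step: showing that the \emph{least} fixed point at an odd top priority really forces each branch to visit that priority only finitely often. This is an infinite-descent argument---the one-player, universally-branching ($\Box$) specialisation of the fixed-point characterisation of parity games---most cleanly phrased by assigning to each $\rho\in\bigcup_{i}l^{\sol}_{i}$ an ordinal signature recording the $\mu$-approximant stages at which it enters, and checking that this signature strictly decreases at each visit to the highest infinitely-often priority when that priority is odd, contradicting the existence of such an infinite branch. The completeness half is dual (the greatest fixed point at an even top priority tolerates infinitely many visits), and the remaining ingredients---monotonicity of the operators, the partition bookkeeping of the first step, and the local recursion---are routine given Lemma~\ref{lem:eqSysSolvableCompLat} and the definitions.
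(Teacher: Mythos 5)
Your overall route is the one the paper itself takes: unfold the nested fixed points as chains of approximants, characterize the $k$-th approximant of the equation for priority $j$ by the number of visits to $X_{j}$ along each branch before escaping to a higher priority, and read off the parity condition at the last stage. Your preliminary reductions (to the single equality $\bigcup_{i}l^{\sol}_{i}=\AccRun_{\X}$, and the observation that $(\AccRun_{\X,X_{i}})_{i}$ is a fixed point of the underlying operators) are correct, and the ordinal-signature packaging of the soundness half is sound as stated.

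The genuine gap is the claim that ``every fixed point computed below stabilises at stage $\omega$'' because $\Diamond_{\X}$ preserves directed unions and arbitrary intersections. Both facts about $\Diamond_{\X}$ are true, but the operator you actually iterate for the outermost variable $u_{n}$ is not $\Diamond_{\X}$: it is the composite obtained after substituting the interim solutions of the inner subsystem, and for $n\ge 3$ those interim solutions contain \emph{greatest} fixed points as functions of $u_{n}$, i.e.\ countable intersections of $\omega$-continuous maps, which need not be $\omega$-continuous. Concretely, with priorities $\{1,2,3\}$ take a run whose root is in $X_{3}$, whose spine stays in $X_{2}$, and whose $i$-th spine node branches off into a chain of $i$ further $X_{3}$-states followed by an infinite $X_{2}$-only subtree. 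Every branch visits $X_{3}$ finitely often and $X_{2}$ infinitely often, so the run is accepting; yet it lies in no finite approximant of the outermost $\mu$, since membership in the $k$-th approximant forces a \emph{uniform} bound $k$ on $X_{3}$-visits over all branches, and K\"{o}nig's lemma does not supply one (the tree of $X_{3}$-visits is infinitely branching even though the run is finitely branching). Hence the union of the $\omega$-many approximants is not a fixed point, the least fixed point is reached only transfinitely, and your completeness direction (``reassembling the stages'') proves strictly less than the lemma; only the innermost $\mu$, for priority $1$, is rescued by K\"{o}nig's lemma. The repair is to carry the reassembly through transfinite approximants with an invariant that survives limit stages---your ordinal signatures already accommodate this on the soundness side---and you should be aware that this is precisely the point that the paper's own write-up of this step also treats as ``easy to see.''
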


We shall translate the above result to
the characterization of accepted trees
(Lem.~\ref{lem:eqSysCharacterizationOfAcceptingTrees}).
In its proof (that is deferred to the appendix)
Lem.~\ref{lem:eqSysWithFixedPtIso}---on homomorphisms of equational
systems---plays an important role.

\begin{mylemma}
\label{lem:eqSysCharacterizationOfAcceptingTrees}
Let $\X=(
%(\X_{1},\ldots,X_{n})
\vec{X}
,\Sigma,\delta,s)$ be an NPTA, and
let $\seq{l'^{\sol}_{\i}}{n}$ be the solution of the following
 equational system,
%over $\bigl(\pow(\myTree_{\Sigma})\bigr)^{X}$
where $u'_{i}$ ranges over the complete lattice $\bigl(\pow(\myTree_{\Sigma})\bigr)^{X_{i}}$:
\begin{equation}\label{eq:eqSysCharacterizationOfAcceptingTrees:eqSys}
  u'_{1}
  \;=_{\eta_{1}}\;
  \Diamond_{\delta}([u'_{1},\dotsc,u'_{n}]) \upharpoonright {X_{1}}
  \;,
  \quad
  \dotsc
  \;,
  \quad
  u'_{n}
  \;=_{\eta_{n}}\;
  \Diamond_{\delta}([u'_{1},\dotsc,u'_{n}]) \upharpoonright {X_{n}}
  \;.
\end{equation}
Here  $\eta_{i}$ is $\mu$ (for odd $i$) or $\nu$
 (for even $i$); 
$(\place)\upharpoonright X_{i}
\colon
\bigl(\pow(\myTree_{\Sigma})\bigr)^{X}\to 
\bigl(\pow(\myTree_{\Sigma})\bigr)^{X_{i}}
$ denotes  domain restriction; and
the function $\Diamond_{\delta}\colon \bigl(\pow(\myTree_{\Sigma})\bigr)^{X}\to
\bigl(\pow(\myTree_{\Sigma})\bigr)^{X}$ is given by
\begin{displaymath}
  (\Diamond_{\delta}T)(x)
  \;\coloneqq\;
  \bigl\{
    \bigl(\sigma,(\seq{\tau_{\i}}{|\sigma|})\bigr)
    \,\big|\,
    \bigl(\sigma,(\seq{x_{\i}}{|\sigma|})\bigr)\in\delta(x),
%    \qnta{i\in[1,|\sigma|]}
    \tau_{i} \in T(x_{i})
  \bigr\}\enspace.
\end{displaymath}
Then we have a coincidence
\begin{math}
 l'^{\sol}_{i}=  \DelSt'(\AccRun_{\X,X_{i}})
\end{math},
where the function 
$\DelSt' \colon \pow(\Run_{\X}) \to (\pow(\myTree_{\Sigma}))^{X}$ 
is given by
\begin{math}
  \DelSt'(R)(x)
  \coloneqq
  \DelSt(\{\rho\in R\mid \myroot(\rho)=x\})
\end{math}. Recall that $\myroot$ returns a run's root $X$-label.
\qed
\end{mylemma}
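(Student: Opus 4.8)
The plan is to transport the run-level characterization of Lemma~\ref{lem:eqSysCharacterizationOfAcceptingRuns} down to trees along the label-erasing map $\DelSt'$, using the homomorphism machinery for equational systems (Lemma~\ref{lem:eqSysWithFixedPtIso}). First I would regard the run system~(\ref{eq:eqSysCharacterizationOfAcceptingRuns:eqSys}) as living variable-by-variable over the sublattices $\pow(\Run_{\X,X_i})$ --- its $i$-th equation forces its component into $\Run_{\X,X_i}$ anyway --- so that it lines up, equation by equation and with the same $\mu/\nu$ pattern, with the tree system~(\ref{eq:eqSysCharacterizationOfAcceptingTrees:eqSys}) over $(\pow\myTree_{\Sigma})^{X_i}$. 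The connecting maps are the components of $\DelSt'$, namely $\phi_i\colon \pow(\Run_{\X,X_i})\to(\pow\myTree_{\Sigma})^{X_i}$ with $\phi_i(R)(x)=\DelSt(\{\rho\in R\mid\myroot(\rho)=x\})$.

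The next step is the coherence check: that $\DelSt'$ intertwines the two right-hand sides, i.e.\ $\phi_i\bigl(\Diamond_\X(u_1\cup\cdots\cup u_n)\cap\Run_{\X,X_i}\bigr)=\bigl(\Diamond_\delta[\phi_1 u_1,\dots,\phi_n u_n]\bigr)\upharpoonright X_i$. This is a direct unfolding: a run in $\Diamond_\X(\cdots)\cap\Run_{\X,X_i}$ has the shape $\bigl((\sigma,x),(\rho_1,\dots,\rho_{|\sigma|})\bigr)$ with $x\in X_i$ and each child $\rho_j$ lying in some $u_k$; erasing labels turns it into $\bigl(\sigma,(\DelSt\rho_1,\dots,\DelSt\rho_{|\sigma|})\bigr)$, while the side condition $\bigl(\sigma,(\myroot\rho_1,\dots,\myroot\rho_{|\sigma|})\bigr)\in\delta(x)$ is exactly what $\Diamond_\delta$ reads off, and $\DelSt\rho_j\in\phi_k(u_k)(\myroot\rho_j)$ by construction. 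Hence the two sets agree node-wise. Together with the evident fact that $\DelSt'$ preserves all unions, this presents $(\phi_1,\dots,\phi_n)$ as a homomorphism of equational systems. Applying Lemma~\ref{lem:eqSysWithFixedPtIso} then gives $\phi_i(l^{\sol}_i)=l'^{\sol}_i$; since Lemma~\ref{lem:eqSysCharacterizationOfAcceptingRuns} identifies $l^{\sol}_i=\AccRun_{\X,X_i}$, we conclude $l'^{\sol}_i=\DelSt'(\AccRun_{\X,X_i})$, as required.

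The hard part will be meeting the hypotheses of Lemma~\ref{lem:eqSysWithFixedPtIso} for $\DelSt'$ at the greatest-fixed-point ($\nu$) equations. Because $\DelSt'$ preserves all joins, the least fixed points transport by the usual fusion rule; but $\DelSt'$ does \emph{not} preserve arbitrary meets --- a tree $\tau$ may lie in $\DelSt(R_\alpha)$ for every term of a decreasing chain $(R_\alpha)_\alpha$, each time witnessed by a different run over $\tau$, with no single run surviving in $\bigcap_\alpha R_\alpha$. The inclusion $\DelSt'(\bigcap_\alpha R_\alpha)\subseteq\bigcap_\alpha\DelSt'(R_\alpha)$ is free, so the crux is the reverse inclusion for the $\nu$-iterates. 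To license the lemma uniformly I would restrict $\phi_i$ to the sublattice of fixed-point iterates actually generated by the run system and argue that there $\phi_i$ is an order-isomorphism, the decisive structural input being that runs over a fixed $\Sigma$-tree $\tau$ from a fixed root are precisely the $\delta$-consistent state-labellings of the nodes of $\tau$, so that a König-style argument recovers a single surviving run and makes label-erasure commute with the relevant infima. This compactness-flavoured preservation of meets --- rather than the routine diagram-chase of the second paragraph --- is where the genuine content of the lemma lies.
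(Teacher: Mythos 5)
Your setup, the intertwining check, and the treatment of the $\mu$-equations (via preservation of all joins and of $\bot$ by $\DelSt'$) match the paper's proof, and you have correctly located the crux in the $\nu$-equations. But the resolution you propose for that crux does not work. A K\"onig-style argument needs the tree of partial state-decorations of a fixed $\Sigma$-tree to be finitely branching, and nothing in the definition of an NPTA guarantees this: the state sets $X_i$ and the sets $\delta(x)$ may be infinite. The paper's Remark~\ref{rem:cousotcousotseqDoNotMatchStepByStep} gives exactly the counterexample that kills your step: a state $x$ with infinitely many successors starting dead-ending chains of every finite length $n$. Every finite prefix of the tree $\ast^{\omega}$ admits a $\delta$-consistent partial decoration from $x$, yet no run over $\ast^{\omega}$ exists; consequently $\ast^{\omega}$ survives in the tree-level approximant $\bigcap_{k<\omega}\Phi'^{k}(\top)$ while the run-level approximant $\bigcap_{k<\omega}\Phi^{k}(\top)$ is already empty. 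So label-erasure does \emph{not} commute with the relevant infima, and the two Cousot--Cousot sequences genuinely desynchronize at limit ordinals; no compactness principle restores a ``single surviving run'' from the chain of distinct witnesses. (Relatedly, your appeal to Lem.~\ref{lem:eqSysWithFixedPtIso} presupposes its Cond.~2, the order-isomorphism on interim fixed points, which you never establish; the paper in fact does not verify it but runs a direct induction instead.)

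The argument that does work is asymmetric. One inequality, $\varphi_{i}(\Phi^{\alpha}(\top))\sqsubseteq\Phi'^{\alpha}(\top)$ for every ordinal $\alpha$, is a routine transfinite induction using the intertwining square and monotonicity, and yields $\varphi_{i}(\nu\Phi)\sqsubseteq\nu\Phi'$. For the converse one does not reason about approximants at all: for each tree $\tau\in(\nu\Phi')_{x}$ one builds a decorating run top-down, choosing the children's $X$-labels at every node by unfolding the fixed-point equation satisfied by $\nu\Phi'$ itself (together with the interim solutions for the lower-priority variables). Collecting these decorations and closing under subruns gives a \emph{post}-fixed point $R\sqsubseteq\Phi(R)$ with $\varphi_{i}(R)=\nu\Phi'$; Knaster--Tarski then gives $R\sqsubseteq\nu\Phi$ and hence $\nu\Phi'=\varphi_{i}(R)\sqsubseteq\varphi_{i}(\nu\Phi)$. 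This coinductive construction of a witnessing run from the fixed point, rather than a compactness extraction from the approximants, is the missing content of your proof.
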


\begin{mytheorem}[coincidence, in the nondeterministic setting]\label{thm:coincidenceForNPTA}
Let $\X=((X_1,\ldots,X_n),\Sigma,\delta,s)$ be an NPTA, and
$F_{\Sigma}=\coprod_{\sigma\in\Sigma}(\place)^{|\sigma|}$ be the polynomial functor on $\Sets$
that corresponds to $\Sigma$.
Then $\X$ is identified with a parity $(\pow,F_{\Sigma})$-system;
moreover $\Lang(\X)$ (in the conventional sense of
 Def.~\ref{def:acceptedLangNPTA}) coincides with the coalgebraic trace
 semantics $\trp(\X)$ (Def.~\ref{def:acceptedLangParityTFSys}).
Note here that $\myTree_{\Sigma}$ 
carries a final $F_{\Sigma}$-coalgebra (Lem.~\ref{lem:treeAndFinalCoalg}).
\end{mytheorem}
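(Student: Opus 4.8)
The plan is to recognize this theorem as the ``gluing'' of the two run/tree characterization lemmas (Lem.~\ref{lem:eqSysCharacterizationOfAcceptingRuns}--\ref{lem:eqSysCharacterizationOfAcceptingTrees}) with the coalgebraic definition of trace semantics (Def.~\ref{def:acceptedLangParityTFSys}): all the combinatorial content about accepting runs is already packaged in those lemmas, so what remains is to show that the equational system $E_{\X}$ of Def.~\ref{def:acceptedLangParityTFSys} is \emph{literally the same} equational system as~\eqref{eq:eqSysCharacterizationOfAcceptingTrees:eqSys}, under a suitable identification of posets. First I would identify $\X$ with a parity $(\pow,F_{\Sigma})$-system by reading off the coalgebra $c=\delta\colon X\kto \oF X$ (using $\pow F_{\Sigma}X=\pow(\coprod_{\sigma}X^{|\sigma|})$) and the initial-state arrow $s\colon 1\kto X$ from $s\in\pow X$, keeping the priority partition $(X_{1},\dots,X_{n})$. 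By Lem.~\ref{lem:treeAndFinalCoalg} the final $F_{\Sigma}$-coalgebra is carried by $Z=\myTree_{\Sigma}$, and since $1$ is terminal we have the order-isomorphisms $\Kl(\pow)(X_{i},Z)\cong(\pow\myTree_{\Sigma})^{X_{i}}$ and $\Kl(\pow)(1,Z)\cong\pow\myTree_{\Sigma}$, matching the pointwise-inclusion order of Ex.~\ref{ex:nondet} with the lattice order on $(\pow\myTree_{\Sigma})^{X_{i}}$. Existence of the solution of $E_{\X}$ is guaranteed by Thm.~\ref{thm:constParityTraceSituPow} (equivalently Lem.~\ref{lem:eqSysSolvableCompLat}), since these are complete lattices.

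The key step---and the one I expect to be the main obstacle---is to verify that, under these identifications, the right-hand side $(J\zeta)^{-1}\odot\oF[u_{1},\dots,u_{n}]\odot c_{i}$ of the $i$-th equation of $E_{\X}$ equals $\Diamond_{\delta}([u_{1},\dots,u_{n}])\upharpoonright X_{i}$. This is a diagram chase that unfolds the Kleisli lifting $\oF$ through the canonical distributive law $\lambda\colon F_{\Sigma}\pow\Rightarrow\pow F_{\Sigma}$: for $g\colon X\kto Z$ the arrow $\oF g$ sends $(\sigma,(x_{1},\dots,x_{k}))\in F_{\Sigma}X$ to the set $\{(\sigma,(\tau_{1},\dots,\tau_{k}))\mid \tau_{j}\in g(x_{j})\}$. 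Precomposing with $c_{i}=c\circ\kappa_{i}$ (whose domain restriction along the coprojection $\kappa_{i}$ supplies the $\upharpoonright X_{i}$) and postcomposing with $(J\zeta)^{-1}$---where $\zeta^{-1}\colon F_{\Sigma}Z\to Z$ is exactly the tree-forming operation that attaches root label $\sigma$ to subtrees $\tau_{1},\dots,\tau_{k}$---then yields precisely the set $\{(\sigma,(\tau_{1},\dots,\tau_{k}))\mid (\sigma,(x_{1},\dots,x_{k}))\in\delta(x),\ \tau_{j}\in[u_{1},\dots,u_{n}](x_{j})\}$, which is the definition of $(\Diamond_{\delta}[u_{1},\dots,u_{n}])(x)$. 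Here I would also record that the Kleisli cotupling $[u_{1},\dots,u_{n}]$ agrees with the tupling in $(\pow\myTree_{\Sigma})^{X}$ via the order isomorphism of Asm.~\ref{asm:assumptionsOnCTF}, and that the fixed-point annotations coincide ($\eta_{i}=\mu$ for odd $i$, $\nu$ for even $i$) in both systems. Since the two equational systems then have the same variables, the same monotone functions, and the same $\mu/\nu$ pattern in the same left-to-right order, their solutions coincide: $\trp_{i}(\X)=l'^{\sol}_{i}=\DelSt'(\AccRun_{\X,X_{i}})$ by Lem.~\ref{lem:eqSysCharacterizationOfAcceptingTrees}.

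It then remains to handle the initial states. Unwinding Def.~\ref{def:acceptedLangParityTFSys}, the trace semantics $\trp(\X)=[\trp_{1}(\X),\dots,\trp_{n}(\X)]\odot s$ is the element of $\pow\myTree_{\Sigma}$ obtained by taking, for each initial state $x\in s$, the set $[\trp_{1}(\X),\dots,\trp_{n}(\X)](x)=\DelSt(\{\rho\in\AccRun_{\X}\mid\myroot(\rho)=x\})$ and unioning over $x\in s$. This union is exactly $\DelSt(\{\rho\in\AccRun_{\X}\mid\myroot(\rho)\in s\})=\DelSt(\AccRun_{\X,s})$, which is $\Lang(\X)$ by Def.~\ref{def:acceptedLangNPTA}. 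Hence $\trp(\X)=\Lang(\X)$, completing the argument; the only genuinely non-bookkeeping part is the lifting/distributive-law computation in the previous paragraph, everything else being a matter of matching up posets, orders, and the $\mu/\nu$ annotations.
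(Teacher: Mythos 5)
Your proposal is correct and follows essentially the same route as the paper's proof: identify $\X$ with a parity $(\pow,F_{\Sigma})$-system, observe that $E_{\X}$ coincides with the equational system of Lem.~\ref{lem:eqSysCharacterizationOfAcceptingTrees} (a step the paper dismisses as ``easily seen'' and which you usefully spell out via the distributive law), and then compose the resulting $\trp_{i}(\X)=\DelSt'(\AccRun_{\X,X_{i}})$ with $s$ to get $\DelSt(\AccRun_{\X,s})=\Lang(\X)$. No gaps.
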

\begin{proof}
We identify $\X$ with the $(\pow,\FSigma)$-system
$\bigl((\seq{X_{\i}}{n}),\delta\colon X\kto\overline{\FSigma}X,s\colon 1\kto X\bigr)$, and
let $1=\{\bullet\}$.
The equational system $E_{\X}$ in Def.~\ref{def:acceptedLangParityTFSys}
is easily seen to coincide with (\ref{eq:eqSysCharacterizationOfAcceptingRuns:eqSys})
in Lem.~\ref{lem:eqSysCharacterizationOfAcceptingTrees}.
The claim is then shown as follows, exploiting the last coincidence.
\begin{align*}
  \trp(\X)
&
%  \overset{\text{Def.~\ref{def:acceptedLangParityTFSys}}}{=}
=
  [\seq{\trp_{\i}(\X)}{n}]\odot s(\bullet)
\qquad\text{by Def.~\ref{def:acceptedLangParityTFSys}}
  \\
&=
  [\seq{\DelSt'(\AccRun_{\X,X_{\i}})}{n}](s)
  \\
  &=
  \DelSt(\AccRun_{\X,s})
%  \overset{\text{Def.\ref{def:acceptedLangNPTA}}}{=}
  =
  \Lang(\X)
  \qquad\text{by Def.~\ref{def:acceptedLangNPTA}.}
  \qedhere
\end{align*}
\end{proof}

\section{Coincidence with the Conventional Definition: Probabilistic}
\label{sec:probParitySys}
In the probabilistic setting
the coincidence result is much more intricate.  Even the
well-definedness of parity trace semantics
(Def.~\ref{def:acceptedLangParityTFSys}) is nontrivial: the posets
$\Kl(\giry)(X_{i},Z)$ of our interest are not complete lattices, and
they even lack the greatest element $\top$. Therefore neither of
Lem.~\ref{lem:eqSysSolvableCompLat}--\ref{lem:eqSysSolvableOmegaCont}
ensures a solution of $E_{\X}$ in
Def.~\ref{def:acceptedLangParityTFSys}. As we hinted
in~\S{}\ref{subsec:eqsys} our strategy is:
1) to apply the 
Lem.~\ref{lem:eqSysSolvableOmegaCont} to the homset $\Kl(\giry)(X,1)$; and 2) to ``extend''
fixed points in $\Kl(\giry)(X,1)$ along a final $F$-sequence. Implicit 
in  the proof
details below, in fact, is  a correspondence between: abstract categorical
arguments along a final sequence; and concrete operational intuitions
on probabilistic parity automata.

In this section we let $\C=\Meas$, $T=\giry$ (Def.~\ref{def:powersetMonadAndSubGiryMonad}), and $F$ be a polynomial functor.

\begin{myremark}\label{rem:weAreInterestedInGenerativeProbSys}
The class of probabilistic systems of our
interest are \emph{generative} (as opposed to \emph{reactive}) ones.
Their difference is eminent in the types of transition functions:
\begin{displaymath}\textstyle
\begin{array}{llllllll}
  &X\longrightarrow \giry(A\times X)
  \quad\text{(word)}
  &
  & X\longrightarrow\textstyle \giry(\coprod_{\sigma\in\Sigma} X^{|\sigma|})
  \quad\text{(tree)}
  &&
  \quad\text{for generative;} \\ %\qquad
  &X\longrightarrow (\giry X)^{A}
  \quad\text{(word)}
  &&
  X\longrightarrow\textstyle \prod_{\sigma\in\Sigma}\giry( X^{|\sigma|})
  \quad\text{(tree)}
  &&
  \quad\text{for reactive.}
\end{array}
\end{displaymath}
A generative system (probabilistically) chooses which
character to \emph{generate}; while a reactive one \emph{receives} a
character from the environment.
Reactive variants of probabilistic tree automata have
been studied e.g.\ in~\cite{CarayolHS14ria}, following earlier works
 like~\cite{BaierG05} on reactive probabilistic word automata.
Further discussion is in Appendix~\ref{subsec:GenAndReactSys}.
\end{myremark}

\subsection{Trace Semantics of Parity $(\giry,F)$-Systems is Well-Defined}
\label{subsec:fixedPtProbParitySys}
In the following key lemma---that is inspired by the observations
in~\cite{Cirstea10git,Schubert09tcf,UrabeH15cit}---a
typical usage is for
$X_{A}=X_{1}+\cdots+X_{i}$ and $X_{B}=X_{i+1}+\cdots+X_{n}$.

\begin{mylemma}
\label{lem:inverseExists}
Let $\X=((X_1,\ldots,X_n),s,c)$ be a parity $(\giry,F)$-system, and
suppose that we are given a partition
$X=X_{A}+ X_{B}$ of $X:=X_1+\cdots+ X_n$.

We define a function
$\Gamma\colon\Kl(\giry)(X,Z)\to\Kl(\giry)(X,1)$ by $\Gamma(g)=J!_{Z}\odot g$,
where $!\colon Z\to 1$ is the unique function of the type.
Its variants $\Gamma_{A}:\Kl(\giry)(X_{A},Z)\to\Kl(\giry)(X_{A},1)$ and
$\Gamma_{B}:\Kl(\giry)(X_{B},Z)\to\Kl(\giry)(X_{B},1)$ are
defined similarly.

For arbitrary $g_{B}\colon X_{B}\kto Z$, we define 
$\mathfrak{G}^{g_{B}}$ and $\mathfrak{H}^{g_{B}}$ as the following sets
of ``fixed points'':
\begin{equation}\label{eq:mathfrakGAndMathfrakH}
  \vspace{.5em}
  \mathrlap{\raisebox{32pt}[0pt][0pt]{$\hspace{-.5em}\mathfrak{G}^{g_{B}}\coloneqq$}}
  \left\{
    \begin{array}{l}
      g_{A} \colon \\ X_{A} \kto Z
    \end{array}
  \,\middle|
    \vcenter{\xymatrix@R=.6em@C+2em{
      {\overline{F}X}
        \kar[r]^-{\mathstrut\smash{\oF [g_{A},g_{B}]}}
        \ar@{}[rd]|{=}
      &
      {\overline{F}Z}
        \kar[d]^{J\zeta^{-1}}
      \\
      {X_A}
        \kar[u]^{c_{A}}
        \kar[r]_-{\mathstrut\smash{g_{A}}}
      &
      {Z}
    }}
    \hspace{-0.25em}
  \right\}
  \text{ and }
  \mathrlap{\raisebox{32pt}[0pt][0pt]{$\hspace{-.5em}\mathfrak{H}^{g_{B}}\coloneqq$}}
  \left\{
    \begin{array}{l}
      h_{A} \colon \\ X_{A} \kto 1
    \end{array}
  \,\middle|
    \vcenter{\xymatrix@R=.6em@C+2em{
      {\overline{F}X}
        \kar[r]^-{\mathstrut\smash{\oF [h_{A},\Gamma_{B}(g_{B})]}}
        \ar@{}[rd]|{=}
      &
      {\overline{F}1}
        \kar[d]^{J\bang_{F1} }
      \\
      {X_{A}}
        \kar[u]^{c_{A}}
        \kar[r]_-{\mathstrut\smash{h_{A}}}
      &
      {1}
    }}
    \hspace{-0.25em}
  \right\}
\end{equation}
Then $\Gamma_{A}$ restricts to a function
$\mathfrak{G}^{g_{B}} \to \mathfrak{H}^{g_{B}}$.
Moreover, the restriction is an order isomorphism,
with its inverse denoted by
$\Delta^{g_{B}}\colon \mathfrak{H}^{g_{B}} \iso \mathfrak{G}^{g_{B}}$.
\qed
\end{mylemma}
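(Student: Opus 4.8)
The plan is to present $\Gamma_A$ and $\Delta^{g_B}$ as mutually inverse, order-preserving maps by decomposing every Kleisli arrow into $Z$ along the final $F$-sequence. Write $\Phi_{\mathfrak{G}}(g_A)=J\zeta^{-1}\odot\oF[g_A,g_B]\odot c_A$ and $\Phi_{\mathfrak{H}}(h_A)=J\bang_{F1}\odot\oF[h_A,\Gamma_B(g_B)]\odot c_A$, so that $\mathfrak{G}^{g_B}$ and $\mathfrak{H}^{g_B}$ are precisely the fixed-point sets of these monotone operators.

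First I would prove the intertwining $\Gamma_A\circ\Phi_{\mathfrak{G}}=\Phi_{\mathfrak{H}}\circ\Gamma_A$. Since cotupling commutes with postcomposition, $[\Gamma_A(g_A),\Gamma_B(g_B)]=J\bang_Z\odot[g_A,g_B]$, and functoriality of $\oF$ together with $\oF J=JF$ gives $\oF[\Gamma_A(g_A),\Gamma_B(g_B)]=J(F\bang_Z)\odot\oF[g_A,g_B]$. Finality of $1$ in $\C$ then yields $\bang_{F1}\circ F\bang_Z=\bang_{FZ}=\bang_Z\circ\zeta^{-1}$, so both $\Gamma_A(\Phi_{\mathfrak{G}}(g_A))$ and $\Phi_{\mathfrak{H}}(\Gamma_A(g_A))$ collapse to $J\bang_{FZ}\odot\oF[g_A,g_B]\odot c_A$. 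This identity, valid for all $g_A$ and not merely fixed points, immediately shows that $\Gamma_A$ carries $\mathfrak{G}^{g_B}$ into $\mathfrak{H}^{g_B}$; monotonicity of $\Gamma_A$ is clear from monotonicity of $\odot$ (Asm.~\ref{asm:assumptionsOnCTF}).

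Next I would build the inverse using the final sequence. Let $W_0=1$, $W_{j+1}=FW_j$, with connecting maps $p_0=\bang_{F1}$, $p_{j+1}=Fp_j$, and limit projections $\pi_j\colon Z\to W_j$ satisfying $\pi_0=\bang_Z$ and $\pi_{j+1}=F\pi_j\circ\zeta$. The crucial imported fact---this is where standard Borel-ness and \cite{Cirstea10git,Schubert09tcf} enter---is that $\giry$ preserves this $\omega^{\op}$-limit, so that an arrow $g\colon X_A\kto Z$ is the same datum as a compatible cone $(g^{(j)}\colon X_A\kto W_j)_j$ with $g^{(j)}=J\pi_j\odot g$ and $Jp_j\odot g^{(j+1)}=g^{(j)}$, this correspondence being an order isomorphism for the pointwise orders. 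Projecting $g_A=\Phi_{\mathfrak{G}}(g_A)$ by $J\pi_{j+1}$ and using $\pi_{j+1}\circ\zeta^{-1}=F\pi_j$ turns the fixed-point condition into the recursion
\[
 g_A^{(0)}=\Gamma_A(g_A),\qquad g_A^{(j+1)}=\oF[g_A^{(j)},g_B^{(j)}]\odot c_A\quad(g_B^{(j)}:=J\pi_j\odot g_B),
\]
which determines the entire cone of $g_A$ from $\Gamma_A(g_A)$ and the fixed $g_B$.

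Finally I would define $\Delta^{g_B}(h_A)$ to be the arrow whose cone is produced by the same recursion started from $g_A^{(0)}=h_A$. One checks by induction that this cone is compatible: the base case $Jp_0\odot g_A^{(1)}=\Phi_{\mathfrak{H}}(h_A)=h_A$ uses exactly $h_A\in\mathfrak{H}^{g_B}$, and the step uses $Jp_j=\oF(Jp_{j-1})$ together with compatibility of $g_B$'s own cone. The resulting $g_A=\Delta^{g_B}(h_A)$ then satisfies $\Gamma_A(g_A)=J\pi_0\odot g_A=h_A$, and, comparing projections level by level, satisfies $\Phi_{\mathfrak{G}}(g_A)=g_A$, i.e.\ lies in $\mathfrak{G}^{g_B}$; conversely the recursion reconstructs any $g_A\in\mathfrak{G}^{g_B}$ from $\Gamma_A(g_A)$, giving both $\Delta^{g_B}\circ\Gamma_A=\id$ and $\Gamma_A\circ\Delta^{g_B}=\id$. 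Monotonicity of $\Delta^{g_B}$ follows because the recursion is built from the monotone operations $\oF$, $\odot$ and cotupling and the cone correspondence reflects order. The main obstacle is this cone correspondence itself: establishing that $\giry$ preserves the final $\omega^{\op}$-limit and that a subprobability measure on $Z$ is faithfully and order-isomorphically recorded by its projections to the $W_j$. This is precisely the role of the standard Borel hypothesis and is the delicate measure-theoretic core borrowed from \cite{Cirstea10git,Schubert09tcf}.
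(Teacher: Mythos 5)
Your proposal is correct and follows essentially the same route as the paper's proof: the inverse $\Delta^{g_B}$ is built by the identical recursion $\gamma^A_0=h_A$, $\gamma^A_{k+1}=\oF[\gamma^A_k,J\pi_k\odot g_B]\odot c_A$ along the final sequence, the cone compatibility and the two inverse identities are checked level by level exactly as in the appendix, and the delicate input you isolate (that $J$ carries the $\omega^{\op}$-limit to a $2$-limit in $\Kl(\giry)$, via the standard Borel hypothesis and \cite{Schubert09tcf}) is precisely the paper's Sublemma on limit preservation. The only cosmetic difference is that you make the intertwining $\Gamma_A\circ\Phi_{\mathfrak{G}}=\Phi_{\mathfrak{H}}\circ\Gamma_A$ explicit up front, whereas the paper verifies the corresponding instances inline.
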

\noindent
In the proof of the last lemma (deferred to the appendix),
the inverse $\Delta^{g_{B}}$ is defined by ``extending''
$h_{A} \colon X_{A} \kto 1$ to $X_{A}\kto Z$, along
the final $F$-sequence $1\leftarrow F 1\leftarrow\cdots$
(more precisely: the image of the sequence under
the Kleisli inclusion $J\colon \Meas\to\Kl(\giry)$).

% Here we obtain $h_{A}$ by extending $g_{A}$ along a
% \emph{final sequence}~\cite{Worrell05otf}. For example:
% the maximal trace---i.e.\ the accepted language with the trivial
% acceptance condition---on the right corresponds to the
% \emph{no-deadend} probability (Def.~\ref{def:noDeadendProb}) on the
% left; and 
%  the acceptance
% language under a parity condition, on the right, corresponds to the
% probability of generating a successful run, on the left.

% In this lemma, we also parameterize the above observation by
% $g_{B}$---that corresponds to unsolved parameters on
% a equational system---to cope with nested fixed points.
% This parameter indeed helps us,
% thanks to our another result of Lem.~\ref{lem:eqSysWithFixedPtIso}.
% The proofs of those lemmas are deferred to Appendix~\ref{sec:omittedProofs}
% for space reasons; nevertheless we take those as one of
% our main technical contributions.

We are ready to prove existence of $E_{\X}$'s solution (Def.~\ref{def:acceptedLangParityTFSys}).
\begin{mylemma}
\label{lem:eqSysCoincidenceTraceAndAccProb}
Assume the same setting as in Lem.~\ref{lem:inverseExists}.
We define $\Phi_{\X}\colon \Kl(\giry)(X,Z)\kto\Kl(\giry)(X,Z)$ and
$\Psi_{\X}\colon \Kl(\giry)(X,1)\kto\Kl(\giry)(X,1)$, respectively, by
\begin{equation*}
  \Phi_{\X}(g)
  \;\coloneqq\;
  J\zeta^{-1} \kco \overline{F}g \kco  c
  \quad\text{and}\quad
  \Psi_{\X}(h)
  \;\coloneqq\;
  J\bang_{F1} \kco \overline{F}h \kco c
  \;;
\end{equation*}
these are like the diagrams in~(\ref{eq:mathfrakGAndMathfrakH}), 
except that the latter are parametrized by $X_A, X_B, g_{B}$.
Now consider the following equational systems, where:
 $\eta_{i}=\mu$ if $i$ is odd and $\eta_{i}=\nu$ if $i$ is even; 
$u_{i}$ ranges over $\Kl(\giry)(X_{i},Z)$; and 
$u'_{i}$ ranges over $\Kl(\giry)(X_{i},1)$.
  \begin{equation}\label{eq:eqSysCoincidenceTraceAndAccProb:eqSysTrace}
    E
    =
    \left[
      \begin{array}{c}
        u_{1}
        =_{\eta_{1}}
        \Phi_{\X}([u_{1},\dotsc,u_{n}]) \kco \kappa_{1}
        \\
        \vdots
        \\
        u_{n}
        =_{\eta_{n}}
        \Phi_{\X}([u_{1},\dotsc,u_{n}]) \kco \kappa_{n}
      \end{array}
    \right]
    \quad
    E'
    =
    \left[
      \begin{array}{c}
        u'_{1}
        =_{\eta_{1}}
        \Psi_{\X}([u'_{1},\dotsc,u'_{n}]) \kco \kappa_{1}
        \\
        \vdots
        \\
        u'_{n}
        =_{\eta_{n}}
        \Psi_{\X}([u'_{1},\dotsc,u'_{n}]) \kco \kappa_{n}
      \end{array}
    \right]
  \end{equation}
We claim that the equational systems have solutions
$(\seq{l^{\sol}_{\i}}{n})$ and $(\seq{l'^{\sol}_{\i}}{n})$;
and moreover, we have
$\Gamma(\trp(\X))=\Gamma([\seq{l^{\sol}_{\i}}{n}])=[\seq{l'^{\sol}_{\i}}{n}]$.
\qed
\end{mylemma}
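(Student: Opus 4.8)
The plan is to solve the $1$-valued system $E'$ first, by a Kleene-style argument, and then transport its solution to the $Z$-valued system $E$ through the family of isomorphisms supplied by Lemma~\ref{lem:inverseExists}. For $E'$ I would verify the hypotheses of Lemma~\ref{lem:eqSysSolvableOmegaCont} on each poset $\Kl(\giry)(X_{i},1)\cong\Meas(X_{i},[0,1])$. With the pointwise order this carries a least element (the constant-$0$ map) together with suprema and infima of $\omega$-chains, all computed pointwise; measurability survives because a pointwise monotone limit of measurable functions is measurable, and the values stay in $[0,1]$. Hence each $\Kl(\giry)(X_{i},1)$ is simultaneously a pointed $\omega$-cpo and a pointed $\omega^{\op}$-cpo. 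It then remains to check that each one-step map $\Psi_{\X}([\,\cdot\,])\kco\kappa_{i}$ is both $\omega$- and $\omega^{\op}$-continuous. Since $\Psi_{\X}$ is assembled from $\oF$ and Kleisli composition, this reduces to continuity of those two operations; for the Giry monad the required preservation of increasing (resp.\ decreasing) limits follows from the monotone convergence theorem (resp.\ dominated convergence, with the constant $1$ as the integrable dominating function, legitimate because we work with sub\-probability measures). This is exactly the analytic input hinted at in~\S\ref{subsec:eqsys}, and it produces the solution $(\seq{l'^{\sol}_{\i}}{n})$ of $E'$.

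The heart of the argument is to match this with a solution of $E$. First observe that the sets $\mathfrak{G}^{g_{B}}$ and $\mathfrak{H}^{g_{B}}$ of Lemma~\ref{lem:inverseExists} are precisely the fixed-point sets of the $X_{A}$-restrictions of $\Phi_{\X}$ and $\Psi_{\X}$ once the $X_{B}$-component is frozen: indeed $g_{A}\in\mathfrak{G}^{g_{B}}$ iff $g_{A}=\Phi_{\X}([g_{A},g_{B}])\kco\kappa_{A}$, and symmetrically for $\mathfrak{H}^{g_{B}}$ with $\Gamma_{B}(g_{B})$ in place of $g_{B}$. Because $\Gamma_{A}\colon\mathfrak{G}^{g_{B}}\iso\mathfrak{H}^{g_{B}}$ is an \emph{order} isomorphism, it carries least to least and greatest to greatest, so it identifies the $\eta$-extremal fixed point of the $E$-block with the $\eta$-extremal fixed point of the corresponding $E'$-block. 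Solving $E$ and $E'$ as in Def.~\ref{def:eqSys}, eliminating one variable at a time and propagating interim solutions, I would run an induction on the number of equations, at each stage taking $X_{A}$ to be the block currently eliminated and $X_{B}$ the remaining variables, and using $\Delta^{g_{B}}$ to transport the interim $E'$-solution back to an $E$-solution. Crucially, existence of the $E$-side extremal fixed point is not argued directly---the homsets $\Kl(\giry)(X_{i},Z)$ are not complete lattices---but is \emph{obtained} by transporting the $E'$-side fixed point, whose existence step~1 already supplies, through the isomorphism. Packaging these isomorphisms into a homomorphism of equational systems and invoking Lemma~\ref{lem:eqSysWithFixedPtIso} then yields both the solution $(\seq{l^{\sol}_{\i}}{n})$ of $E$ and the componentwise coincidence $\Gamma_{i}(l^{\sol}_{i})=l'^{\sol}_{i}$; since Kleisli composition distributes over cotupling, this gives $\Gamma([\seq{l^{\sol}_{\i}}{n}])=[\Gamma_{i}(l^{\sol}_{i})]_{i}=[\seq{l'^{\sol}_{\i}}{n}]$, and the displayed chain for $\trp(\X)$ is a direct repackaging via $\trp_{i}(\X)=l^{\sol}_{i}$ and functoriality of $\Gamma=J\bang_{Z}\kco(\place)$ (and, where present, pre-composition by $s$).

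I expect the main obstacle to be the coherence of the parametrized isomorphisms $\Delta^{g_{B}}$ as the frozen block $g_{B}$ varies. Lemma~\ref{lem:inverseExists} gives, for each fixed $g_{B}$, an isomorphism of fixed-point sets, but the iterative solution of an alternating system threads these together: an interim solution for an inner block must be substituted into the outer equations, and one must check that substituting on the $Z$-side and then applying $\Gamma$ agrees with applying $\Gamma$ first and substituting on the $1$-side. Concretely this amounts to showing that the family $\{\Delta^{g_{B}}\}$ commutes with the $\mu$/$\nu$ fixed-point operators at every level of the nesting, so that the alternation is respected uniformly; this is precisely where the description of $\Delta^{g_{B}}$ as an \emph{extension along the final $F$-sequence} $1\leftarrow F1\leftarrow\cdots$ does the real work. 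Verifying this compatibility is the delicate step; by comparison the cpo/continuity bookkeeping of step~1 and the lattice-theoretic preservation of extremal fixed points are routine.
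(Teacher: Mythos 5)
Your proposal is correct and follows essentially the same route as the paper: solve $E'$ via Lem.~\ref{lem:eqSysSolvableOmegaCont} on the $\omega$/$\omega^{\op}$-cpo's $\Kl(\giry)(X_{i},1)$, then transport to $E$ by instantiating Lem.~\ref{lem:eqSysWithFixedPtIso} with $\varphi=\Gamma$ and $\psi=\Delta^{g_{B}}$, where Cond.~\ref{item:homOfEqSysCompatibility} is the commutation $\Gamma\co\Phi_{\X}=\Psi_{\X}\co\Gamma$ and Cond.~\ref{item:homOfEqSysStrongRestriction} is exactly Lem.~\ref{lem:inverseExists}. The ``coherence of $\Delta^{g_{B}}$ across nesting levels'' that you flag as delicate is precisely what the inductive proof of Lem.~\ref{lem:eqSysWithFixedPtIso} already packages, so no further work is needed there.
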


\begin{mytheorem}\label{thm:constParityTraceSituGiry}
$T=\giry$ and a polynomial  $F$  constitute a parity trace situation  (Def.~\ref{def:acceptedLangParityTFSys}).
\qed
\end{mytheorem}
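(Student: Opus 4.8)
The plan is to verify the defining condition of a parity trace situation (Def.~\ref{def:acceptedLangParityTFSys}): namely that the equational system $E_{\X}$ admits a solution for every parity $(\giry,F)$-system $\X$. I would reduce this existence claim to Lemma~\ref{lem:eqSysCoincidenceTraceAndAccProb}. First I would observe that $E_{\X}$ is literally the system $E$ appearing in~(\ref{eq:eqSysCoincidenceTraceAndAccProb:eqSysTrace}): unfolding the abbreviation $\Phi_{\X}(g)=J\zeta^{-1}\kco\oF g\kco c$ and precomposing with the coprojection $\kappa_i$ turns the $i$-th equation into $u_i=_{\eta_i}J\zeta^{-1}\odot\oF[u_1,\dots,u_n]\odot c_i$, which is exactly the $i$-th line of $E_{\X}$. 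Thus a solution of $E$ is a solution of $E_{\X}$, and since Lemma~\ref{lem:eqSysCoincidenceTraceAndAccProb} asserts that $E$ has a solution $(l^{\sol}_1,\dots,l^{\sol}_n)$, the theorem follows at once.

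The substance therefore lies entirely in why $E$ is solvable, and here I would follow the two-step strategy announced in~\S\ref{subsec:fixedPtProbParitySys}. The obstruction is that the posets $\Kl(\giry)(X_i,Z)$ are \emph{not} complete lattices and even lack a top element (cf.\ Example~\ref{example:KlGiryX1IsNotACompleteLattice}), so neither Lemma~\ref{lem:eqSysSolvableCompLat} nor a direct application of Lemma~\ref{lem:eqSysSolvableOmegaCont} to $\Kl(\giry)(X_i,Z)$ delivers the required fixed points. The workaround is to pass to the ``collapsed'' homsets $\Kl(\giry)(X_i,1)$, which one checks are pointed $\omega$- and $\omega^{\op}$-cpos with $\omega$/$\omega^{\op}$-continuous defining maps; Lemma~\ref{lem:eqSysSolvableOmegaCont} then solves the companion system $E'$ there, producing $(l'^{\sol}_1,\dots,l'^{\sol}_n)$. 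It remains to transport this solution back to $\Kl(\giry)(X_i,Z)$, and this is exactly what the order isomorphisms $\Delta^{g_B}\colon\mathfrak H^{g_B}\iso\mathfrak G^{g_B}$ of Lemma~\ref{lem:inverseExists}---the ``extension'' of a map into $1$ to a map into $Z$ along the final $F$-sequence---accomplish.

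The hard part will be showing that this lifting is compatible with the \emph{nested, alternating} fixed-point computation of Def.~\ref{def:eqSys}, so that the transported tuple is genuinely a solution of $E$ and not merely a pointwise fixed point of each individual equation. Concretely, one must argue that the left-to-right elimination of variables commutes with applying $\Gamma$ and its fibrewise inverses $\Delta^{g_B}$: at each stage the interim greatest/least fixed point computed in the $Z$-homset must map, under $\Gamma$, onto the corresponding interim fixed point computed in the $1$-homset. Since each $\Delta^{g_B}$ is an order isomorphism it transports suprema and infima, and I would combine this with the homomorphism lemmas for equational systems in Appendix~\ref{sec:appendixEqSys} (notably Lemma~\ref{lem:eqSysWithFixedPtIso}) to push the solution of $E'$ through the solution procedure of $E$ step by step. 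This compatibility is precisely the content packaged into the conclusion $\Gamma(\trp(\X))=\Gamma([l^{\sol}_1,\dots,l^{\sol}_n])=[l'^{\sol}_1,\dots,l'^{\sol}_n]$ of Lemma~\ref{lem:eqSysCoincidenceTraceAndAccProb}; granting that lemma, the present theorem is immediate.
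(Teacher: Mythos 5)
Your proposal is correct and follows essentially the same route as the paper: the theorem is immediate from Lemma~\ref{lem:eqSysCoincidenceTraceAndAccProb} once one notes that $E$ in~(\ref{eq:eqSysCoincidenceTraceAndAccProb:eqSysTrace}) is exactly $E_{\X}$, and your sketch of that lemma's proof (solving $E'$ over $\Kl(\giry)(X_i,1)$ via Lemma~\ref{lem:eqSysSolvableOmegaCont}, then transporting the solution back through the order isomorphisms of Lemma~\ref{lem:inverseExists} using the equational-system homomorphism Lemma~\ref{lem:eqSysWithFixedPtIso}) matches the paper's argument in the appendix.
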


\begin{myremark}
The process-theoretic interpretation of the isomorphism
$\mathfrak{G}^{g_{B}} \cong \mathfrak{H}^{g_{B}}$ is interesting.
Let us set $X_{A}=X$ and $X_{B}=\emptyset$ for simplicity.
The greatest element on the left is the \emph{infinitary trace semantics}
(i.e.\ accepted languages under the trivial acceptance condition), as in
Thm.~\ref{thm:infinitaryTrace} (cf.\ Table~\ref{table:traceAndSim}).
The corresponding greatest element on the right---a function
$h_{A}\colon X_{A}\to \giry 1\cong [0,1]$---assigns to each state $x\in X$
the probability with which a run from $x$ \emph{does not diverge}
(recall from Rem.~\ref{rem:deadendProb} that the \emph{sub-}Giry monad $\giry$
allows divergence probabilities).
The accepted language under the  parity condition is
in general an element of $\mathfrak{G}^{g_{B}}$ that is neither greatest nor least;
the corresponding element in $\mathfrak{H}^{g_{B}}$ assigns to each state
the probability with which it generates a \emph{accepting} run (over any $\Sigma$-tree).
\end{myremark}

\subsection{Probabilistic Parity Tree Automata and Its Languages}
\label{subsec:PPTAConventionally}
%As usual we first need to prove that the constructs and properties of
%our interest are all measurable. The following definition is standard
%(see e.g.~\cite{CarayolHS14ria}), and is precisely stated in
%Appendix~\ref{sec:GenProbBuechiTreeAutom}.

\begin{mydefinition}[PPTA]\label{def:generativeProbParityTreeAutom}
A \emph{(generative) probabilistic parity tree automaton (PPTA)} is  
\begin{displaymath}
 \X
 \;=\;
 \bigl(\,
   (X_{1},\ldots, X_{n}),\,
   \Sigma,\,
   \delta\colon X\to \giry\bigl(\textstyle\coprod_{\sigma\in\Sigma} X^{|\sigma|}\bigr),\,
   s\in \giry X
\,\bigr)
\enspace,
\end{displaymath}
where 
$X=X_1+\cdots+X_n$,
each $X_i$ is a countable set and $\Sigma$ is a countable ranked alphabet.
The subdistribution $s$ over $X$ is for the choice of \emph{initial states}.
\end{mydefinition}
% Recall that use of the \emph{sub-}Giry monad $\giry$  allows
% divergence with certain probabilities. 
% Note the use of the sub-Giry monad $\giry$ in
% Def.~\ref{def:generativeProbParityTreeAutom}: transitions, as well as
% initial states, are given by \emph{sub}-distributions in general. In
% case the whole space gets assigned a probability strictly less than $1$,
% the missing probability is understood as the one with which the
% automaton gets into \emph{deadend}.
In Def.~\ref{def:generativeProbParityTreeAutom} the size restrictions on
$X$ and $\Sigma$ are not essential: restricting to discrete
$\sigma$-algebras, however, makes the following arguments much simpler.

We shall concretely define accepted languages of PPTAs,
continuing~\S{}\ref{sec:nondetParitySys} and deferring precise definitions to
Appendix~\ref{sec:detailsOfTreeRunAcc}.
This is mostly standard; a reactive variant is found in~\cite{CarayolHS14ria}.

\begin{mydefinition}[$\myTree_{\Sigma}$ and $\Run_{\X}$]\label{def:measurableStrOfTreeAndRunBriefly}
% We shall go ahead to introduce the measurable structure (i.e.\ a
% $\sigma$-algebra) to the set $\myTree_{\Sigma}$ of $\Sigma$-trees, and
% to $\Run_\X $ of runs. The definitions here are
% straightforward generalization of those in~\cite[\S{}4]{CarayolHS14ria}. 
Let $\Sigma$ be a ranked alphabet; $\myTree_{\Sigma}$ is the set of
 $\Sigma$-trees. 
A finite $(\Sigma \cup \{\ast\})$-labeled tree $\lambda$, 
with its  branching degrees compatible with the label arities, 
is called a \emph{partial $\Sigma$-tree}. 
Here
the new symbol $\ast$ (``continuation'') is deemed to be $0$-ary.
The \emph{cylinder set} 
%in $\myTree_{\Sigma}$ 
associated to $\lambda$, denoted by
$\Cyl_{\Sigma}(\lambda)$, is the set of (non-partial) $\Sigma$-trees
that have $\lambda$ as their prefix (in the sense that a subtree is
replaced by $\ast$).
The (smallest) $\sigma$-algebra on $\myTree_{\Sigma}$ generated by the family
$\{\Cyl_{\Sigma}(\lambda)\mid \text{$\lambda$ is a partial $\Sigma$-tree}\}$
will be denoted by $\mathfrak{F}_{\Sigma}$.

A \emph{run} of a PPTA $\X$ with state space $X$ is a (possibly infinite)
 $(\Sigma\times X)$-labeled tree whose branching degrees are compatible
 with the arities of $\Sigma$-labels. $\Run_\X$ denotes the set of runs.
The measurable structure $\mathfrak{F}_{\X}$ on
$\Run_{\X}$ is defined analogously to $\mathfrak{F}_{\Sigma}$: a \emph{partial run} 
$\xi$ of $\X$ is a suitable  $(\Sigma \cup \{\ast\}) \times X$-labeled
tree; it generates a \emph{cylinder set} $\Cyl_{\X}(\xi)\subseteq\Run_{\X}$;
and these cylinder sets generate the $\sigma$-algebra
$\mathfrak{F}_{\X}$. 
% its leaves are labeled by $\ast$ or $\sigma$ such that
% $|\sigma| = 0$, and it is the ``prefix'' of
% a $\Sigma$-tree $\tau$ (expect $\ast$-labeled nodes).
%
% A \emph{partial run} $\xi$, that is
% a $(\Sigma \cup \{\ast\}) \times X$-labeled tree,
% is defined similary.
% For a partial run $\xi$ of $\X$, the \emph{cylinder set}
% $\Cyl_{\X}(\xi)\subseteq \Run_\X $ associated to $\xi$, and
% the $\sigma$-algebra over $\Run_\X$ $\mathfrak{F}_{\X}$
% generated by $\Cyl_{\X}(\xi)$, are defined similarly.
Finally, the set $\AccRun_{\X}$ of \emph{accepting runs} consists of all
 those runs all branches of which satisfy the (usual) \emph{parity
 acceptance condition}
(namely: $\max\{i\mid\pi$ visits states in $X_{i}$ infinitely often$\}$ is
 even). 
\end{mydefinition}

% We are interested in the probability with which an automaton
% $\X$ generates an \emph{accepting} run. Therefore we need the
% following result; it is much like~\cite[Lem.~36]{CarayolHS14ria} and
% hardly novel.

The following result is much like~\cite[Lem.~36]{CarayolHS14ria} and hardly novel.
\begin{mylemma}\label{lem:acceptingRunsAreMeasurable}
The set $\AccRun_{\X}$ of accepting runs is
an $\mathfrak{F}_{\X}$-measurable subset of $\Run_{\X}$.
\qed
\end{mylemma}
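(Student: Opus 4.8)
The plan is to exhibit $\AccRun_{\X}$ as a countable Boolean combination of measurable ``atomic'' events, exploiting that a run is completely determined by its labelling on the \emph{countable} set of tree positions. Identify each run $\rho$ with its labelling function on the prefix-closed domain $\Dom(\rho)\subseteq\nat^{*}$; by Def.~\ref{def:measurableStrOfTreeAndRunBriefly} the $\sigma$-algebra $\mathfrak{F}_{\X}$ is precisely the cylinder ($=$ product) $\sigma$-algebra for this labelling. Hence, for each position $w\in\nat^{*}$, the event $D_{w}:=\{\rho\mid w\in\Dom(\rho)\}$ and, for each priority $i\in[1,n]$, the event $P_{w,i}:=\{\rho\mid w\in\Dom(\rho)\text{ and the state-label of }\rho\text{ at }w\text{ lies in }X_{i}\}$ are $\mathfrak{F}_{\X}$-measurable: each is a countable union, over the countably many $\sigma\in\Sigma$ and $x\in X_{i}$, of cylinder sets $\Cyl_{\X}(\xi)$.

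Next I would decompose the rejecting set. A run is rejecting iff it carries an infinite branch $\pi$ along which the $\limsup$ of priorities is odd. Since the maximal priority occurring infinitely often along $\pi$ equals this $\limsup$, the value being an odd $i$ means exactly that $\pi$ visits $X_{i}$ infinitely often and, from some node $v$ on, visits only priorities $\le i$. Taking the union over the odd $i\in[1,n]$ and over the branching node $v\in\nat^{*}$ turns the ``from some node on'' into a countable disjunction:
\[
  \Run_{\X}\setminus\AccRun_{\X}
  \;=\;
  \bigcup_{i\text{ odd}}\ \bigcup_{v\in\nat^{*}} B_{v,i}\,,
\]
where $B_{v,i}$ is the set of runs admitting an infinite path below $v$ that stays in priorities $\le i$ and meets priority $i$ infinitely often. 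Exactly one odd $i$ and at least one $v$ witness each rejecting branch, so this decomposition is faithful.

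The remaining task, and the crux, is to remove the branch quantifier hidden in $B_{v,i}$: ``there exists an infinite branch'' is a priori an uncountable (analytic-looking) quantifier, and the argument hinges on replacing it by a countable-quantifier condition via \emph{finite branching}. Let $S_{v,i}(\rho)$ be the set of positions $w$ with $v\wordprefix w$, $w\in\Dom(\rho)$, such that every $u$ with $v\wordprefix u\wordprefix w$ carries priority $\le i$; this subtree is finitely branching, since each node has only arity-many children. By a König-type argument, $S_{v,i}(\rho)$ contains an infinite path meeting priority $i$ infinitely often iff for every $k\in\nat$ it contains a node $w$ with at least $k$ priority-$i$ nodes on the segment from $v$ to $w$. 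The right-hand condition quantifies only over $k\in\nat$ and $w\in\nat^{*}$, and for fixed $k,w$ it mentions only the finitely many prefixes of $w$; it is therefore a countable intersection (over $k$) of countable unions (over $w$) of finite Boolean combinations of the events $D_{(\cdot)}$ and $P_{(\cdot),(\cdot)}$, hence measurable. Consequently each $B_{v,i}$, and so $\Run_{\X}\setminus\AccRun_{\X}$, lies in $\mathfrak{F}_{\X}$, and taking the complement finishes the proof. The delicate points to verify carefully are the König equivalence itself (that the pruned subtree's infinite paths are exactly the witnesses of $\limsup=i$) and that every atomic predicate employed references only finitely many positions, so that all quantifiers genuinely range over countable index sets.
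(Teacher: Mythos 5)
The paper offers no proof of this lemma at all---it is stated with a \qed{} and a pointer to \cite{CarayolHS14ria}---so your attempt has to be judged on its own terms, and it contains a genuine gap, located exactly at the step you flagged as delicate. Your atomic events $D_w$, $P_{w,i}$ are indeed measurable, and the decomposition of the rejecting set into $\bigcup_{i\text{ odd}}\bigcup_{v}B_{v,i}$ is correct; but the K\"onig-type equivalence is \emph{false} whenever the pruned tree $S_{v,i}(\rho)$ may contain priorities strictly below $i$, i.e.\ for every odd $i\geq 3$. Concretely, with a single binary symbol (recall that runs of a PPTA are \emph{not} constrained by $\delta$), build a run in which $S_{v,i}(\rho)$ consists of an infinite spine of priority-$2$ nodes, where the $k$-th spine node spawns a finite descending chain of $k$ priority-$i$ nodes, and all remaining children are roots of full priority-$2$ subtrees. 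Then for every $k$ there is a node $w$ with at least $k$ priority-$i$ nodes on $[v,w]$, yet every infinite branch of $\rho$ has limsup priority $2$: the run is \emph{accepting}, while your formula classifies it as rejecting. K\"onig's lemma does give an infinite path in $S_{v,i}(\rho)$, but it gives no control over which priorities that path visits infinitely often; ``$A$-visits of unbounded finite number'' is strictly weaker than ``infinitely many $A$-visits along one branch.'' This is precisely the phenomenon the paper itself isolates in Remark~\ref{rem:cousotcousotseqDoNotMatchStepByStep}. (Your argument \emph{is} correct for B\"uchi systems, $n\leq 2$: there the only odd priority is $1$, so $S_{v,1}(\rho)$ consists solely of priority-$1$ nodes, and ``infinite path hitting $1$ infinitely often'' really is equivalent to ``$S_{v,1}(\rho)$ is infinite,'' which K\"onig turns into your countable condition.)

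Moreover, the gap cannot be repaired within your strategy: for $n\geq 3$, no countable Boolean combination of cylinder events can equal $\AccRun_{\X}$, because $\AccRun_{\X}$ is in general not Borel. Take $n=3$, $X_{1}=\emptyset$, $X_{2}=\{x_{2}\}$, $X_{3}=\{x_{3}\}$, $\Sigma$ a single binary symbol, and code $s=s_{1}\cdots s_{m}\in\nat^{*}$ as the binary position $1^{s_{1}}0\cdots 1^{s_{m}}0$. Given a prefix-closed $T\subseteq\nat^{*}$, let $\rho_{T}$ be the full binary run whose nodes at coded positions of elements of $T$ carry $x_{3}$ and all other nodes carry $x_{2}$. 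The coding preserves and reflects the prefix order, so an infinite chain of marked positions lies on a single branch and conversely; hence $T$ has an infinite branch iff $\rho_{T}$ has a branch visiting $X_{3}$ infinitely often, i.e.\ iff $\rho_{T}$ is rejecting. Since $T\mapsto\rho_{T}$ is continuous and ill-foundedness of trees on $\nat$ is $\Sigma^{1}_{1}$-complete, the set of rejecting runs is $\Sigma^{1}_{1}$-complete, so $\AccRun_{\X}$ is $\Pi^{1}_{1}$-complete and does not belong to $\mathfrak{F}_{\X}$ at all. What survives beyond the B\"uchi fragment is only that $\AccRun_{\X}$ is coanalytic, hence \emph{universally} measurable (measurable for the completion of $\mu^{\Run}_{\X}$); any correct treatment of the general parity case---including the cited \cite{CarayolHS14ria}---has to be read in that completed sense, not as membership in the cylinder $\sigma$-algebra that both the lemma's statement and your proof target.
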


In the following   $\NDL_{\X}(x)$ is the probability with which an
execution from $x$ does not diverge:
since we use the \emph{sub}-Giry monad 
(Def.~\ref{def:generativeProbParityTreeAutom}),
a PPTA can exhibit divergence.
%This is used to define  probabilistic accepted languages.
% Finally, we are ready to define the subprobability measures on runs 
%  and $\Sigma$-trees,
% induced by a probabilistic B\"uchi tree automaton.
\begin{mydefinition}
%[$\Lang$ for PPTAs]
[$\mu_{\X}^{\Run}$ over $\Run^\giry_{\X}$]
\label{def:NoDivergence}
%\label{def:LangForPPTAs}
Let $\X=((X_1,\ldots,X_n),\Sigma,\delta,s)$ be a PPTA.
%, and $X=\coprod_{i}X_{i}$.

 Firstly, for each $k\in \nat$, let
 $\NDL_{\X,k}\colon X\to [0,1]$ (``no divergence in $k$ steps'') be
 defined inductively by:
 \begin{math}
     \NDL_{\X,0}(x)
   \coloneqq
    1
 \end{math} and
 \begin{equation*}%\label{eq:10151159}
  \begin{aligned}
    % &\NDL_{\X,0}(x)
    % &&\hspace{-1em}\coloneqq
    % 1\enspace,
    % \\
    &\NDL_{\X,k+1}(x)
    &&\hspace{-1em}\coloneqq{}
    \hspace{-1em}
    \textstyle
    \sum\limits_{
      (\sigma,(\seq{x_{\i}}{|\sigma|}))
      \in \coprod_{\sigma\in \Sigma} X^{|\sigma|}
    }
    \hspace{-1em}
    \delta(x)\bigl(\sigma,(\seq{x_{\i}}{|\sigma|})\bigr)\cdot
    \prod_{i\in[1,|\sigma|]}
    \NDL_{\X,k} (x_{i})
    \,.
  \end{aligned}
 \end{equation*}
 We define $\NDL_{\X}(x)\coloneqq \bigwedge_{k\in\nat}\NDL_{\X,k}(x)$.

Secondly we  define a subprobability measure
 $\mu_{\X}^{\Run}$ over  $\Run_{\X}$. It is given by
 % We define a subprobability measure $\mu_{\X}^{\Run_{\X}}$ on the measurable set
 %   $(\Run_{\X},\mathfrak{F}_{\X})$ by
 \begin{equation}\label{eq:01171719}
 \small
\begin{aligned}
   &\mu_{\X}^{\Run}(\Cyl_{\X}(\xi))
  \coloneqq
  s\bigl(\myroot(\xi)\bigr)\cdot P_{\X}(\xi)  
 \quad\text{for each partial run $\xi$, where $P_{\X}(\xi)$ is given by}
 \\
  &P_{\X}(\xi)
  \coloneqq
  \begin{cases}
    \NDL_{\X}(x)
    &\text{if $\xi=\bigl((\ast,x)\bigr)$;}
    \\
    \delta(x)\bigl(
      \sigma,\bigl(\seq{\myroot(\xi_{\i})}{|\sigma|}\bigr)
    \bigr)
    \cdot
    \textstyle\prod_{i\in [1,|\sigma|]}
    P_{\X}(\xi_{i})
    &\text{if $\xi=\bigl((\sigma,x),(\seq{\xi_{\i}}{|\sigma|})\bigr)$.}
  \end{cases}
\end{aligned} 
\end{equation}
 %here $s\in \giry X$ is $\X$'s distribution of initial states; and
\auxproof{ here $P_{\X}(\xi)$---the
 probability with which $\X$ generates a run that has $\xi$ as a prefix---is
 defined in the following bottom-up manner.
 \begin{equation*}%\label{eq:10142322}
  P_{\X}(\xi)
  \coloneqq
  \begin{cases}
    \NDL_{\X}(x)
    &\text{if $\xi=\bigl((\ast,x)\bigr)$;}
    \\
    \delta(x)\bigl(
      \sigma,\bigl(\seq{\myroot(\xi_{\i})}{|\sigma|}\bigr)
    \bigr)
    \cdot
    \textstyle\prod_{i\in [1,|\sigma|]}
    P_{\X}(\xi_{i})
    &\text{if $\xi=\bigl((\sigma,x),(\seq{\xi_{\i}}{|\sigma|})\bigr)$}
  \end{cases}
 \end{equation*}
}
The above extends to a  measure thanks to Carath\'{e}odory's theorem.
See Lem.~\ref{lem:muXOnRunXExt}.

 Thirdly we introduce a measure $\mu^{\myTree}_{\X}$ over
 $\myTree_{\Sigma}$ (``which trees are generated by what
probabilities''). It is a \emph{push-forward
measure} of $\mu_{\X}^{\Run}$ along $\DelSt\colon \Run_{\X}\to
 \myTree_{\Sigma}$:
% $\mu_{\X}^{\myTree_{\Sigma}}(\Cyl(\lambda))$ by 
\begin{equation}\label{eq:01171734}
  \mu_{\X}^{\myTree}(\Cyl_{\Sigma}(\lambda))
  \;\coloneqq\;
  \mu_{\X}^{\Run}
  \left(\,
    \DelSt^{-1}(\Cyl_{\Sigma}(\lambda))
    \cap
    \AccRun_{\X}
  \,\right)
  \quad\text{for each  partial $\Sigma$-tree $\lambda$.}
\end{equation}

Since $X$ is countable  $\DelSt$ is easily seen to be measurable.
\auxproof{
$\DelSt^{-1}(\Cyl(\lambda))$ is indeed measurable in $\Run_{\X}$ because
\begin{displaymath}
  \DelSt^{-1}(\Cyl(\lambda))
  \;=\;
  \textstyle\bigcup_{\xi\in\DelSt^{-1}(\{\lambda\})}\Cyl_{\X}(\xi)
  \,.
\end{displaymath}
where $\DelSt$ on the right-hand side is the straightforward adaptation
of $\DelSt\colon \Run_{\X}\to \myTree_{\Sigma}$ to partial runs/trees.
Since $\lambda$ is a finite tree and the state space $X$ is
countable, the above union is a countable one. 
}

Finally, the \emph{accepted language} $\Lang(\X)\in \giry(\myTree_\Sigma)$
of $\X$ is defined by $\mu_{\X}^{\myTree}$ in the above.
\end{mydefinition}

\subsection{Coincidence between Conventional and Coalgebraic Languages}
\begin{mylemma}\label{lem:fixedPtCharacterizationOfAcceptanceProb}
Let $\X=((X_{1},\cdots, X_{n}),\Sigma,\delta,s)$ be a PPTA with
  $X=\coprod_{i}X_{i}$, and
 $\Psi'_{\X}$ be
%$\colon [0,1]^{X} \to [0,1]^{X}$ 
\begin{equation*}
  \textstyle
  \Psi'_{\X}\colon [0,1]^{X} \to [0,1]^{X},\;
  \Psi'_{\X}(p)(x)
  \;\coloneqq\;
  \sum_{
    (\sigma,\seq{x_{\i}}{|\sigma|})
    \in \coprod_{\sigma} X^{|\sigma|}
  }
  \delta(x)
  \bigl(\sigma,(\seq{x_{\i}}{|\sigma|})\bigr)
  \cdot
  \prod_{i\in[1,|\sigma|]}
  p(x_{i})
  \,.
\end{equation*}
Let us define 
$\mu^{\myTree}_{\X,x}:=\mu^{\myTree}_{\X(x)}$
where $\X(x)$ is the PPTA obtained from $\X$ by changing
its initial  distribution $s$ into the Dirac distribution $\delta_{x}$;
$\mu^{\Run}_{\X,x}$ is similar. 
We define $\AccProb_{\X}\colon X\to [0,1]$---it assigns to each state the probability of
generating an accepting run---by
\begin{math}
  \AccProb_{\X}(x)
  \coloneqq
  \mu^{\Run}_{\X,x}(\AccRun_{\X})
\end{math}.

Consider the following equational system,
where $u'_{i}$ ranges over $\Kl(\giry)(X_{i},1)$, and
 $(\place)\upharpoonright X_{i}$ denotes domain restriction.
\begin{equation*}%\label{eq:fixedPtCharacterizationOfAcceptanceProb:eqSys}
  u'_{1}
  \;=_{\eta_{1}}\;
  \Psi'_{\X}([u'_{1},\cdots,u'_{n}]) \upharpoonright X_{1}
  ,
  \quad
  \dotsc
  ,
  \quad
  u_{n}
  \;=_{\eta_{n}}\;
  \Psi'_{\X}([u'_{1},\cdots,u'_{n}]) \upharpoonright X_{n}
\end{equation*}
We claim:
1) the system has a solution $\seq{l'^{\sol}_{\i}}{n}$; and
2)
\begin{math}
  [\seq{l'^{\sol}_{\i}}{n}]
  =
  \AccProb_{\X}
\end{math}.
\qed
\end{mylemma}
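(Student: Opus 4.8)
The plan is to treat the two claims separately, exploiting the identification $\Kl(\giry)(X_i,1)\cong[0,1]^{X_i}$, which holds because each $X_i$ is countable with the discrete $\sigma$-algebra, so that every function $X_i\to[0,1]$ is measurable. Under this identification $[0,1]^{X_i}$ is a complete lattice in the pointwise order, and $\Psi'_{\X}$ is monotone, being a sum (convergent, dominated by $\sum_{\sigma,\vec{x}}\delta(x)(\sigma,\vec{x})\le 1$) of finite products $\prod_j p(x_j)$ of nonnegative quantities. Hence claim~1 is immediate from Lemma~\ref{lem:eqSysSolvableCompLat}: the equational system has a solution $(l'^{\sol}_1,\dots,l'^{\sol}_n)$. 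I would also record here that $\Psi'_{\X}$ is both $\omega$-continuous and $\omega^{\op}$-continuous---interchanging suprema/infima of monotone sequences with the sum uses dominated convergence, and with the finite products uses continuity of multiplication on $[0,1]$---so that every individual least (resp.\ greatest) fixed point arising in the solving procedure of Def.~\ref{def:eqSys} is reached by ordinary Kleene iteration from $0$ (resp.\ from $1$).

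For claim~2, write $p^{\sol}:=[l'^{\sol}_1,\dots,l'^{\sol}_n]\in[0,1]^X$; the goal is $p^{\sol}=\AccProb_{\X}$. First I would check that $\AccProb_{\X}$ is a simultaneous fixed point, $\Psi'_{\X}(\AccProb_{\X})=\AccProb_{\X}$, by decomposing $\mu^{\Run}_{\X,x}$ at the root via its cylinder-set factorization: conditioning on the first generated pair $(\sigma,(x_1,\dots,x_{|\sigma|}))$, the child subtrees carry the product measure $\prod_j\mu^{\Run}_{\X,x_j}$, the single root node cannot change the parity of any infinite branch, and a run is accepting iff all of its child subtrees are (note accepting runs are full, so no divergence term enters). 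This yields $\AccProb_{\X}(x)=\sum_{\sigma,\vec{x}}\delta(x)(\sigma,\vec{x})\prod_j\AccProb_{\X}(x_j)=\Psi'_{\X}(\AccProb_{\X})(x)$.

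Being a fixed point is not enough, however, since the alternating $\mu/\nu$ structure singles out one particular fixed point; the heart of the proof is to match the nested solution with the parity condition. The plan is to introduce budget-indexed approximations $\AccRun^{\vec k}_{\X}$ of $\AccRun_{\X}$ in which each priority $i$ carries a budget: for odd $i$ (a $\mu$-variable) the budget caps the number of ``descending phases'' at priority $i$ and must stay finite, whereas for even $i$ (a $\nu$-variable) it bounds the number of permitted high-priority returns and is relaxed last. I would then show, by induction following the left-to-right elimination of Def.~\ref{def:eqSys} and using the one-step decomposition together with the $\omega$-/$\omega^{\op}$-continuity of $\Psi'_{\X}$, that the Kleene approximants of the interim solutions coincide with the probabilities $\mu^{\Run}_{\X,x}(\AccRun^{\vec k}_{\X})$---increasing and sup-converging where $\eta_i=\mu$, decreasing and inf-converging where $\eta_i=\nu$. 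Letting the budgets reach their limits in the nested order and observing that $\AccRun^{\vec k}_{\X}\to\AccRun_{\X}$ in $\mu^{\Run}_{\X,x}$-measure (monotone/dominated convergence) then gives $p^{\sol}(x)=\AccProb_{\X}(x)$.

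The main obstacle is exactly this matching: designing the budgeted events so that they simultaneously reproduce the alternating $\mu/\nu$ Kleene iterates and converge to the genuine parity condition in the correct nested order, and justifying the repeated interchange of the alternating suprema and infima with the infinite sum and the finite products defining $\Psi'_{\X}$. Because an outer greatest fixed point is approached from above (by infima) while the inner least fixed points are approached from below (by suprema), the limit exchanges cannot be carried out in a single pass; controlling them---so that the highest priority governs the outermost extremal fixed point and the budgeted events telescope correctly---is where the real work lies, and is precisely what the $\omega^{\op}$-continuity of $\Psi'_{\X}$ is there to support.
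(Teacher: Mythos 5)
Your claim~1 and the fixed-point verification for $\AccProb_{\X}$ are fine, and your general strategy---transporting the run-level fixed-point characterization of $\AccRun_{\X}$ along the map $R\mapsto\mu^{\Run}_{\X,x}(R)$, iterate by iterate---is the same idea the paper implements (via its abstract Lem.~\ref{lem:eqSysWithOmegaChainMap} on homomorphisms of equational systems). But there is a concrete gap exactly where you locate ``the real work'': the claimed identity between the Kleene approximants and the measures $\mu^{\Run}_{\X,x}(\AccRun^{\vec k}_{\X})$ cannot hold at the base of any $\nu$-iteration. The $0$-th approximant of a $\nu$-variable on the probability side is the constant function $1$, whereas the total mass of $\mu^{\Run}_{\X,x}$ is $\NDL_{\X}(x)$, which is strictly less than $1$ whenever $\X$ can diverge from $x$ (we are over the \emph{sub}-Giry monad); no measurable set of runs then has measure $1$. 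Equivalently, the map $R\mapsto\mu^{\Run}_{\X,x}(R)$ preserves $\bot$ and commutes with the one-step operators, but it does \emph{not} preserve $\top$, so transfer of greatest fixed points is not automatic: a priori the downward iteration from $1$ could stabilize strictly above the image of the run-level greatest fixed point. Your closing appeal to monotone/dominated convergence does not address this, because the probability-side $\nu$-iterates are not measures of any run sets to begin with.

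The paper closes this gap by first modifying the automaton: it adjoins a fresh state $\dlstate$ of maximal \emph{odd} priority and a unary letter $\dlletter$, routing all missing (divergence) probability into the non-accepting loop on $(\dlletter,\dlstate)$. The augmented automaton $\X_{\dlstate}$ is total, so $\mu^{\Run}_{\X_{\dlstate},x}(\Run_{\X_{\dlstate}})=1$ and $\top$ is preserved; the extra equation for $\dlstate$ solves to $0$ and can be eliminated without disturbing the other components; Lem.~\ref{lem:eqSysWithOmegaChainMap} then applies, and one finally checks that $\mu^{\Run}_{\X,x}$ and $\mu^{\Run}_{\X_{\dlstate},x}$ agree on the cylinder sets of $\X$. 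To keep your direct budgeted-approximation route you would need either this totalization, or a separate argument (generalizing Lem.~\ref{lem:NDLAsAGreatestFixedPoint} to the nested, parameterized setting) that at every $\nu$-stage the downward iteration from $1$ and the one from $\NDL_{\X}$ have the same limit. A further point to supply is the measurability of your sets $\AccRun^{\vec k}_{\X}$: the paper explicitly notes that the interim solutions must be reachable by $\omega$-iteration so that they stay inside the $\sigma$-algebra $\mathfrak{F}_{\X}$ on which $\mu^{\Run}_{\X,x}$ is defined.
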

Its proof (in the appendix) relies on Lem.~\ref{lem:eqSysWithOmegaChainMap}
on homomorphisms of equational systems.

\begin{mytheorem}[coincidence, in the probabilistic setting]\label{thm:coincidenceForPPTA}
Let $\X=((X_1,\ldots,X_n),\Sigma,\delta,s)$ be a PPTA, and % where
$X=X_1+\cdots+X_n$, and $F_{\Sigma}$ be the polynomial functor on
$\Meas$ that corresponds to $\Sigma$.  Then $\X$ is identified with a
parity $(\giry,F_{\Sigma})$-system; moreover its coalgebraic trace
semantics $\trp(\X)$ (Def.~\ref{def:acceptedLangParityTFSys}) coincides
with the (probabilistic) language $\Lang(\mathcal{X})$
concretely defined in Def.~\ref{def:NoDivergence}.
Precisely: $\trp(\X)(\bullet)(U) = \Lang(\mathcal{X})(U)$
for any measurable subset
$U$ of $\myTree_{\Sigma}$, where $\bullet$ is the unique element of $1$ in
$\trp(\X)\colon 1\to \giry (\myTree_{\Sigma})$.
\qed
\end{mytheorem}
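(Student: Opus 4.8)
The plan is to prove equality of the two subprobability measures $\trp(\X)(\bullet)$ and $\Lang(\X)=\mu^{\myTree}_{\X}$ on the space $(\myTree_{\Sigma},\mathfrak{F}_{\Sigma})$; well-definedness of $\trp(\X)$ is by Thm.~\ref{thm:constParityTraceSituGiry}. Since $\mathfrak{F}_{\Sigma}$ is generated by the cylinder sets $\Cyl_{\Sigma}(\lambda)$, and these together with $\emptyset$ form a $\pi$-system (the intersection of two cylinders is again a cylinder or empty, and $\myTree_{\Sigma}=\Cyl_{\Sigma}(\ast)$ is itself a cylinder), the uniqueness theorem for finite measures (Dynkin's $\pi$--$\lambda$ theorem) reduces the claim to showing $\trp(\X)(\bullet)(\Cyl_{\Sigma}(\lambda))=\mu^{\myTree}_{\X}(\Cyl_{\Sigma}(\lambda))$ for every partial $\Sigma$-tree $\lambda$. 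Because both sides depend linearly (by integration) on the initial distribution $s$, I would first reduce to Dirac initial states, i.e.\ prove the state-indexed identity $[\seq{\trp_{\i}(\X)}{n}](x)=\mu^{\myTree}_{\X,x}$ for each $x\in X$, and then integrate against $s$ using $\mu^{\myTree}_{\X}=\int_{X}\mu^{\myTree}_{\X,x}\,ds(x)$.

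Next I would show that both cylinder values obey the same recursion on the finite structure of $\lambda$, arguing by induction on its depth. On the coalgebraic side, the defining equations of $\trp$ (Def.~\ref{def:acceptedLangParityTFSys}), $u_{i}=_{\eta_{i}}J\zeta^{-1}\kco\oF[u_{1},\dotsc,u_{n}]\kco c_{i}$, unfold one level: evaluating $[\trp_{j}(\X)](x)$ on $\Cyl_{\Sigma}(\lambda)$ fixes the root symbol $\sigma$ of $\lambda$, integrates against the transition measure $c(x)$ over the choices of children states $(\seq{x_{\i}}{|\sigma|})$, and multiplies the cylinder measures $[\trp_{j}](x_{k})(\Cyl_{\Sigma}(\lambda_{k}))$ of the immediate subtrees $\lambda_{k}$. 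The concrete side unfolds identically: the definition of $\mu^{\myTree}_{\X}$ as the push-forward along $\DelSt$ of $\mu^{\Run}_{\X}$ intersected with $\AccRun_{\X}$ (Def.~\ref{def:NoDivergence}) expands through $P_{\X}$ into the same $\delta$-weighted product over children. Matching the two recursions reduces the statement to the base case, reached at the $\ast$-leaves of $\lambda$, where $\Cyl_{\Sigma}(\ast)=\myTree_{\Sigma}$ and the cylinder value is the total mass from the leaf state.

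The base case is precisely the total-mass coincidence supplied by the earlier lemmas. On the coalgebraic side the total mass $[\trp_{j}(\X)](x)(\myTree_{\Sigma})$ is the image of the into-$Z$ trace solution under $\Gamma=J!_{Z}\kco(\place)$, which by Lem.~\ref{lem:eqSysCoincidenceTraceAndAccProb} equals the solution $[\seq{l'^{\sol}_{\i}}{n}]$ of the into-$1$ system $E'$; identifying $\Kl(\giry)(X_{i},1)$ with $[0,1]^{X_{i}}$ via $\giry 1\cong[0,1]$ turns $E'$ into the $\Psi'_{\X}$-system, whose solution is $\AccProb_{\X}$ by Lem.~\ref{lem:fixedPtCharacterizationOfAcceptanceProb}. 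On the concrete side the matching fact is that, conditioned on a fixed finite prefix, a run is accepting exactly when each of its continuations below a $\ast$-leaf is accepting: a finite prefix carries only finitely many priorities and is therefore irrelevant to the parity condition on infinite branches. Hence the conditional acceptance probability from a leaf carrying state $x$ is $\mu^{\Run}_{\X,x}(\AccRun_{\X})=\AccProb_{\X}(x)$, matching the coalgebraic leaf value.

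I expect the main obstacle to be the measure-theoretic justification of this leaf decomposition. The subprobability $\mu^{\Run}_{\X}$ is built (Def.~\ref{def:NoDivergence}) from the non-divergence weights $\NDL_{\X}$ at $\ast$-leaves, whereas intersecting with the measurable set $\AccRun_{\X}$ (Lem.~\ref{lem:acceptingRunsAreMeasurable}) must replace those weights by the acceptance probabilities $\AccProb_{\X}$. Making this precise amounts to a tree-shaped Markov/independence argument: conditioned on a finite prefix, the subtrees hanging below distinct $\ast$-leaves are generated independently, each as a fresh run from the corresponding leaf state, so that acceptance factors as a product over leaves. Establishing this factorization together with the attendant measurability bookkeeping---and checking that the coalgebraic one-level unfolding, i.e.\ genuine integration against the measure $c(x)$, is valid, which is where the standard-Borel hypotheses and the final-sequence construction of $\Delta^{g_{B}}$ from Lem.~\ref{lem:inverseExists} re-enter---is the technically heaviest part of the proof.
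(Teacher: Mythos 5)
Your argument is sound in outline but follows a genuinely different route from the paper's. You prove the identity of the two measures cylinder by cylinder: reduce to Dirac initial states, induct on the depth of the partial tree $\lambda$, match the one-level recursions on both sides, and land at the $\ast$-leaves where the total-mass coincidence $\Gamma(\trp(\X))=\AccProb_{\X}$ (your combination of Lem.~\ref{lem:eqSysCoincidenceTraceAndAccProb} and Lem.~\ref{lem:fixedPtCharacterizationOfAcceptanceProb}) closes the induction. The paper never touches cylinder sets: it observes that both $[\seq{\trp_{\i}(\X)}{n}]$ and $[x\mapsto\mu^{\myTree}_{\X,x}]$ are fixed points in $\mathfrak{G}$, computes that both have the same image $\AccProb_{\X}$ under $\Gamma$, and invokes the order isomorphism $\Gamma\colon\mathfrak{G}\cong\mathfrak{H}$ of Lem.~\ref{lem:inverseExists} to conclude $\trp(\X)(x)=\mu^{\myTree}_{\X,x}$ outright. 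In effect, Lem.~\ref{lem:inverseExists} packages once and for all the principle ``a fixed point into $Z$ is determined by its total mass into $1$,'' which is exactly what your depth induction re-derives by hand; what the abstract route buys is that only a \emph{one-step} unfolding of $\mu^{\myTree}_{\X,\cdot}$ is ever needed, the rest being absorbed into the final-sequence construction of $\Delta$.

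The one place where your plan is not yet a proof is the step you yourself flag as heaviest: the factorization
$\mu^{\Run}_{\X,x}\bigl(\Cyl_{\X}(\xi)\cap\AccRun_{\X}\bigr)=P'_{\X}(\xi)$, where $P'_{\X}$ is $P_{\X}$ with the $\NDL_{\X}$ weights at $\ast$-leaves replaced by $\AccProb_{\X}$. This is true (every infinite branch of a run in $\Cyl_{\X}(\xi)$ eventually enters exactly one subrun below a $\ast$-leaf, so the parity condition is insensitive to the finite prefix, and the pre-measure on cylinders is built precisely as a product over leaves), but it is a genuine measure-theoretic lemma that you assert rather than establish, and it is not among the paper's stated results. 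Note that the paper's route does not escape this entirely either --- verifying that $[x\mapsto\mu^{\myTree}_{\X,x}]$ lies in $\mathfrak{G}$, which is needed to apply the injectivity of $\Gamma$, is the depth-one instance of the same fact --- but it needs only that single instance, whereas your induction needs the general finite-prefix version. If you complete your route, prove the depth-one factorization carefully and let your induction propagate it; as written, the proposal is a correct plan with that lemma outstanding.
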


\paragraph*{Acknowledgments}
Thanks are due to
Corina C\^{\i}rstea,
Kenta Cho,
Bartek Klin,
Tetsuri Moriya and
Shota Nakagawa
for useful discussions; and to the anonymous referees for their comments.
The authors are supported by
Grants-in-Aid No. 24680001 \& 15KT0012, JSPS; N.U.\ is supported by
Grant-in-Aid for JSPS Fellows.

\bibliography{02publabbr,02jrsrabbr,00myref,02procabbr}%

%\end{document}

\ifoutputappendix
\newpage
\appendix

\section{Tree, Run, and Accepting Run}
\label{sec:detailsOfTreeRunAcc}
Here are some supplementary definitions on (conventional notions) of
nondeterministic/probabilistic tree
automata. See first~\S{}\ref{sec:nondetParitySys}
and~\S\ref{subsec:PPTAConventionally}.

\begin{myremark}\label{rem:generativeProbBuechiTreeAutom}
We let $\nat^{*}$ and $\nat^{\omega}$ denote the sets of finite and
infinite sequences over natural numbers, respectively.  We let
$\nat^{\infty}\coloneqq\nat^{*}\cup\nat^{\omega}$.
Concatenation of finite/infinite sequences, and/or characters are denoted
simply by juxtaposition. Given an infinite sequence
$\pi=\pi_{1}\pi_{2}\dotsc\in\nat^{\omega}$ (here $\pi_{i}\in\nat$), its
prefix $\pi_{1}\dotsc\pi_{n}$ is denoted by $\pi_{\le n}$. 
\end{myremark}

The following formalization of trees and related notions is standard,
with its variations used e.g.\ in~\cite{CarayolHS14ria}. 
A sequence $w\in \nat^{*}$ is understood as a \emph{position} in a tree.
\begin{mydefinition}[$\Sigma$-tree]\label{def:treesAndRelatedNotions}
Let $\Sigma$ be a ranked alphabet, with each element $\sigma\in \Sigma$
coming with its arity $|\sigma|\in \nat$. 
A \emph{$\Sigma$-tree} $\tau$ is given by a nonempty subset
$\Dom(\tau)\subseteq\nat^{*}$ (called the \emph{domain} of $\tau$)
and a \emph{labeling} function $\tau\colon \Dom(\tau)\to\Sigma$ that
are subject to the following conditions.\footnote{We shall use the same
notation $\tau$ for a tree itself and its labeling
function. Confusion is unlikely.}
\begin{enumerate}
  \item $\Dom(\tau)$ is \emph{prefix-closed}: for any $w\in \nat^{*}$  and
        $i\in \nat$, $wi\in \Dom(\tau)$ implies $w\in \Dom(\tau)$.
        See Fig.~\ref{fig:positionInATree}.
  \item $\Dom(\tau)$ is \emph{lower-closed}:  for any $w\in \nat^{*}$  and
        $i,j\in \nat$, $wj\in \Dom(\tau)$ and $i\le j$ imply $wi\in
        \Dom(\tau)$. 
        See Fig.~\ref{fig:positionInATree}.
  \item The branching degrees are consistent with the label arities: for any $w\in
        \Dom(\tau)$, let $\sigma=\tau(w)$. Then $w0,w1,\dotsc,
        w(|\sigma|-1)$ belong to $\Dom(\tau)$, and $wi\not\in
        \Dom(\tau)$ for any $i$ such that $|\sigma|\le i$.
        See Fig.~\ref{fig:arityOfLabelsAndNumOfSucc}.
\end{enumerate}

The set of all $\Sigma$-trees shall be denoted by $\myTree_{\Sigma}$.
\end{mydefinition}

\begin{figure}[tbp]
\begin{minipage}{.5\textwidth}\centering
       \begin{tikzpicture}[level distance=3em,sibling distance=1em]
 \Tree[.$\varepsilon$ [.$0$ [.$00$ [.{\scriptsize $\vdots$} ]
	[.{\scriptsize $\vdots$} ] ] ] [.$1$ [.$10$ ] [.$11$
	[.{\scriptsize $\vdots$} ] ] [.$12$ ] ] ]
       \end{tikzpicture}
\caption{Positions in a tree}
\label{fig:positionInATree}
\end{minipage}%
\begin{minipage}{.5\textwidth}\centering
       \begin{tikzpicture}[level distance=3em,sibling distance=1em]
 \Tree[.$\sigma^{(2)}_{}$ [.$\sigma^{(1)}_{}$ [.$\sigma^{(2)}_{}$ [.{\scriptsize $\vdots$} ]
	[.{\scriptsize $\vdots$} ] ] ] [.$\sigma^{(3)}_{}$ [.$\sigma^{(0)}_{}$ ] [.$\sigma^{(1)}_{}$
	[.{\scriptsize $\vdots$} ] ] [.$\sigma^{(0)}_{}$ ] ] ]
       \end{tikzpicture}
\caption{ Arities  of labels, and  the numbers of successors. Here
 $\sigma^{(i)}_{}\in \Sigma$ is assumed to be of arity $i$. }
\label{fig:arityOfLabelsAndNumOfSucc}
\end{minipage}
\end{figure}

The following definitions are almost standard in the tree-automata literature, too.
A notable difference  here, that is for a pedagogical reason, is
that the root of a run is \emph{not} required to be a initial state.
That is also natural in our current coalgebraic study;
in the coalgebraic contexts initial states are usually unspecified.
\begin{mydefinition}[run]\label{def:runOfProbTreeAutom}
A \emph{run} $\rho$ of
an NPTA (Def.~\ref{def:nondetParityTreeAutom})
$\X:=((X_1,\ldots,X_n),\Sigma,\delta,s)$ 
is a (possibly infinite) tree whose nodes are $(\Sigma\times X)$-labeled---
here $X=X_1+\cdots+X_n$---subject to the following conditions.
\begin{enumerate}
  \item (Tree)
    The nonempty subset $\Dom(\rho)\subseteq\nat^{*}$ that is subject to
    the same conditions (of being prefix-closed and lower-closed) as
    for $\Sigma$-trees (Def.~\ref{def:treesAndRelatedNotions}).
  \item (Branching degree)
    The labeling function $\rho\colon
    \Dom(\rho)\to \Sigma\times X$ is
    such that, if $\rho(w)=(\sigma,x)$, then $w$ has precisely
    $|\sigma|$ successors $w0,w1,\dotsc,w(|\sigma|-1)\in \Dom(\rho)$.
  \item (Transition)
    Successors are reachable by a transition, in the sense that
    $\bigl(\sigma_{w}, (x_{w0}, \dotsc, x_{w|\sigma|-1})\bigr) \in \delta(x_{w})$ holds, where
    $\rho(w)$ is labeled with $(\sigma_{w},x_{w})$, and
    $\rho(wi)$ is labeled with $(\sigma_{wi},x_{wi})$ for any $0 \leq i < |\sigma|$.
\end{enumerate}
The set of all runs of the NPTA $\X$ is denoted by $\Run_{\X}$.

A \emph{run} $\rho$ of a PPTA (Def.~\ref{def:generativeProbParityTreeAutom}) is defined similarly,
though it is required to satisfy only Cond.~1--2 in the above.
This relaxed condition is natural---for impossible transitions we simply assign the probability $0$.
The set of all runs of a PPTA $\X$ is also denoted by $\Run_{\X}$.

The map that takes a run $\rho\in\Run_{\X}$, removes its $X$-labels
(i.e.\ applies the first projection to each label), and returns the
resulting $\Sigma$-labeled tree (that is easily seen to be a
$\Sigma$-tree, Def.~\ref{def:treesAndRelatedNotions}) is denoted by
$\DelSt\colon \Run_{\X}\to \myTree_{\Sigma}$. We say that a run $\rho$
is \emph{over} the $\Sigma$-tree $\DelSt(\rho)$.
\end{mydefinition}

A branch of a tree is a maximal path from its root $\varepsilon$.
\begin{mydefinition}[branch]\label{def:branch}
Let $\tau$ be a $\Sigma$-tree. An \emph{(infinitary) branch} of $\tau$ is either: 
\begin{itemize}
  \item an infinite sequence $\pi=\pi_{1}\pi_{2}\dotsc\in\nat^{\omega}$
        (where $\pi_{i}\in\nat$) such that any finite prefix
        $\pi_{\le n}=\pi_{1}\dotsc\pi_{n}$ of it belongs to $\Dom(\tau)$; or
  \item a finite sequence $\pi=\pi_{1}\dotsc\pi_{n}\in \nat^{*}$
        (where $\pi_{i}\in\nat$) that belongs to $\Dom(\tau)$ and
        such that $\pi0\not\in\Dom(\tau)$ (meaning that $\pi$ is a
        leaf of $\tau$, and that $\tau(\pi)$ is a $0$-ary symbol).
\end{itemize}
The set of all branches of a $\Sigma$-tree $\tau$ is denoted by $\Branch(\tau)$.
\end{mydefinition}

\begin{mydefinition}[accepting run]\label{def:acceptingRun}
A run $\rho$ of an NPTA (or a PPTA)
$\X=((X_1,\ldots,X_n),\Sigma,\delta,s)$ is said to be \emph{accepting}
if any branch $\pi\in\Branch(\rho)$ of $\mathcal{X}$
satisfies either of the following conditions:
\begin{itemize}
  \item the branch $\pi$ is an infinite sequence
        $\pi=\pi_{1}\pi_{2}\dotsc\in\nat^{\omega}$, and
        the $X$-labels $x_{\varepsilon},x_{\pi_{1}},x_{\pi_{1}\pi_{2}},\ldots$
        along the branch satisfies the parity acceptance condition, that is,
        $\max\{i\in[1,n]\,\mid\, \text{$x_{\pi_1\ldots\pi_k}\in X_i$ for infinitely many $k\in\omega$}\}$ is even; or
  \item the branch $\pi$ is a finite sequence
        $\pi=\pi_{1}\dotsc\pi_{m}\in\nat^{*}$.
\end{itemize}
The set of all accepting runs over $\X$ is denoted by $\AccRun_{\X}$.
\end{mydefinition}

\begin{mydefinition}[partial $\Sigma$-tree, partial run]
\label{def:partialSigmaTreeAndPartialRun}
A \emph{partial $\Sigma$-tree} $\lambda$ is a finite prefix tree of a
$\Sigma$-tree $\tau$ that is \emph{proper}, in the sense that if a node $w$ of
$\tau$ is in $\lambda$ then all the siblings of the node $w$ are also in
$\lambda$.
Its branching degrees are compatible of arities of the $\Sigma$-labels,
and its leaves are labeled by
an additional symbol $\ast$ (``continuation'') or a $0$-ary symbol $\sigma$.
Precisely: a partial $\Sigma$-tree $\lambda$ is given by a
subset $\Dom(\lambda)\subseteq \nat^{*}$ together with
a labeling function $\lambda\colon \Dom(\lambda)\to (\Sigma \cup \{*\})$, such that:
\begin{enumerate}
  \item $\Dom(\lambda)$ is a nonempty and \emph{finite} subset of $\nat^{*}$, that is
        prefix-closed and lower-closed, in the sense of Def.~\ref{def:treesAndRelatedNotions}.
  \item\emph{(Properness)}
    Let $w\in \Dom(\lambda)$. The labeling function $\lambda$ satisfies:
    \begin{itemize}
      \item if $\lambda(w)=\sigma$,
            then $w0,w1,\dotsc,w(|\sigma|-1)\in \Dom(\lambda)$ and
            $wi\not\in \Dom(\lambda)$ for any $i\ge |\sigma|$ (like in
            Def.~\ref{def:treesAndRelatedNotions}); and
      \item if $\lambda(w)=\ast$,
            then $wi\not\in\Dom(\lambda)$ for any $i\in\nat$.
    \end{itemize}
\end{enumerate}

Similarly, a \emph{partial run} $\xi$ of an NPTA
$\X=((X_1,\ldots,X_n),\Sigma,\delta,s)$ is a finite tree subject to the following.
\begin{enumerate}
  \item\label{item:16012317011}
    Its domain $\Dom(\xi)$ is a nonempty, finite, prefix-closed and
    lower-closed subset of $\nat^{*}$.
  \item\label{item:16012317012}\emph{(Properness)}
    A labeling function
    $\xi\colon \Dom(\xi) \to (\Sigma \cup \{*\})\times X$
    such that, for each $w\in \Dom(\xi)$:
    \begin{itemize}
      \item if $\xi(w)=(\sigma,x)$,
            then $w0,w1,\dotsc,w(|\sigma|-1)\in \Dom(\xi)$ and
            $wi\not\in \Dom(\xi)$ for any $i\ge |\sigma|$; and
      \item if $\xi(w)=(\ast,x)$, then
            $wi\not\in\Dom(\lambda)$ for any $i\in\nat$.
    \end{itemize}
  \item\label{item:16012317013}
    Successors are reachable by a transition, in the sense that
    $(\sigma_{w}, (x_{w0}, \dotsc, x_{w|\sigma|-1})) \in \delta(x_{w})$ holds, where
    $\rho(w)$ is labeled with $(\sigma_{w},x_{w})$ such that $\sigma_w\neq *$, and
    $\rho(wi)$ is labeled with $(\sigma_{wi},x_{wi})$ for any $0 \leq i < |\sigma|$.
\end{enumerate}

A partial run of a PPTA is defined similarly, except that
Cond.~\ref{item:16012317013} in the above is not required.
\end{mydefinition}
A partial run is thought of as an interim result of running an automaton
$\X$, after only finitely many steps. The properness requirement
embodies the intuition that, in one-step  execution of an
automaton from some state, all of the successors of the state (together with
the $\Sigma$-label for the state) are created at once.
See Fig.~\ref{fig:execProbTreeAutom}; each of the five trees there
are examples of partial runs.

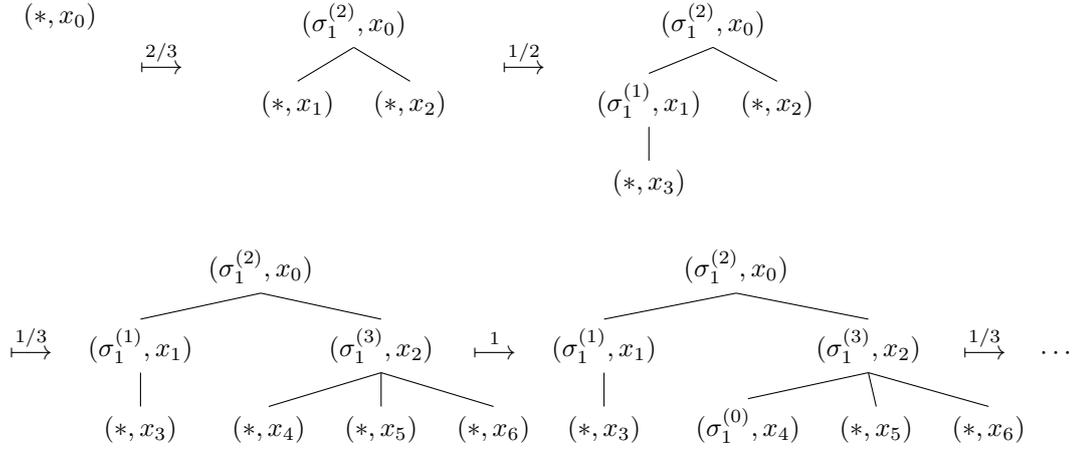
\begin{figure}[tbp]
\begin{align*}
  &
  \hspace{-1.5em}
  \begin{minipage}[t]{.1\textwidth}\centering
    \vspace{0pt}
    \begin{tikzpicture}[level distance=3em,sibling distance=.8em]
      \Tree[.$(\ast,x_{0})$ ]
    \end{tikzpicture}
  \end{minipage}
  \quad\raisebox{-3em}{$\stackrel{2/3}{\longmapsto}$}\quad
  \begin{minipage}[t]{.25\textwidth}\centering
    \vspace{0pt}
    \hspace{0.5em}
    \begin{tikzpicture}[level distance=3em,sibling distance=.8em]
      \Tree[.$(\sigma^{(2)}_{1},x_{0})$
        [.$(\ast,x_{1})$ ]
        [.$(\ast,x_{2})$ ]
      ]
    \end{tikzpicture}
  \end{minipage}
  \quad\raisebox{-3em}{$\stackrel{1/2}{\longmapsto}$}\quad
  \begin{minipage}[t]{.25\textwidth}\centering
    \vspace{0pt}
    \begin{tikzpicture}[level distance=3em,sibling distance=.8em]
      \Tree[.$(\sigma^{(2)}_{1},x_{0})$
        [.$(\sigma^{(1)}_{1},x_{1})$
          [.$(\ast,x_{3})$ ]
        ]
        [.$(\ast,x_{2})$ ]
      ]
    \end{tikzpicture}
  \end{minipage}
  \\
  &
  \hspace{-2.4em}
  \quad\raisebox{-4.5em}{$\stackrel{1/3}{\longmapsto}$}\quad
  \begin{minipage}[t]{.345\textwidth}\centering
    \vspace{0pt}
    \begin{tikzpicture}[level distance=3em,sibling distance=.8em]
      \Tree[.$(\sigma^{(2)}_{1},x_{0})$
        [.$(\sigma^{(1)}_{1},x_{1})$
          [.$(\ast,x_{3})$ ]
        ]
        [.$(\sigma^{(3)}_{1},x_{2})$
          [.$(\ast,x_{4})$ ]
          [.$(\ast,x_{5})$ ]
          [.$(\ast,x_{6})$ ]
        ]
      ]
    \end{tikzpicture}
  \end{minipage}
  \quad\raisebox{-4.5em}{$\stackrel{1}{\longmapsto}$}\quad
  \begin{minipage}[t]{.395\textwidth}\centering
  \vspace{0pt}
    \begin{tikzpicture}[level distance=3em,sibling distance=.8em]
      \Tree[.$(\sigma^{(2)}_{1},x_{0})$
        [.$(\sigma^{(1)}_{1},x_{1})$ 
           [.$(\ast,x_{3})$ ]
        ]
        [.$(\sigma^{(3)}_{1},x_{2})$
          [.$(\sigma^{(0)}_{1},x_{4})$ ]
          [.$(\ast,x_{5})$ ]
          [.$(\ast,x_{6})$ ]
        ]
      ]
    \end{tikzpicture}
  \end{minipage}
  \hspace{-1em}%
  \quad\raisebox{-4.5em}{$\stackrel{1/3}{\longmapsto}\quad\cdots$}
\end{align*}
\caption{Execution of a generative probabilistic tree automaton.
 Here
 $\delta(x_{0})(\sigma^{(2)}_{1},(x_{1},x_{2}))=2/3$,
 $\delta(x_{1})(\sigma^{(1)}_{1},x_{3})=1/2$,
 and so on.
}
\label{fig:execProbTreeAutom}
\end{figure}

\begin{mydefinition}\label{def:subtree}
For $\tau\in\myTree_{\Sigma}$ and $w\in\Dom(\tau)$,
the \emph{$w$-th subtree of $\tau$} is a tree $\tau_w\in\myTree_{\Sigma}$ that is 
defined by $\Dom(\tau_w)=\{w'\in\mathbb{N}^*\mid ww'\in\Dom(\tau)\}$ and
$\tau_w(w')=\tau(ww')$.

A subtree of a run is called a \emph{subrun}.
\end{mydefinition}

\section{Supplementary Materials on Equational Systems}
\label{sec:appendixEqSys}
The intuitions in Def.~\ref{def:eqSys} are put  in the following precise terms. 
\begin{mydefinition}[solution]\label{def:solOfEqSys}
The \emph{solution} of an equational system~(\ref{eq:sysOfEq}) is
 defined as follows, provided that all the necessary greatest and least
 fixed points exist.
For each $i\in[1,n]$ and $j\in[1,i]$, we define monotone functions
\begin{equation*}
  f^{\ddagger}_i\colon
  L_{i}\times L_{i+1}\times\cdots\times L_{n}
  \longrightarrow
  L_{i}
  \quad\text{and}\quad
  l^{(i)}_{j}\colon
  L_{i+1}\times L_{i+2}\times\cdots \times L_{n}
  \longrightarrow
  L_{j}
\end{equation*}
as follows, inductively on $i$. For the base case $i=1$:
\begin{equation*}
  f^{\ddagger}_{1}(l_{1},\dotsc,l_{n})
  \;\coloneqq\;
  f_{1}(l_{1},\dotsc,l_{n})
  \quad\text{and}\quad
  l^{(1)}_{1}(l_{2},\dotsc,l_{n})
  \;\coloneqq\;
  \eta_{1}\bigl[f^{\ddagger}_{1}(\place,l_{2},\dotsc,l_{n})\colon L_{1}\to L_{1}\bigr]
  \,.
\end{equation*}
In the last line we take the lfp or gfp (according to
$\eta_{1}\in\{\mu,\nu\}$) of the (monotone) function
$f^{\ddagger}_{1}(\place,l_{2},\dotsc, l_{n})\colon L_{1}\to L_{1}$.
\auxproof{
Note here that it is possible that the monotone function
$f^{\ddagger}_{1}(\place,l_{2},\dotsc, l_{n}) \colon L_{1}\to L_{1}$
does not have the lfp or gfp.
}

For the step case, the function $f^{\ddagger}_{i+1}$ makes use of the
$i$-th interim solutions $l^{(i)}_{1},\dotsc,l^{(i)}_{i}$ for the variables
$u_{1},\dotsc,u_{i}$ obtained so far:
\begin{equation*}
  f^{\ddagger}_{i+1}(l_{i+1},\dotsc,l_{n})
  \;\coloneqq\;
  f_{i+1}\bigl(\,
    l^{(i)}_{1}(l_{i+1},\dotsc,l_{n}),\,
    \dotsc,\,
    l^{(i)}_{i}(l_{i+1},\dotsc,l_{n}),\,
    l_{i+1},\dotsc,l_{n}
  \,\bigr)
  \enspace.
\end{equation*}
We then let
\begin{equation*}
  l^{(i+1)}_{i+1}(l_{i+2},\dotsc,l_{n})
  \;\coloneqq\;
  \eta_{i+1}\bigl[\,
    f^{\ddagger}_{i+1}(\place,l_{i+2},\dotsc,l_{n})\colon L_{i+1}\to L_{i+1}
  \,\bigr]
\end{equation*}
and use it to obtain the $(i+1)$-th interim solutions $l^{(i+1)}_{1},\dotsc,l^{(i+1)}_{i}$.
That is, for each $j\in [1,i]$,
\begin{equation*}
  l^{(i+1)}_{j}(l_{i+2},\dotsc,l_{n})
  \;\coloneqq\;
  l^{(i)}_{j}\bigl(\,
    l^{(i+1)}_{i+1}(l_{i+2},\dotsc,l_{n}),\,
    l_{i+2},\dotsc,l_{n}
  \,\bigr)
  \enspace.
\end{equation*}
Finally, the \emph{solution}
$(l^{\sol}_{1},\dotsc,l^{\sol}_{n}) \in L_{1}\times\cdots\times L_{n}$
of the equational system~(\ref{eq:sysOfEq}) is defined by
\begin{math}
  (l^{\sol}_{1},\dotsc,l^{\sol}_{n})
  \coloneqq
  (l^{(n)}_{1},\dotsc,l^{(n)}_{n})
\end{math},
where we identify a function $l^{(n)}_{j}\colon 1\to L_{j}$ with an element of $L_{j}$.
It is easy to see that all the functions $f^{\ddagger}_{i}$ and
$l^{(i)}_{j}$ involved here are monotone.
\end{mydefinition}

\begin{myexample}[$\Kl(\giry)(X,1)$ is not a complete lattice]\label{example:KlGiryX1IsNotACompleteLattice}
Since $\giry$ is the \emph{sub-}Giry monad we have that  $\giry 1$ is
isomorphic to the unit interval $[0,1]$, and they are complete lattices.
The homset $\Kl(\giry)(X,1)=\Meas(X,\giry 1)$ is however not a complete
lattice in general, because of the measurability requirement.

For a counterexample let $X=[0,1]$ and $X_{0}\subseteq X$ be a non-measurable subset
(it is well-known that such  $X_{0}$ exists).
For each measurable subset $P\subseteq X$ consider
its characteristic function $\chi_{P}\colon X\to \giry 1\cong[0,1]$,
$\chi_{P}(x)=1$ if $x\in P$ and $\chi_{P}(x)=0$ otherwise.
Then $\chi_{P}$ is measurable and hence an element of $\Kl(\giry)(X,1)$.
Now assume that the supremum
\begin{math}
  f:=\bigsqcup_{P\subseteq X_{0}}\chi_{P}
\end{math}
exists in $\Kl(\giry)(X,1)$.
\begin{itemize}
  \item For each $x\in X_{0}$ we have $f(x)=1$ since $\{x\}\subseteq X_{0}$ is measurable.
  \item For each $x\in X\setminus X_{0}$ we have $f(x)=0$.
        Assume otherwise: then the function $f[x\mapsto 0]$, defined by
        $y\mapsto f(y)$ (if $y\neq x$) and $x\mapsto 0$, is greater than
        $\chi_{P}$ (for each measurable $P\subseteq X_{0}$) and
        measurable (since for every measurable $Q$, the sets $Q\cup\{x\}$ and
        $Q\setminus\{x\}$ are measurable).
        This contradicts with the minimality of the supremum $f$.
\end{itemize}
Therefore we conclude $f=\chi_{X_{0}}$. This is a contradiction, since
 $\chi_{X_{0}}$ is not a measurable function.
\end{myexample}

The following results are about notions of \emph{homomorphism} of
equational systems and preservation of solutions; they are inspired
by a similar result in domain theory (about preservation of least fixed points).
Lem.~\ref{lem:eqSysWithOmegaChainMap} is a rather straightforward
generalization of the domain theory result.
The condition we require in Lem.~\ref{lem:eqSysWithFixedPtIso} is
rather restrictive---especially
Cond.~\ref{item:homOfEqSysStrongRestriction}---but they are satisfied by
our applications.

\begin{mylemma}\label{lem:eqSysWithFixedPtIso}
Let $E$ and $E'$ be the following equational systems, over
posets $L_{1},\dotsc,L_{n}$ and $L'_{1},\dotsc,L'_{n}$, respectively.
Note that the ``polarities'' $\eta_{1},\dotsc,\eta_{n}$ are the same.
\begin{equation*}
  E
  \;\coloneqq\;
  \left[
    \begin{array}{rll}
      u_{1}
      &=_{\eta_{1}}&
      f_{1}(\seq{u_{\i}}{n})
      \\
      &\;\vdots&\\
      u_{n}
      &=_{\eta_{n}}&
      f_{n}(\seq{u_{\i}}{n})
    \end{array}
  \right]
  \qquad
  E'
  \;\coloneqq\;
  \left[
    \begin{array}{rll}
      u'_{1}
      &=_{\eta_{1}}&
      f'_{1}(\seq{u'_{\i}}{n})
      \\
      &\;\vdots&\\
      u'_{n}
      &=_{\eta_{n}}&
      f'_{n}(\seq{u'_{\i}}{n})
    \end{array}
  \right]
\end{equation*}
Let
\begin{equation*}
  \varphi_{1}\colon L_{1} \to L'_{1}\,,
  \quad\dotsc\,,\quad
  \varphi_{n}\colon L_{n} \to L'_{n}
\end{equation*}
be a family of monotone functions, subject to the following conditions.
\begin{enumerate}
  \item\label{item:homOfEqSysCompatibility}
    \begin{math}
      \varphi_{i}\bigl(f_{i}(\seq{l_{\i}}{n})\bigr)
      =
      f'_{i}\bigl(\seq{\varphi_{\i}(l_{\i})}{n}\bigr)
    \end{math} for each $i\in[1,n]$ and $l_{i}\in L_{i}$.
    That is,
    \begin{equation*}
      \vcenter{\xymatrix@R=.6em@C+2em{
        {L_{1}\times\cdots \times L_{n}}
            \ar[d]_{f_{i}}
            \ar[r]^{\varphi_{1}\times\cdots\times\varphi_{n}}
        &
        {L'_{1}\times\cdots \times L'_{n}}
            \ar[d]^{f'_{i}}
        \\
        {L_{i}}
            \ar[r]_{\varphi_{i}}
        &
        {L'_{i}}
     }}
    \end{equation*} commutes
    for each $i\in [1,n]$.
  \item\label{item:homOfEqSysStrongRestriction}
    Let $i\in[1,n]$, and $l_{i+1}\in L_{i+1},\, \dotsc,\, l_{n}\in L_{n}$.
    Let us define the following posets of ``interim fixed points
    under parameters $l_{i+1},\dotsc, l_{n}$.''
    \begin{align*}
      L^{(\seqby{l_{\i}}{i+1}{n})}
      &\;\coloneqq\;
      \bigl\{\,
        (\seq{l_{\i}}{i})
        \mid
        \qnta{j\in[1,i]}\:
        l_{j} = f_{j}(l_{1},\dotsc,l_{i},l_{i+1},\dotsc,l_{n})
      \,\bigr\}
      \\
      L'^{(\seqby{l_{\i}}{i+1}{n})}
      &\;\coloneqq\;
      \bigl\{\,
        (\seq{l'_{\i}}{i})
        \mid
        \qnta{j\in[1,i]}\:
        l'_{j} = f'_{j}\bigl(\seq{l'_{\i}}{i},\seqby{\varphi_{\i}(l_{\i})}{i+1}{n}\bigr)
      \,\bigr\}
    \end{align*}
    Let us further define a function 
    \begin{math}
      \varphi^{(\seqby{l_{\i}}{i+1}{n})}\colon
      L^{(\seqby{l_{\i}}{i+1}{n})}
      \to L'^{(\seqby{l_{\i}}{i+1}{n})}
    \end{math}
    by:
    \begin{displaymath}
      \varphi^{(\seqby{l_{\i}}{i+1}{n})}
      \bigl(l_{1},\dotsc,l_{i}\bigr)
      \;\coloneqq\;
      \bigl(\,
      \varphi_{1}(l_{1}),\dotsc, \varphi_{i}(l_{i})\,\bigr)\enspace,
    \end{displaymath}
    where its well-definedness---i.e.\ that
    $\bigl(\, \varphi_{1}(l_{1}),\dotsc, \varphi_{i}(l_{i})\,\bigr)$
    indeed belongs to $L'^{(\seqby{l_{\i}}{i+1}{n})}$---is
    readily verified from Cond.~\ref{item:homOfEqSysCompatibility}.

    We require that $\varphi^{(\seqby{l_{\i}}{i+1}{n})}$ is an order
    isomorphism, for each $i$ and $l_{i+1},\dotsc,l_{n}$, with its inverse
    denoted by $\psi^{(\seqby{l_{\i}}{i+1}{n})}$.
\end{enumerate}

Under these assumptions,
if the system $E'$ has a solution $\seq{l'^{\sol}_{\i}}{n}$, the other
system $E$ also has a solution $\seq{l^{\sol}_{\i}}{n}$.
Moreover $\seq{\varphi_{\i}(l^{\sol}_{\i})=l'^{\sol}_{\i}}{n}$.
\end{mylemma}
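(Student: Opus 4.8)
The plan is to induct on the level $i$ of the iterative solution construction of Def.~\ref{def:solOfEqSys}, establishing simultaneously that (a) the interim functions $f^{\ddagger}_{i}$ and $l^{(i)}_{j}$ for $E$ are well-defined --- i.e.\ the fixed points they name actually exist --- and (b) the $\varphi$'s intertwine them with the corresponding data for $E'$. Writing $\vec{l}=(l_{i+1},\dots,l_{n})$ and $\varphi(\vec{l})=(\varphi_{i+1}(l_{i+1}),\dots,\varphi_{n}(l_{n}))$, the invariant I would carry is
\[
  \varphi_{j}\bigl(l^{(i)}_{j}(\vec{l})\bigr)={l'}^{(i)}_{j}\bigl(\varphi(\vec{l})\bigr)
  \qquad (j\in[1,i])
\]
for every parameter tuple $\vec{l}$. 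As $E'$ is assumed solvable, all its fixed points exist, so the right-hand sides are always defined and only the left-hand sides must be produced. Instantiating at $i=n$ (no parameters) then gives both the existence of $E$'s solution and $\varphi_{j}(l^{\sol}_{j})={l'}^{\sol}_{j}$, as required.

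The routine ingredient is the bookkeeping for $f^{\ddagger}$. In the base case $i=1$, Cond.~\ref{item:homOfEqSysCompatibility} gives $\varphi_{1}\co f^{\ddagger}_{1}(\place,\vec{l})={f'}^{\ddagger}_{1}(\place,\varphi(\vec{l}))\co\varphi_{1}$, and here $L^{(\vec{l})}=\mathrm{Fix}\bigl(f^{\ddagger}_{1}(\place,\vec{l})\bigr)$, so Cond.~\ref{item:homOfEqSysStrongRestriction} says directly that $\varphi_{1}$ restricts to an order isomorphism between the two fixed-point posets. Since an order isomorphism sends a least element to a least element and a greatest to a greatest, it transports the $\eta_{1}$-selected fixed point of $E'$ to a well-defined one for $E$, giving the invariant for $i=1$. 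In the step case, unfolding the definition of $f^{\ddagger}_{i+1}$ and combining Cond.~\ref{item:homOfEqSysCompatibility} with the induction hypothesis yields, with $\vec{m}=(l_{i+2},\dots,l_{n})$,
\[
  \varphi_{i+1}\bigl(f^{\ddagger}_{i+1}(l_{i+1},\vec{m})\bigr)
  =
  {f'}^{\ddagger}_{i+1}\bigl(\varphi_{i+1}(l_{i+1}),\varphi(\vec{m})\bigr),
\]
so $\varphi_{i+1}$ maps $\mathrm{Fix}\bigl(f^{\ddagger}_{i+1}(\place,\vec{m})\bigr)$ into $\mathrm{Fix}\bigl({f'}^{\ddagger}_{i+1}(\place,\varphi(\vec{m}))\bigr)$ and is readily seen to be an order embedding.

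The hard part will be upgrading this embedding to an order \emph{isomorphism} of fixed-point posets, because --- unlike at level $1$ --- $L^{(\vec{m})}$ is the poset of \emph{all} simultaneous fixed points of the first $i+1$ equations, which is strictly larger than the image of $\mathrm{Fix}\bigl(f^{\ddagger}_{i+1}(\place,\vec{m})\bigr)$ under $t\mapsto\bigl(l^{(i)}_{1}(t,\vec{m}),\dots,l^{(i)}_{i}(t,\vec{m}),t\bigr)$. (That this map lands in $L^{(\vec{m})}$ is the standard fact that an interim solution is a simultaneous fixed point of the equations solved so far.) To prove surjectivity of $\varphi_{i+1}$ onto $\mathrm{Fix}\bigl({f'}^{\ddagger}_{i+1}(\place,\varphi(\vec{m}))\bigr)$, I would take $t'$ there, use surjectivity of the level-$(i+1)$ isomorphism $\varphi^{(\vec{m})}$ to obtain a preimage $(l_{1},\dots,l_{i+1})\in L^{(\vec{m})}$ with $\varphi_{i+1}(l_{i+1})=t'$, and then apply the \emph{injectivity} of the level-$i$ isomorphism $\varphi^{(l_{i+1},\vec{m})}$ to the two length-$i$ simultaneous fixed points $(l_{1},\dots,l_{i})$ and $\bigl(l^{(i)}_{1}(l_{i+1},\vec{m}),\dots,l^{(i)}_{i}(l_{i+1},\vec{m})\bigr)$, which the induction hypothesis sends to the same tuple; hence they coincide, forcing $l_{i+1}\in\mathrm{Fix}\bigl(f^{\ddagger}_{i+1}(\place,\vec{m})\bigr)$. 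This interplay of the level-$i$ and level-$(i+1)$ instances of Cond.~\ref{item:homOfEqSysStrongRestriction} is exactly what makes that restrictive hypothesis necessary, and I expect it to be the main obstacle.

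Once $\varphi_{i+1}$ is known to be an order isomorphism of the two fixed-point posets, the least/greatest-preservation remark transports the $\eta_{i+1}$-selected fixed point, giving $l^{(i+1)}_{i+1}(\vec{m})$ together with $\varphi_{i+1}\bigl(l^{(i+1)}_{i+1}(\vec{m})\bigr)={l'}^{(i+1)}_{i+1}\bigl(\varphi(\vec{m})\bigr)$. For $j\le i$ the definition $l^{(i+1)}_{j}(\vec{m})=l^{(i)}_{j}\bigl(l^{(i+1)}_{i+1}(\vec{m}),\vec{m}\bigr)$ then propagates the invariant by one further application of the induction hypothesis, closing the induction and hence the proof.
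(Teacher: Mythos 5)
Your proposal is correct and follows essentially the same route as the paper's own proof: induction along the left-to-right elimination of Def.~\ref{def:solOfEqSys}, carrying the invariant $\varphi_{j}\bigl(l^{(i)}_{j}(\vec{l})\bigr)={l'}^{(i)}_{j}\bigl(\varphi(\vec{l})\bigr)$, and using the isomorphisms of Cond.~\ref{item:homOfEqSysStrongRestriction} together with the tuple-of-interim-solutions construction to transport the $\eta_{i}$-selected fixed point of ${f'}^{\ddagger}_{i}$ back to one of $f^{\ddagger}_{i}$ (the paper builds the inverse $\psi^{(\cdot)}_{i}$ on single fixed points and checks minimality directly, whereas you package the same content as an order isomorphism of the fixed-point posets and invoke preservation of extrema). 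The one under-justified step is the claim that the restriction of $\varphi_{i+1}$ to $\mathrm{Fix}\bigl(f^{\ddagger}_{i+1}(\place,\vec{m})\bigr)$ is ``readily seen'' to be an order embedding --- order-reflection there is not free from monotonicity but requires the level-$(i+1)$ isomorphism $\varphi^{(\vec{m})}$ together with the induction hypothesis and monotonicity of the ${l'}^{(i)}_{j}$, by an argument exactly parallel to (and no harder than) your surjectivity proof --- so this is a presentational rather than a mathematical gap.
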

\begin{proof}
By induction on $i$  we shall prove
existence of $\seq{l^{(i)}_{\i}}{i}$ such that
\begin{math}
  \varphi_{j}\bigl(l^{(i)}_{j}(\seqby{l_{\i}}{i+1}{n})\bigr)
  =
  l'^{(i)}_{j}\bigl(\seqby{\varphi_{\i}(l_{\i})}{i+1}{n}\bigr)
\end{math},
for each $j\in[1,i]$ and for any ``parameters'' $\seqby{l_{\i}\in L_{\i}}{i+1}{n}$.
Let us fix $i$ and assume that the claim holds up-to $i-1$. 
There is no need of distinguishing the base case ($i=1$) from the step case:
it is easy to take proper care of the occurrences of $i-1$ in the proof below.
We also assume $\eta_{i}=\mu$; the case when $\eta_{i}=\nu$ is symmetric.

First we shall describe a construction that turns 
 a fixed point of 
 $f'^{\ddagger}_{i}\bigl(\place,\seqby{\varphi(l_{\i})}{i+1}{n}\bigr)$ 
(in $L'_{i}$)
 into 
that of
 $f^{\ddagger}_{i}(\place,\seqby{l_{\i}}{i+1}{n})$
(in $L_{i}$). Recall that we have assumed existence of a solution of $E'$; 
according to Def.~\ref{def:solOfEqSys} this requires that
 $f'^{\ddagger}_{i}\bigl(\place,\seqby{\varphi(l_{\i})}{i+1}{n}\bigr)$ 
has a least point; let it be denoted by
 $\tilde{l}'_{i}$.
Then the following fixed point equality
about $l'^{(i-1)}_{j}\bigl(\tilde{l}'_{i},\,\seqby{\varphi_{\i}(l_{\i})}{i+1}{n}\bigr)$
also holds for each $j\in[1,i-1]$, by the definition of $l'^{(i-1)}_{j}$.
\begin{align*}
  l'^{(i-1)}_{j}\bigl(\tilde{l}'_{i},\,\seqby{\varphi_{\i}(l_{\i})}{i+1}{n}\bigr)
  &\;=\;
  f'_{j}\left(\,
    \begin{multlined}
      l'^{(i-1)}_{1}\bigl(\tilde{l}'_{i},\,\seqby{\varphi_{\i}(l_{\i})}{i+1}{n}\bigr),\:
      \dotsc,
      \\
      l'^{(i-1)}_{j}\bigl(\tilde{l}'_{i},\,\seqby{\varphi_{\i}(l_{\i})}{i+1}{n}\bigr),\:
      \dotsc,
      \\
      l'^{(i-1)}_{i-1}\bigl(\tilde{l}'_{i},\,\seqby{\varphi_{\i}(l_{\i})}{i+1}{n}\bigr),\:
      \tilde{l}'_{i},
      \\
      \seqby{\varphi_{\i}(l_{\i})}{i+1}{n}
    \end{multlined}
  \,\right)
\end{align*}
This means that the following tuple belongs to
$L'^{(\seqby{l_{\i}}{i+1}{n})} \subseteq L'_{1}\times\cdots\times L'_{i}$,
 the domain of
 $\psi^{(\seqby{l_{\i}}{i+1}{n})}$.
\begin{equation*}
  \left(\:
    l'^{(i-1)}_{1}\bigl(\tilde{l}'_{i},\,\seqby{\varphi_{\i}(l_{\i})}{i+1}{n}\bigr),\:
    \dotsc,
    l'^{(i-1)}_{i-1}\bigl(\tilde{l}'_{i},\,\seqby{\varphi_{\i}(l_{\i})}{i+1}{n}\bigr),\:
    \tilde{l}'_{i}
  \:\right)
\end{equation*}
We use a (somewhat confusing) notation of letting $\psi^{(\seqby{l_{\i}}{i+1}{n})}_{i}(\tilde{l}'_{i})$  denote the $i$-th coprojection of
the applied result. That is,
\begin{align*}
  \psi^{(\seqby{l_{\i}}{i+1}{n})}_{i}(\tilde{l}'_{i})
  \;\coloneqq\;
  (\kappa_{i}\co\psi^{(\seqby{l_{\i}}{i+1}{n})})\left(\,
    \begin{multlined}
      l'^{(i-1)}_{1}\bigl(\tilde{l}'_{i},\,\seqby{\varphi_{\i}(l_{\i})}{i+1}{n}\bigr),\:
      \dotsc,
      \\
      l'^{(i-1)}_{i-1}\bigl(\tilde{l}'_{i},\,\seqby{\varphi_{\i}(l_{\i})}{i+1}{n}\bigr),\:
      \tilde{l}'_{i}
    \end{multlined}
  \,\right)
\end{align*}
We shall see that $\psi^{(\seqby{l_{\i}}{i+1}{n})}_{i}(\tilde{l}'_{i})$ is indeed a fixed point of
$f^{\ddagger}_{i}\bigl(\place,\seqby{l_{\i}}{i+1}{n}\bigr)$.
We have
\begin{math}
  \varphi_{i}\bigl(\psi^{(\seqby{l_{\i}}{i+1}{n})}_{i}(\tilde{l}'_{i})\bigr) = \tilde{l}'_{i}
\end{math}, 
 since $\varphi^{(\seqby{l_{\i}}{i+1}{n})}$
is assumed to be the inverse of $\psi^{(\seqby{l_{\i}}{i+1}{n})}$.
Therefore for each $j\in[1,i-1]$ the following holds.
\begin{equation}\label{eq:eqSysFixedPtIsoIndFixedPtExist1}
  l'^{(i-1)}_{j}\left(\tilde{l}'_{i},\,\seqby{\varphi_{\i}(l_{\i})}{i+1}{n}\right)
  \;=\;
  l'^{(i-1)}_{j}\left(\varphi_{i}\bigl(\psi^{(\seqby{l_{\i}}{i+1}{n})}_{i}(\tilde{l}'_{i})\bigr),\,\seqby{\varphi_{\i}(l_{\i})}{i+1}{n}\right)
\end{equation}
Now we use the induction hypothesis
\begin{math}
  \varphi_{j}\bigl(l^{(i-1)}_{j}(\seqby{l_{\i}}{i}{n})\bigr)
  =
  l'^{(i-1)}_{j}\bigl(\seqby{\varphi_{\i}(l_{\i})}{i}{n}\bigr)
\end{math}; substituting
 $\psi^{(\seqby{l_{\i}}{i+1}{n})}_{i}(\tilde{l}'_{i})$ for $l_{i}$ in it
we  have the following.
\begin{equation}\label{eq:eqSysFixedPtIsoIndFixedPtExist2}
  \varphi_{j}\left(\,
    l^{(i-1)}_{j}\bigl(\psi^{(\seqby{l_{\i}}{i+1}{n})}_{i}(\tilde{l}'_{i}),\,\seqby{l_{\i}}{i+1}{n}\bigr)
  \,\right)
  \;=\;
  l'^{(i-1)}_{j}\left(
    \varphi_{i}\bigl(\psi^{(\seqby{l_{\i}}{i+1}{n})}_{i}(\tilde{l}'_{i})\bigr),\,\seqby{\varphi_{\i}(l_{\i})}{i+1}{n}
  \right)
\end{equation}
By (\ref{eq:eqSysFixedPtIsoIndFixedPtExist2}) and
(\ref{eq:eqSysFixedPtIsoIndFixedPtExist1}) we have
\begin{equation*}
  \varphi_{j}\left(\,
    l^{(i-1)}_{j}\bigl(\psi^{(\seqby{l_{\i}}{i+1}{n})}_{i}(\tilde{l}'_{i}),\,\seqby{l_{\i}}{i+1}{n}\bigr)
  \,\right)
  \;=\;
  l'^{(i-1)}_{j}\left(\tilde{l}'_{i},\,\seqby{\varphi_{\i}(l_{\i})}{i+1}{n}\right)
  \,,
\end{equation*}
which implies the following equalities.
\begin{align*}
  \tilde{l}'_{i}
  &\;=\;
  f'_{i}\left(\,
    \begin{multlined}
      \varphi_{1}\left(\,
        l^{(i-1)}_{1}\bigl(\psi^{(\seqby{l_{\i}}{i+1}{n})}_{i}(\tilde{l}'_{i}),\,\seqby{l_{\i}}{i+1}{n}\bigr)
      \,\right),\,
      \dotsc,
      \\
      \varphi_{i-1}\left(\,
        l^{(i-1)}_{i-1}\bigl(\psi^{(\seqby{l_{\i}}{i+1}{n})}_{i}(\tilde{l}'_{i}),\,\seqby{l_{\i}}{i+1}{n}\bigr)
      \,\right),\:
      \varphi_{i}\bigl(\psi^{(\seqby{l_{\i}}{i+1}{n})}_{i}(\tilde{l}'_{i})\bigr),
      \\
      \seqby{\varphi_{\i}(l_{\i})}{i+1}{n}
    \end{multlined}
  \,\right)
  \\
  &\;=\;
  (\varphi_{i}\co f_{i})\left(\,
    \begin{multlined}
      l^{(i-1)}_{1}\bigl(\psi^{(\seqby{l_{\i}}{i+1}{n})}_{i}(\tilde{l}'_{i}),\,\seqby{l_{\i}}{i+1}{n}\bigr),\,
      \dotsc,
      \\
      l^{(i-1)}_{i-1}\bigl(\psi^{(\seqby{l_{\i}}{i+1}{n})}_{i}(\tilde{l}'_{i}),\,\seqby{l_{\i}}{i+1}{n}\bigr),
      \\
      \psi^{(\seqby{l_{\i}}{i+1}{n})}_{i}(\tilde{l}'_{i}),\,
      \seqby{l_{\i}}{i+1}{n}
    \end{multlined}
  \,\right)
  \\
  &\;=\;
  (\varphi_{i}\co f^{\ddagger}_{i})\bigl(
    \psi^{(\seqby{l_{\i}}{i+1}{n})}_{i}(\tilde{l}'_{i}),\,
    \seqby{l_{\i}}{i+1}{n}
  \bigr)
\end{align*}
We shall apply $\psi^{(\seqby{l_{\i}}{i+1}{n})}$ again;
since $\psi^{(\seqby{l_{\i}}{i+1}{n})}$
is the inverse of $\varphi^{(\seqby{l_{\i}}{i+1}{n})}$,
we obtain
\begin{equation*}
  \psi^{(\seqby{l_{\i}}{i+1}{n})}_{i}(\tilde{l}'_{i})
  \;=\;
  f^{\ddagger}_{i}\bigl(
    \psi^{(\seqby{l_{\i}}{i+1}{n})}_{i}(\tilde{l}'_{i}),\,
    \seqby{l_{\i}}{i+1}{n}
  \bigr)
  \,,
\end{equation*}
which means $\psi^{(\seqby{l_{\i}}{i+1}{n})}_{i}(\tilde{l}'_{i})$
is a fixed point of $f^{\ddagger}_{i}(\place,\seqby{l_{\i}}{i+1}{n})$.

Then we focus on the special case
 $\tilde{l}'_{i}=l'^{(i)}_{i}\bigl(\seqby{\varphi_{\i}(l'_{\i})}{i+1}{n}\bigr)$
 (i.e.\ when we specifically choose the least fixed point as
 $\tilde{l}'_{i}$);
we shall
show that $\psi^{(\seqby{l_{\i}}{i+1}{n})}_{i}\bigl(\,l'^{(i)}_{i}\bigl(\seqby{\varphi_{\i}(l'_{\i})}{i+1}{n}\bigr)\,\bigr)$
is the least fixed point. (Recall that $\eta_{i}$ is assumed to be $\mu$.)
Let $\tilde{l}_{i}$ be an arbitrary fixed point, i.e.\  $\tilde{l}_{i}=f^{\ddagger}_{i}(\tilde{l}_i,\seqby{l_{\i}}{i+1}{n})$.
By applying $\varphi_{i}$, we have
\begin{align*}
 & \varphi_{i}(\tilde{l}_{i})
  \\
  &
  \;=\;
  \varphi_{i}\bigl(f^{\ddagger}_{i}(\tilde{l}_i,\seqby{l_{\i}}{i+1}{n})\bigr)
  \\
  &\;=\;
  f'_{i}\left(\,
    \begin{multlined}
      \varphi_{1}\bigl(\,
        l^{(i-1)}_{1}\bigl(\tilde{l}_{i},\seqby{l_{\i}}{i+1}{n}\bigr)
      \,\bigr),\,
      \dotsc,
      \\
      \varphi_{i-1}\bigl(\,
        l^{(i-1)}_{i-1}\bigl(\tilde{l}_{i},\seqby{l_{\i}}{i+1}{n}\bigr)
      \,\bigr),\:
      \varphi_{i}(\tilde{l}_{i}),
      \\
      \seqby{\varphi_{\i}(l_{\i})}{i+1}{n}
    \end{multlined}
  \,\right)
  \\
  &\;=\;
  f'_{i}\left(\,
    \begin{multlined}
      l^{(i-1)}_{1}\bigl(\varphi_{i}(\tilde{l}_{i}),\,\seqby{\varphi_{\i}(l_{\i})}{i+1}{n}\bigr),\,
      \dotsc,
      \\
      l^{(i-1)}_{i-1}\bigl(\varphi_{i}(\tilde{l}_{i}),\,\seqby{\varphi_{\i}(l_{\i})}{i+1}{n}\bigr),\,
      \varphi_{i}(\tilde{l}_{i}),
      \\
      \seqby{\varphi_{\i}(l_{\i})}{i+1}{n}
    \end{multlined}
  \,\right)
   \quad\text{by the induction hypothesis}
  \\
  &\;=\;
  f'^{\ddagger}_{i}\bigl(\varphi_{i}(\tilde{l}_{i}),\,\seqby{\varphi_{\i}(l_{\i})}{i+1}{n}\bigr)
  \,,\quad\text{and}
  \\
  &
  \varphi_{i}\bigl(\,
    \psi^{(\seqby{l_{\i}}{i+1}{n})}_{i}\bigl(l'^{(i)}_{i}(\seqby{l'_{\i}}{i+1}{n})\bigr)
  \,\bigr)
\\
  &\;=\;
  l'^{(i)}_{i}\bigl(\seqby{\varphi_{\i}(l'_{\i})}{i+1}{n}\bigr)
  \,.
\end{align*}
Since $l'^{(i)}_{i}\bigl(\seqby{\varphi_{\i}(l'_{\i})}{i+1}{n}\bigr)$ is the least fixed point of
$f'^{\ddagger}_{i}\bigl(\place,\,\seqby{\varphi_{\i}(l_{\i})}{i+1}{n}\bigr)$,
$l'^{(i)}_{i}\bigl(\seqby{\varphi_{\i}(l'_{\i})}{i+1}{n}\bigr)\sqsubseteq\varphi_{i}(\tilde{l}_{i})$
holds.
Now by applying $\psi^{(\seqby{l_{\i}}{i+1}{n})}_{i}$,
we obtain
\begin{equation*}
  \psi^{(\seqby{l_{\i}}{i+1}{n})}_{i}\bigl(\,l'^{(i)}_{i}\bigl(\seqby{\varphi_{\i}(l'_{\i})}{i+1}{n}\bigr)\,\bigr)
  \;\sqsubseteq\;
  \tilde{l}_{i}
  \;;
\end{equation*}
thus $\psi^{(\seqby{l_{\i}}{i+1}{n})}_{i}\bigl(\,l'^{(i)}_{i}\bigl(\seqby{\varphi_{\i}(l'_{\i})}{i+1}{n}\bigr)\,\bigr)$.

Now we have shown
\begin{math}
  l^{(i)}_{i}(\seqby{l_{\i}}{i+1}{n})
  \;=\;
  \psi^{(\seqby{l_{\i}}{i+1}{n})}_{i}\bigl(\,l'^{(i)}_{i}\bigl(\seqby{\varphi_{\i}(l'_{\i})}{i+1}{n}\bigr)\,\bigr)
\end{math}, from which
\begin{math}
  \varphi_{i}(l^{(i)}_{i}(\seqby{l_{\i}}{i+1}{n}))
  =
  l'^{(i)}_{i}\bigl(\seqby{\varphi_{\i}(l_{\i})}{i+1}{n}\bigr)
\end{math}
easily follows by applying $\varphi_{i}$.
Furthermore, for each $j$ such that $j<i$,  we have
\begin{align*}
  \varphi_{j}(l^{(i)}_{j}(\seqby{l_{\i}}{i+1}{n}))
  &\;=\;
  \varphi_{j}\bigl(\,
    l^{(i-1)}_{j}\bigl(l^{(i)}_{i}(\seqby{l_{\i}}{i+1}{n}),\,\seqby{l_{\i}}{i+1}{n}\bigr)
  \,\bigr)
  \\
  &\;=\;
  l^{(i-1)}_{j}\bigl(\,
    \varphi_{i}\bigl(l^{(i)}_{i}(\seqby{l_{\i}}{i+1}{n})\bigr),\,\seqby{\varphi_{\i}(l_{\i})}{i+1}{n}
  \,\bigr)
  \\
  &\;=\;
  l^{(i-1)}_{j}\bigl(\,
    l^{(i)}_{i}\bigl(\seqby{\varphi_{\i}(l_{\i})}{i+1}{n}\bigr),\,\seqby{\varphi_{\i}(l_{\i})}{i+1}{n}
  \,\bigr)
  \\
  &\;=\;
  l'^{(i)}_{j}\bigl(\seqby{\varphi_{\i}(l_{\i})}{i+1}{n}\bigr)
  \,.
  \qedhere
\end{align*}
\end{proof}

\begin{mylemma}
\label{lem:eqSysWithOmegaChainMap}
Let $E$ and $E'$ be equational systems, as in Lem.~\ref{lem:eqSysWithFixedPtIso},
over $L_{1},\dotsc,L_{n}$ and $L'_{1},\dotsc,L'_{n}$, respectively.
We assume the same conditions as in Lem.~\ref{lem:eqSysSolvableOmegaCont} for both $E$ and $E'$, that is:
all $L_{i}$ and $L'_{i}$ are pointed $\omega$/$\omega^{\op}$-cpo's; and
all $f_{i}$ and $f'_{i}$ are $\omega$/$\omega^{\op}$-continuous.
Let 
\begin{equation*}
  \varphi_{1}\colon L_{1} \to L'_{1}\,,
  \quad\dotsc\,,\quad
  \varphi_{n}\colon L_{n} \to L'_{n}
\end{equation*}
be monotone functions such that:
\begin{enumerate}
  \item each $\varphi_{i}$ is both $\omega$-continuous and
        $\omega^{\op}$-continuous;
  \item each $\varphi_{i}$ preserves greatest and least elements
        ($\varphi_{i}(\top) = \top$ and $\varphi_{i}(\bot) = \bot$); and
  \item the following diagram commutes for each $i\in [1,n]$.
    \begin{displaymath}
      \vcenter{\xymatrix@R=.6em@C+2em{
        {L_{1}\times\cdots \times L_{n}}
            \ar[d]_{f_{i}}
            \ar[r]^{\varphi_{1}\times\cdots\times\varphi_{n}}
        &
        {L'_{1}\times\cdots \times L'_{n}}
            \ar[d]^{f'_{i}}
        \\
        {L_{i}}
            \ar[r]_{\varphi_{i}}
        &
        {L'_{i}}
      }}
    \end{displaymath}
\end{enumerate}
Then $\varphi_{i}(l^{\sol}_{i}) = l'^{\sol}_{i}$ holds for each $i\in[1,n]$.
\end{mylemma}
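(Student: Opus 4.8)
The plan is to mimic the recursive definition of the solution in Definition~\ref{def:solOfEqSys} and to prove, by induction on $i\in[1,n]$, the parametrized statement that for every $j\in[1,i]$ and all parameters $l_{i+1}\in L_{i+1},\dotsc,l_{n}\in L_{n}$ we have
\[
  \varphi_{j}\bigl(l^{(i)}_{j}(\seqby{l_{\i}}{i+1}{n})\bigr)
  \;=\;
  l'^{(i)}_{j}\bigl(\seqby{\varphi_{\i}(l_{\i})}{i+1}{n}\bigr)
  \,.
\]
Instantiating this at $i=n$, where no parameters remain and $l^{(n)}_{j},l'^{(n)}_{j}$ are the closed solutions $l^{\sol}_{j},l'^{\sol}_{j}$, yields $\varphi_{j}(l^{\sol}_{j})=l'^{\sol}_{j}$ for all $j$, which is the claim. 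The crucial ingredient, which here replaces the order-isomorphism hypothesis (Cond.~\ref{item:homOfEqSysStrongRestriction}) of Lemma~\ref{lem:eqSysWithFixedPtIso}, is a Kleene-style preservation fact for a \emph{single} fixed point.

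First I would isolate that single-fixed-point fact: if $\varphi\colon L\to L'$ is both $\omega$-continuous and $\omega^{\op}$-continuous and preserves $\bot$ and $\top$, and if $g\colon L\to L$ and $g'\colon L'\to L'$ are $\omega$/$\omega^{\op}$-continuous with $\varphi\co g=g'\co\varphi$, then $\varphi(\eta[g])=\eta[g']$ for either $\eta\in\{\mu,\nu\}$. For $\eta=\mu$ this is immediate from the Kleene formula $\lfp(g)=\bigsqcup_{k<\omega}g^{k}(\bot)$: using $\varphi(\bot)=\bot$ and $\varphi\co g^{k}=g'^{k}\co\varphi$ one computes $\varphi(\lfp(g))=\bigsqcup_{k<\omega}\varphi(g^{k}(\bot))=\bigsqcup_{k<\omega}g'^{k}(\bot)=\lfp(g')$, the first step using $\omega$-continuity of $\varphi$. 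The case $\eta=\nu$ is exactly dual, via $\gfp(g)=\bigsqcap_{k<\omega}g^{k}(\top)$, the $\omega^{\op}$-continuity of $\varphi$, and $\varphi(\top)=\top$. This is precisely why the hypotheses demand \emph{both} continuity conditions and \emph{both} element-preservation conditions: they let $\mu$- and $\nu$-equations be treated uniformly.

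For the induction step I would first establish the intertwining of the step-$i$ operators,
\[
  \varphi_{i}\bigl(f^{\ddagger}_{i}(\seqby{l_{\i}}{i}{n})\bigr)
  \;=\;
  f'^{\ddagger}_{i}\bigl(\seqby{\varphi_{\i}(l_{\i})}{i}{n}\bigr)
  \,,
\]
obtained by unfolding the definition of $f^{\ddagger}_{i}$, applying the compatibility square (Cond.~3) to rewrite $\varphi_{i}\co f_{i}$ as $f'_{i}\co(\varphi_{1}\times\cdots\times\varphi_{n})$, and then rewriting each argument $\varphi_{j}(l^{(i-1)}_{j}(\dotsc))$ using the induction hypothesis at stage $i-1$. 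Hence $g:=f^{\ddagger}_{i}(\place,\seqby{l_{\i}}{i+1}{n})$ and $g':=f'^{\ddagger}_{i}(\place,\seqby{\varphi_{\i}(l_{\i})}{i+1}{n})$ are intertwined by $\varphi_{i}$, and both are $\omega$/$\omega^{\op}$-continuous (this continuity of the interim operators is already part of what underlies Lemma~\ref{lem:eqSysSolvableOmegaCont}). The single-fixed-point fact then gives the case $j=i$, namely $\varphi_{i}(l^{(i)}_{i}(\seqby{l_{\i}}{i+1}{n}))=l'^{(i)}_{i}(\seqby{\varphi_{\i}(l_{\i})}{i+1}{n})$. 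For $j<i$ I would use the defining relation $l^{(i)}_{j}(\seqby{l_{\i}}{i+1}{n})=l^{(i-1)}_{j}\bigl(l^{(i)}_{i}(\seqby{l_{\i}}{i+1}{n}),\seqby{l_{\i}}{i+1}{n}\bigr)$: applying $\varphi_{j}$, the induction hypothesis at stage $i-1$ (with $l^{(i)}_{i}(\seqby{l_{\i}}{i+1}{n})$ substituted as the first parameter) pushes $\varphi_{j}$ inside, and the just-proved case $j=i$ rewrites $\varphi_{i}(l^{(i)}_{i})$ as $l'^{(i)}_{i}$, giving $l'^{(i)}_{j}$.

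The hard part is essentially bookkeeping rather than mathematics: the identities hold between \emph{functions} of the trailing variables $l_{i+1},\dotsc,l_{n}$, so the induction hypothesis must be carried parametrically and the arguments $\varphi_{\i}(l_{\i})$ substituted on the primed side at each substitution. There is no real fixed-point subtlety---unlike in Lemma~\ref{lem:eqSysWithFixedPtIso}, where the lack of Kleene approximants forced the order-isomorphism assumption and a separate least/greatest-fixed-point comparison, the $\omega$/$\omega^{\op}$-continuity here reduces fixed-point preservation to the elementary chain computation above. The one point I would verify carefully at the outset is that every $f^{\ddagger}_{i}$ and every interim solution $l^{(i)}_{j}$ is again $\omega$- and $\omega^{\op}$-continuous, so that the single-fixed-point fact applies at each stage of the recursion.
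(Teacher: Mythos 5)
Your proposal is correct and follows essentially the same route as the paper's proof: induction on $i$, intertwining of $f^{\ddagger}_{i}$ and $f'^{\ddagger}_{i}$ via the compatibility square and the induction hypothesis, and then pushing $\varphi_{i}$ through the Kleene chain $\bigsqcup_{k<\omega}[f^{\ddagger}_{i}(\place,\dotsc)]^{k}(\bot)$ (dually $\bigsqcap_{k<\omega}(\cdot)^{k}(\top)$ for $\nu$-variables) using $\omega$/$\omega^{\op}$-continuity and preservation of $\bot$/$\top$. The only cosmetic difference is that you isolate the single-fixed-point preservation step as an explicit auxiliary fact, where the paper leaves it as a ``straightforward induction''; the substance is identical.
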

\begin{proof}
By induction on $i$, we shall show
\begin{math}
  \varphi_{i}\bigl(l^{(i)}_{j}(\seqby{l_{\i}}{i+1}{n})\bigr)
  =
  l'^{(i)}_{j}\bigl(\seqby{\varphi_{\i}(l_{\i})}{i+1}{n}\bigr)
\end{math}
and
\begin{math}
  \varphi_{i}(f^{\ddagger}_{i}\bigl(\seqby{l_{\i}}{i+1}{n})\bigr)
  =
  f'^{\ddagger}_{i}\bigl(\seqby{\varphi_{\i}(l_{\i})}{i+1}{n}\bigr)
\end{math}.
As in the proof of Lem.~\ref{lem:eqSysSolvableOmegaCont},
we do not distinguish the case $i=1$, and assume $\eta_{i}=\mu$.

The structure of the proof also resembles to that of Lem.~\ref{lem:eqSysSolvableOmegaCont}.
We can easily check the claim for the function $f^{\ddagger}_{i}$, by induction hypothesis.
For the solution $l^{(i)}_{i}$, recall that the proof of Lem.~\ref{lem:eqSysSolvableOmegaCont} asserts that $l^{(i)}_{i}$ is equal to
$\bigsqcup_{j<\omega}[f^{\ddagger}_{i}(\place,\seqby{l_{\i}}{i+1}{n})]^{j}(\bot)$,
with continuity of $f^{\ddagger}_{i}$.
Thus we have
\begin{math}
  \varphi_{i}\bigl(l^{(i)}_{i}(\seqby{l_{\i}}{i+1}{n})\bigr)
  =
  l'^{(i)}_{i}\bigl(\seqby{\varphi_{\i}(l_{\i})}{i+1}{n}\bigr)
\end{math}
by straightforward induction.
Then the claim easily follows also for each $l'^{(i)}_{j}$.
\end{proof}

\section{Generative Probabilistic Parity Tree Automata}
\label{sec:GenProbParityTreeAutom}
\subsection{Generative Systems and Reactive Systems}
\label{subsec:GenAndReactSys}
The notion of probabilistic tree automaton we study in this paper as an
example is a \emph{generative} one. This is in contrast to
\emph{reactive} probabilistic systems
(studied e.g.\ in~\cite{CarayolHS14ria}): a generative system
\emph{generates} a (possibly infinite) tree
(Fig.~\ref{fig:execProbTreeAutom} is a step-by-step illustration
of a generation process)---hence the probability 
with which each single tree is generated is zero except for some
singular cases---whereas a reactive system takes a tree as input and 
assigns a probability to it.
The difference can be technically
formulated in the types of transition functions:
\begin{align*}
& X\longrightarrow \giry(\coprod_{\sigma\in\Sigma} X^{|\sigma|})
\quad\text{for generative;} \\ %\qquad
& X\longrightarrow \prod_{\sigma\in\Sigma}\giry( X^{|\sigma|})
\quad\text{for reactive.}
\end{align*}
The difference has been discussed extensively for \emph{word} (instead of tree) automata.
See e.g.~\cite{vanGlabbeekSST90rga,Sokolova05cao,Doberkat09scl}.

In the current generative (as opposed to reactive)
setting, it does not make much sense to talk about the probability with
which each single tree is generated. For example let
$\Sigma=\{\hd,\tl\}$ and assume that each operation is unary.
The generative automaton in Fig.~\ref{fig:faircoin} is then a model of a fair coin; and
it generates any single infinite sequence with probability $0$.
This is a prototypical one that motivates the need for measure
theory in the context of probabilistic systems (like in~\cite{Panangaden09lmp});
note that the set of $\Sigma$-trees (that are $\Sigma$-words if all symbols are unary)
is uncountable.

\begin{figure}[tbp]
  \begin{math}
  \vcenter{\xymatrix{
    *++[o][F=]{\phantom{3}}
      \ar@(ul,ur)[]^{\hd[\frac{1}{2}]}
      \ar@(dr,dl)[]^{\tl[\frac{1}{2}]}
    \save+<-2pc,0pc>\ar[]\restore
  }}
  \end{math}
  \caption{A generative automaton for a fair coin.}
  \label{fig:faircoin}
\end{figure}
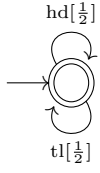

\subsection{Languages of Generative Probabilistic Tree Automata}
\begin{mydefinition}[measurable structures of $\myTree_{\Sigma}$ and $\Run_{\X}$]
\label{def:measurableStrOnTreesAndRuns}
Let $\lambda$ be a partial $\Sigma$-tree.
The \emph{cylinder set} associated to $\lambda$, denoted by
$\Cyl_{\Sigma}(\lambda)$, is the set of (proper, non-partial) $\Sigma$-trees
that have $\lambda$ as their ``prefix.'' That is,
\begin{equation*}
  \Cyl_{\Sigma}(\lambda)
  \;\coloneqq\;
  \bigl\{
    \tau\in\myTree_{\Sigma}
    \,\big|\,
    \forall w\in \Dom(\lambda).\;
    \bigl(
      \lambda(w)=\tau(w)
      \text{ or }
      \lambda(w)=\ast
    \bigr)
  \bigr\}
  \,.
\end{equation*}
The (smallest) $\sigma$-algebra generated by the family
$\{\Cyl_{\Sigma}(\lambda)\mid \text{$\lambda$ is a partial $\Sigma$-tree}\}$
will be denoted by $\mathfrak{F}_{\Sigma}$. 

For a partial run $\xi$ of $\X$, the \emph{cylinder set}
$\Cyl_{\X}(\xi)\subseteq \Run_\X $ associated to $\xi$ is defined
similarly but slightly differently. Precisely:
\begin{equation*}
  \Cyl_{\X}(\xi)
  \;\coloneqq\;
  \left\{
    \rho\in\Run_{\X}
    \,\middle|\,
    \forall w\in \Dom(\xi).\;
    \left(
      \begin{aligned}
        &
        \pi_{1}(\xi(w))=\pi_{1}(\rho(w))
        \text{ or }
        \pi_{1}(\xi(w))=\ast
        \text{; and}
        \\
        &
        \pi_{2}(\xi(w))=\pi_{2}(\rho(w))
      \end{aligned}
    \right)
  \right\}
  \,.
\end{equation*}
Here $\pi_{1}(\sigma,x)=\sigma$ and $\pi_{2}(\sigma,x)=x$.
% Similarly, given a partial run $\xi$ of $\X$, we define the
% \emph{cylinder set} $\Cyl_{\X}(\xi)\subseteq \Run_\X $ associated to $\xi$ to be
% the set of all runs of $\X$ that have $\xi$ as their prefix. 
These cylinder sets generate a $\sigma$-algebra over
$\Run_{\X}$, which shall be denoted by $\mathfrak{F}_{\X}$.
\end{mydefinition}

\auxproof{
The notion of \emph{no-divergence}, in Def.~\ref{def:LangForPPTAs} defined
as the function $\NDL_{\X}\colon X\to [0,1]$ by the infimum
$\bigwedge_{k\in\nat} \NDL_{\X,k}(x)$, might need some explanation.
\begin{myremark}[$\NDL_{\X},\DL_{\X}$]\label{rem:noDeadendDeadendProb}
Let $\DL_{\X}(x):=1-\NDL_{\X}(x)$ (the probability for \emph{divergence}),
$\DL_{\X,k}(x):=1-\NDL_{\X,k}(x)$ (the probability for
observing divergence within depth $k$) and let $\DL_{\X,=k}(x):=
\DL_{\X,k}(x)-\DL_{\X,k-1}(x)$ for each $k\ge 1$. The latter probability
$\DL_{\X,=k}(x)$ is for the event ``the first divergence is encountered 
exactly at depth $k$''; and we can take the following equality for
granted, since the events for $\DL_{\X,=k}(x)$ for different $k$ are
clearly mutually
exclusive.
\begin{displaymath}
 \DL_{\X}(x) \;=\; \sum_{k=1}^{\infty} \DL_{\X,=k}(x)
\end{displaymath}
Now we have
\begin{multline*}
 \NDL_{\X}(x)
=
1-\DL_{\X}(x)
\\
\stackrel{\text{above}}{=}
1-\sum_{k=1}^{\infty} \DL_{\X,=k}(x)
=
1-\lim_{k\to\infty} \DL_{\X,k}(x)
\\
=
\lim_{k\to\infty} 1-\DL_{\X,k}(x)
\\
=
\lim_{k\to\infty} \NDL_{\X,k}(x)
=
\bigwedge_{k\in\nat} \NDL_{\X,k}(x)\enspace,
\end{multline*}
which derives the definition in Def.~\ref{def:LangForPPTAs}. 

Note that the infimum $\bigwedge_{k\in\nat}  \NDL_{\X,k}(x)$
is in fact a limit $\lim_{k\to\infty} \NDL_{\X,k}(x)$ for any $x\in X$,
because the family $\bigl(\,\NDL_{\X,k}(x)\,\bigr)_{k\in\nat}$ is easily seen to be a
descending chain with a lower bound $0$.
\end{myremark}

The notion of \emph{no-divergence}, in Def.~\ref{def:NoDivergence},
can be characterized as the (greatest) fixed-point, as follows.
We will use this fixed-point characterization of $\NDL_{\X}$ later.
A similar observation is found in~\cite{EtessamiSY15gfp}.
}

\begin{mylemma}\label{lem:NDLAsAGreatestFixedPoint}
 The function $\NDL_{\X}\colon X\to [0,1]$ is the greatest fixed point 
 of the function $\Psi'_{\X}\colon [0,1]^{X}\to [0,1]^{X}$,
 defined by% (cf.~(\ref{eq:10151159}))
\begin{equation*}%\label{eq:10151732}
  \Psi_{\X}'(f)(x)
  \;\coloneqq\;
  \textstyle
  \sum\limits_{
    (\sigma,(\seq{x_{\i}}{|\sigma|}))
    \in \coprod_{\sigma\in \Sigma} X^{|\sigma|}
  }
  \delta(x)\bigl(\sigma,(\seq{x_{\i}}{|\sigma|})\bigr)\cdot
  \prod_{i\in[1,|\sigma|]}
  \NDL_{\X,k} (x_{i})
  \,.
\end{equation*}
\end{mylemma}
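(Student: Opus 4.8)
The plan is to recognise the sequence $\bigl(\NDL_{\X,k}\bigr)_{k\in\nat}$ from Def.~\ref{def:NoDivergence} as the Kleene iteration of $\Psi'_{\X}$ from the top element, and then to invoke the dual of the standard Kleene fixed-point theorem. Concretely, the constant function $\top(x)=1$ is exactly $\NDL_{\X,0}$, and the defining recursion $\NDL_{\X,k+1}(x)=\sum_{(\sigma,(x_1,\dots,x_{|\sigma|}))}\delta(x)\bigl(\sigma,(x_1,\dots,x_{|\sigma|})\bigr)\cdot\prod_{i}\NDL_{\X,k}(x_i)$ says precisely that $\NDL_{\X,k+1}=\Psi'_{\X}(\NDL_{\X,k})$. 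Hence $\NDL_{\X,k}=(\Psi'_{\X})^{k}(\top)$ by an easy induction, and by definition $\NDL_{\X}=\bigwedge_{k\in\nat}\NDL_{\X,k}=\bigwedge_{k\in\nat}(\Psi'_{\X})^{k}(\top)$. It therefore suffices to show that this descending iteration computes the greatest fixed point, i.e.\ that $\Psi'_{\X}$ is monotone and $\omega^{\op}$-continuous on the pointed $\omega^{\op}$-cpo $[0,1]^{X}$.

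The easy preliminaries I would dispatch first. The poset $[0,1]^{X}$ (pointwise order) has top element $\top$ and admits infima of descending $\omega$-chains, computed pointwise as infima in $[0,1]$, so it is a pointed $\omega^{\op}$-cpo. Well-definedness of $\Psi'_{\X}$ into $[0,1]^{X}$ uses that $\delta(x)$ is a \emph{sub}probability measure: $\Psi'_{\X}(f)(x)\le\sum_{(\sigma,\vec x)}\delta(x)(\sigma,\vec x)=\delta(x)\bigl(\coprod_{\sigma}X^{|\sigma|}\bigr)\le 1$. Monotonicity is immediate, since each summand is a product of finitely many nonnegative monotone factors weighted by a nonnegative coefficient.

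The crux is $\omega^{\op}$-continuity. Let $f_{0}\sqsupseteq f_{1}\sqsupseteq\cdots$ be a descending chain in $[0,1]^{X}$ with pointwise infimum $g$, and fix $x\in X$. For each index $(\sigma,\vec x)=(\sigma,(x_1,\dots,x_{|\sigma|}))\in\coprod_{\sigma}X^{|\sigma|}$ put $a^{(k)}_{(\sigma,\vec x)}\coloneqq\delta(x)(\sigma,\vec x)\cdot\prod_{i\in[1,|\sigma|]}f_{k}(x_i)$. Since the product is over \emph{finitely} many factors and multiplication on $[0,1]$ is continuous, $a^{(k)}_{(\sigma,\vec x)}$ decreases to $\delta(x)(\sigma,\vec x)\cdot\prod_{i}g(x_i)$ as $k\to\infty$. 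Moreover $0\le a^{(k)}_{(\sigma,\vec x)}\le\delta(x)(\sigma,\vec x)$, and $\sum_{(\sigma,\vec x)}\delta(x)(\sigma,\vec x)\le 1$ is a summable dominating family over the countable index set (here countability of $X$ and $\Sigma$ is used). Hence dominated convergence permits exchanging the limit with the countable sum: $\lim_{k}\Psi'_{\X}(f_{k})(x)=\lim_{k}\sum_{(\sigma,\vec x)}a^{(k)}_{(\sigma,\vec x)}=\sum_{(\sigma,\vec x)}\lim_{k}a^{(k)}_{(\sigma,\vec x)}=\Psi'_{\X}(g)(x)$. As $x$ was arbitrary this yields $\Psi'_{\X}\bigl(\bigwedge_{k}f_{k}\bigr)=\bigwedge_{k}\Psi'_{\X}(f_{k})$, which is $\omega^{\op}$-continuity. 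I expect this interchange of an infimum (limit) with a countable sum to be the only genuine obstacle, and it is exactly where the subprobability constraint supplies the needed dominating summable family.

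With these facts in hand the conclusion is the dual of the standard Kleene argument (as used for Lem.~\ref{lem:eqSysSolvableOmegaCont}): setting $g\coloneqq\bigwedge_{k}(\Psi'_{\X})^{k}(\top)$, $\omega^{\op}$-continuity gives $\Psi'_{\X}(g)=\bigwedge_{k}(\Psi'_{\X})^{k+1}(\top)=g$ (the omitted $k=0$ term $\top$ does not affect the infimum), so $g$ is a fixed point; and for any fixed point $h$, monotonicity gives $h=(\Psi'_{\X})^{k}(h)\sqsubseteq(\Psi'_{\X})^{k}(\top)$ for all $k$, whence $h\sqsubseteq g$. Thus $g=\gfp(\Psi'_{\X})$, and by the identification above $g=\bigwedge_{k}\NDL_{\X,k}=\NDL_{\X}$, as required.
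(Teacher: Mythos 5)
Your proof is correct and follows essentially the same route as the paper's: both identify $\NDL_{\X,k}$ as the Kleene iteration $(\Psi'_{\X})^{k}(\top)$ of the monotone, $\omega^{\op}$-continuous map $\Psi'_{\X}$ starting from the top element, and conclude via the dual Kleene argument that the pointwise infimum $\NDL_{\X}$ is the greatest fixed point. The only difference is that you spell out the continuity step (the interchange of the descending limit with the countable sum, via domination by the summable family $\delta(x)$), which the paper dismisses as ``easily seen.''
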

\begin{proof}
The proof is essentially by Kleene's fixed point theorem.
Consider the sequence $\NDL_{\X,0},\NDL_{\X,1},\dotsc\colon X\to
[0,1]$ in Def.~\ref{def:NoDivergence}.
Then $\NDL_{\X,0}$ is the greatest element in $[0,1]^{X}$
(with respect to the pointwise order) and
the sequence is obviously decreasing.
Moreover, the function $\Psi'_{\X}$ is easily seen
to be ``continuous'' in the sense that
$\Psi'_{\X}(\bigwedge_{k\in\nat}\allowbreak\NDL_{\X,k})=\bigwedge_{k\in\nat}\Psi'_{\X}(\NDL_{\X,k})$.
Therefore by an argument similar to the one for Kleene's theorem,
$\bigwedge_{k\in\nat}\allowbreak\NDL_{\X,k}=\NDL_{\X}$
is the greatest fixed point of $\Psi'_{\X}$.
\end{proof}

\begin{mylemma}\label{lem:muXOnRunXExt}
The subprobability pre-measure $\mu_{\X}^{\Run}$ over cylinder sets
defined in~(\ref{eq:01171719}) of
Def.~\ref{def:NoDivergence} determines uniquely a subprobability measure
over the whole $\sigma$-algebra $\sigalg_{\X}$.
\end{mylemma}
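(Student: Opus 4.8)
The plan is to realise $\mu^{\Run}_\X$ as a countably additive premeasure on the cylinder sets and extend it by the Carath\'eodory--Hahn theorem. First I would record the combinatorial structure of cylinders. The family $\mathcal C=\{\Cyl_\X(\xi)\mid \xi\text{ a partial run}\}\cup\{\emptyset\}$ is a $\pi$-system: $\Cyl_\X(\xi_1)\cap\Cyl_\X(\xi_2)$ is empty when $\xi_1,\xi_2$ disagree on a common node, and otherwise equals $\Cyl_\X(\xi)$ for their least common refinement. The crucial identity is the one-step decomposition of a cylinder at an $\ast$-leaf: if $\xi$ carries $(\ast,x)$ at a node $w$, then
\[
  \Cyl_\X(\xi)\;=\;\bigsqcup_{(\sigma,(x_1,\dots,x_{|\sigma|}))\in\coprod_\sigma X^{|\sigma|}}\Cyl_\X(\xi'),
\]
a countable disjoint union, where $\xi'$ relabels $w$ by $(\sigma,x)$ and attaches fresh $\ast$-leaves $(\ast,x_i)$. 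Iterating this until shapes match shows that every difference $\Cyl_\X(\xi_1)\setminus\Cyl_\X(\xi_2)$ is again a countable disjoint union of cylinders, and $\Run_\X=\bigsqcup_{x\in X}\Cyl_\X\bigl((\ast,x)\bigr)$ is one too; this is exactly the structure under which a countably additive premeasure on $\mathcal C$ extends uniquely to $\sigma(\mathcal C)=\sigalg_\X$.

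Second, I would verify that $\mu^{\Run}_\X$ is consistently (finitely) additive. The only nonroutine point is additivity across the decomposition above, and this is precisely where Lem.~\ref{lem:NDLAsAGreatestFixedPoint} enters. Writing $x$ for the state at the relevant $\ast$-leaf and peeling off the factors of $P_\X(\xi)$ that are untouched, the expansion replaces the leaf factor $\NDL_\X(x)$ by $\delta(x)(\sigma,\vec x)\prod_i\NDL_\X(x_i)$, so that
\[
  \sum_{(\sigma,\vec x)}P_\X(\xi')
  \;=\;\frac{P_\X(\xi)}{\NDL_\X(x)}\sum_{(\sigma,\vec x)}\delta(x)(\sigma,\vec x)\prod_i\NDL_\X(x_i)
  \;=\;\frac{P_\X(\xi)}{\NDL_\X(x)}\,\Psi'_\X(\NDL_\X)(x)
  \;=\;P_\X(\xi),
\]
using $\Psi'_\X(\NDL_\X)=\NDL_\X$, with the case $\NDL_\X(x)=0$ being trivial since then every term vanishes. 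As $\myroot(\xi')=\myroot(\xi)$, multiplying by $s(\myroot(\xi))$ yields $\mu^{\Run}_\X(\Cyl_\X(\xi))=\sum_{(\sigma,\vec x)}\mu^{\Run}_\X(\Cyl_\X(\xi'))$, which is exactly the additivity needed to make $\mu^{\Run}_\X$ a well-defined, nonnegative, finitely additive set function, bounded by $\sum_{x}s(x)\le 1$ since $s\in\giry X$ is a subprobability.

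Third comes the analytic core: countable additivity of $\mu^{\Run}_\X$, equivalently continuity from above at $\emptyset$. This is the step I expect to be the main obstacle, because the branching $\coprod_\sigma X^{|\sigma|}$ is only countable (not finite), so a na\"ive K\"onig-lemma compactness argument on the tree of ``surviving'' partial runs is unavailable. The clean way around it is to exploit the monotone structure already used to define $\NDL_\X$: the approximants $\NDL_{\X,k}\downarrow\NDL_\X$ form a decreasing limit (Def.~\ref{def:NoDivergence}, Lem.~\ref{lem:NDLAsAGreatestFixedPoint}), so the mass carried on depth-$k$ cylinders decreases to the genuinely surviving mass, which gives the required continuity. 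Equivalently one can bypass continuity-at-$\emptyset$ entirely and construct $\mu^{\Run}_\X$ directly by the Ionescu--Tulcea theorem applied to the sub-stochastic kernel $\delta$ (no topological hypothesis, countable branching allowed), then check that the resulting measure agrees with~(\ref{eq:01171719}) on cylinders.

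Finally I would invoke Carath\'eodory to obtain a measure on $\sigalg_\X$; its total mass is $\sum_x s(x)\NDL_\X(x)\le 1$, so the extension is a genuine subprobability measure. Uniqueness follows from the Dynkin $\pi$--$\lambda$ theorem: $\mathcal C$ is a $\pi$-system generating $\sigalg_\X$, any two extensions agree on it by construction, and $\Run_\X$ is a countable union of cylinders of finite measure, so the two extensions coincide on all of $\sigalg_\X$.
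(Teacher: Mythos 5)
Your proof follows essentially the same route as the paper's: decompose a cylinder at an $\ast$-leaf into the countable disjoint union of its one-step refinements, reduce additivity over that decomposition to the fixed-point identity $\Psi'_{\X}(\NDL_{\X})=\NDL_{\X}$ from Lem.~\ref{lem:NDLAsAGreatestFixedPoint}, and conclude by Carath\'eodory's extension theorem. The only difference is that you explicitly address the passage from additivity over one-step refinements to full $\sigma$-additivity of the premeasure (via the decreasing approximants $\NDL_{\X,k}$, or alternatively Ionescu--Tulcea), a point the paper's proof leaves implicit; the core computation is identical.
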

\begin{proof}
We rely on Carath\'{e}odory's extension theorem~\cite{Doberkat09scl} here.
For using the theorem, since we have
\begin{displaymath}
  \Cyl_{\X}(\xi)
  \;=\;
  \coprod\nolimits_{
    (\sigma,(\seq{x_{\i}}{|\sigma|}))
    \in \coprod\limits_{\sigma\in \Sigma} X^{|\sigma|}
  }
  \Cyl_{\X}(\xi_{w,\sigma,\seq{x_{\i}}{|\sigma|}})
  \,,
\end{displaymath}
it suffices to show what follows.
\begin{quotation}
  Let $\xi$ be a partial run of $\X$, $w\in\nat^{*}$ be such that $w\in
  \Dom(\xi)$ and $\xi(w)=(\ast, x)$ (hence $w$ is a leaf of $\xi$).
  For each $\sigma\in\Sigma$ and $x_{0},\dotsc,x_{|\sigma|-1}\in X$,
  let $\xi_{w,\seq{x_{\i}}{|\sigma|}}$ be the partial run
  that ``extends'' the leaf $w$ with
  $\bigl(\sigma,(\seq{x_{\i}}{|\sigma|})\bigr)$. Precisely:
  \begin{align*}
    \Dom(\xi_{w,\seq{x_{\i}}{|\sigma|}})
    &\;:=\;
    \Dom(\xi)\cup\{\seq{w\i}{|\sigma|}\}
    \,,
    \\
    \xi_{w,\sigma,\seq{x_{\i}}{|\sigma|}}(w')
    &\;:=\;
    \begin{cases}
      (\sigma,\xi(w)) &\text{if $w'=w$}
      \\
      (\ast,x_{i}) &\text{if $w'=wi$}
      \\
      \xi(w') &\text{otherwise.}
    \end{cases}
  \end{align*}
  Then
  \begin{displaymath}
    \mu_{\X}^{\Run_{\X}}\bigl(\Cyl_{\X}(\xi)\bigr)
    \;=\;
    \sum\nolimits_{
      (\sigma,(\seq{x_{\i}}{|\sigma|}))
      \in \coprod\limits_{\sigma\in \Sigma} X^{|\sigma|}
    }
    \mu_{\X}^{\Run_{\X}}\bigl(\Cyl_{\X}(\xi_{w,\sigma,\seq{x_{\i}}{|\sigma|}})\bigr).
  \end{displaymath}
\end{quotation}
To show this claim, by the bottom-up way of
the definition of $P_{\X}$ (Def.~\ref{def:NoDivergence}),
it suffices to show that
\begin{equation*}
%   &P_{\X}(x) =
% \sum_{\bigl(\sigma,(x_{0},\dotsc,x_{|\sigma|-1})\bigr)\in
%   \coprod_{\sigma\in \Sigma} X^{|\sigma|}}
%      P_{\X}\left(
%    \begin{minipage}{.25\textwidth}\centering
%        \begin{tikzpicture}[level distance=3em,sibling distance=.5em]
%  \Tree[.$(\sigma,x)$
% 	[.$x_{0}$ 
% 	]
%              \edge[draw=none] node {};
%              [.\node (hoge) {$\cdots$}; ]
% 	[.$x_{|\sigma|-1}$ 
% 	]
% 	]
%        \end{tikzpicture}
%    \end{minipage}
%   \right)\enspace, \quad\text{that is}
%  \\
  \NDL_{\X}(x)
  \;=\;
  \sum\nolimits_{
    (\sigma,(\seq{x_{\i}}{|\sigma|}))
    \in \coprod\limits_{\sigma\in \Sigma} X^{|\sigma|}
  }
  \delta(x)
  \bigl(\sigma,(\seq{x_{\i}}{|\sigma|})\bigr)
  \cdot
  \prod_{i\in[1,|\sigma|]}
  \NDL_{\X}(x_{i})
  \,.
\end{equation*}
This just means that $\NDL_{\X}$ is a fixed point of $\Psi'_{\X}$,
a fact proved in Lem.~\ref{lem:NDLAsAGreatestFixedPoint}.
\end{proof}

\section{Omitted Proofs}
\label{sec:omittedProofs}
\subsection{Proof of Lem.~\ref{lem:eqSysSolvableOmegaCont}}
\begin{proof}
By induction on $i$, we shall show $\omega$- and $\omega^{\op}$-continuity of
$f^{\ddagger}_{i}$ and $l^{(i)}_{j}$ (here $j\leq i$), and existence of the solution $l^{(i)}_{i}$.
(Monotonicity of those is almost clear.)
Let us fix $i$ and assume that the claim holds up-to $i-1$. 
There is no need of distinguishing the base case ($i=1$) from the step case:
it is easy to take proper care of the occurrences of $i-1$ in the proof below.
We also assume $\eta_{i}=\mu$; the case when $\eta_{i}=\nu$ is symmetric.
% %since we can consider the empty equational system, which has the obvious solution of the nullary product.

We can easily show that the function
\begin{equation}\label{eq:fDdaggCont}
  f^{\ddagger}_{i}(l_{i},\dotsc,l_{n})
  \;\coloneqq\;
  f_{i}\bigl(\,
    l^{(i-1)}_{1}(l_{i},\dotsc,l_{n}),\,
    \dotsc,\,
    l^{(i-1)}_{i-1}(l_{i},\dotsc,l_{n}),\,
    l_{i},\dotsc,l_{n}
  \,\bigr)
\end{equation}
is $\omega$- and $\omega^{\op}$-continuous,
by continuity of $l^{(i-1)}_{j}$ on induction hypothesis, and
continuity of $f_{i}$ in the assumption.
By Kleene's fixed point theorem
we can construct $l^{(i)}_{i}(\seqby{l_{\i}}{i+1}{n})$---which is
defined to be the least fixed point of $f^{\ddagger}_{i}(\place,\seqby{l_{\i}}{i+1}{n})$---
together with the above $\omega$-continuity of $f^{\ddagger}_{i}$ (\ref{eq:fDdaggCont}).
We let
\begin{equation}\label{eq:lKleeneDef}
  \textstyle
  l^{(i)}_{i}(\seqby{l_{\i}}{i+1}{n})
  \;=\;
  \bigsqcup_{j<\omega}[f^{\ddagger}_{i}(\place,\seqby{l_{\i}}{i+1}{n})]^{j}(\bot)
  \,.
\end{equation}
Since $\bot$ is the least element in $L_{i}$,
we are ensured to obtain an $\omega$ chain
of $\bigl([f^{\ddagger}_{i}(\place,\seqby{l_{\i}}{i+1}{n})]^{j}(\bot)\bigr)_j$.
The supremum is a fixed point because we have
\begin{align*}
  &
  \textstyle
  f^{\ddagger}_{i}\bigl(\,
    \bigsqcup_{j<\omega}[f^{\ddagger}_{i}(\place,\seqby{l_{\i}}{i+1}{n})]^{j}(\bot)
    ,\,
    \seqby{l_{\i}}{i+1}{n}
  \,\bigr)
  \\
  &\;=\;
  \textstyle
  f^{\ddagger}_{i}\bigl(\,
    \bigsqcup_{j<\omega}[f^{\ddagger}_{i}(\place,\seqby{l_{\i}}{i+1}{n})]^{j}(\bot)
    ,\,
    \seqby{\bigsqcup_{j<\omega}l_{\i}}{i+1}{n}
  \,\bigr)
  &
  \textstyle
  \text{by $l_{i'} = \bigsqcup_{j<\omega}l_{i'}$}
  \\
  &\;=\;
  \textstyle
  \bigsqcup_{j<\omega}f^{\ddagger}_{i}\bigl(\,
    [f^{\ddagger}_{i}(\place,\seqby{l_{\i}}{i+1}{n})]^{j}(\bot)
    ,\,
    \seqby{l_{\i}}{i+1}{n}
  \,\bigr)
  &
  \textstyle
  \text{by induction hypothesis}
  \\
  &\;=\;
  \textstyle
  \bigsqcup_{j<\omega}[f^{\ddagger}_{i}(\place,\seqby{l_{\i}}{i+1}{n})]^{j+1}(\bot)
  \\
  &\;=\;
  \textstyle
  \bigsqcup_{j<\omega}[f^{\ddagger}_{i}(\place,\seqby{l_{\i}}{i+1}{n})]^{j}(\bot)
  \,.
\end{align*}
The obtained fixed point is readily verified to be the least, thanks to the minimality of $\bot$.

Now we show $\omega$-continuity of $l^{(i)}_{i}$.
To this end we use the following easy observation:
it is shown for  each $j<\omega$ by induction.
\begin{equation}\label{eq:fDdaggIterCont}
  \textstyle
  [f^{\ddagger}_{i}(\,\place,\,
    \bigsqcup_{k<\omega}l_{i+1,k}
    ,\,\dotsc,\,
    \bigsqcup_{k<\omega}l_{n,k}
  \,)]^{j}(\bot)
  \;=\;
  \textstyle
  \bigsqcup_{k<\omega}
  [f^{\ddagger}_{i}(\place,
    l_{i+1,k}
    ,\,\dotsc,\,
    l_{n,k}
  )]^{j}(\bot)
  \,.
\end{equation}
By taking supremum of the above for $j<\omega$,
the $\omega$-continuity of $l^{(i)}_{i}$ is shown as follows.
\begin{align*}
  &
  \textstyle
  l^{(i)}_{i}\bigl(\,
    \bigsqcup_{k<\omega}l_{i+1,k}
    ,\,\dotsc,\,
    \bigsqcup_{k<\omega}l_{n,k}
  \,\bigr)
  \\
  &\;=\;
  \textstyle
  \bigsqcup_{j<\omega}
  [f^{\ddagger}_{i}(\,\place,\,
    \bigsqcup_{k<\omega}l_{i+1,k}
    ,\,\dotsc,\,
    \bigsqcup_{k<\omega}l_{n,k}
  \,)]^{j}(\bot)
  &
  \textstyle
  \text{by (\ref{eq:lKleeneDef})}
  \\
  &\;=\;
  \textstyle
  \bigsqcup_{j<\omega}
  \bigsqcup_{k<\omega}
  [f^{\ddagger}_{i}(\place,
    l_{i+1,k}
    ,\,\dotsc,\,
    l_{n,k}
  )]^{j}(\bot)
  &
  \textstyle
  \text{by (\ref{eq:fDdaggIterCont})}
  \\
  &\;=\;
  \textstyle
  \bigsqcup_{k<\omega}
  \bigsqcup_{j<\omega}
  [f^{\ddagger}_{i}(\place,
    l_{i+1,k}
    ,\,\dotsc,\,
    l_{n,k}
  )]^{j}(\bot)
  \\
  &\;=\;
  \textstyle
  \bigsqcup_{k<\omega}
  l^{(i)}_{i}\bigl(\,
    l_{i+1,k}
    ,\,\dotsc,\,
    l_{n,k}
  \,\bigr)
  &
  \textstyle
  \text{by (\ref{eq:lKleeneDef})}
\end{align*}

Next we show $\omega^{\op}$-continuity of $l^{(i)}_{i}$.
It suffices to show that
\begin{math}
  \textstyle
  \bigsqcap_{k<\omega}
  l^{(i)}_{i}\bigl(
    l_{i+1,k}
    ,\dotsc,
    l_{n,k}
  \bigr)
\end{math}
is the least (pre-) fixed point of
\begin{math}
  f^{\ddagger}_{i}\bigl(\,\place,\,
    \bigsqcap_{k<\omega}
    l_{i+1,k}
    ,\,\dotsc,\,
    \bigsqcap_{k<\omega}
    l_{n,k}
  \,\bigr)
\end{math},
since $l^{(i)}_i(\seqby{l_{\i}}{i+1}{n})$ is defined to be
$\lfp[f^{\ddagger}_{i}(\place,\seqby{l_{\i}}{i+1}{n})]$.
Let us take an arbitrary pre-fixed point
\begin{math}
  \tilde{l}_{i}
  \,\sqsupseteq\,
  \textstyle
  f^{\ddagger}_{i}\bigl(\,\tilde{l}_{i},\,
    \bigsqcap_{k<\omega}
    l_{i+1,k}
    ,\,\dotsc,\,
    \bigsqcap_{k<\omega}
    l_{n,k}
  \,\bigr)
\end{math}.
Then we have
\begin{align*}
  \tilde{l}_{i}
  &\;\sqsupseteq\;
  \textstyle
  f^{\ddagger}_{i}\bigl(\,
    \bigsqcap_{k<\omega}
    \tilde{l}_{i},\,
    \bigsqcap_{k<\omega}
    l_{i+1,k}
    ,\,\dotsc,\,
    \bigsqcap_{k<\omega}
    l_{n,k}
  \,\bigr)
  \\
  &\;=\;
  \textstyle
  \bigsqcap_{k<\omega}
  f^{\ddagger}_{i}(\tilde{l}_{i},\,
    l_{i+1,k}
    ,\dotsc,
    l_{n,k}
  )
  &
  \text{by induction hypothesis}
  \\
  &\;\sqsupseteq\;
  \textstyle
  \bigsqcap_{k\in\omega}
  f^{\ddagger}_{i}\bigl(\,
    l^{(i)}_{i}(
      l_{i+1,k}
      ,\dotsc,
      l_{n,k}
    )
    ,\,
    l_{i+1,k}
    ,\dotsc,
    l_{n,k}
  \,\bigr)
  &
  \text{$l^{(i)}_{i}$ is the \emph{least} (pre-) fixed point}
  \\
  &\;=\;
  \textstyle
  \bigsqcap_{k\in\omega}
  l^{(i)}_{i}(
    l_{i+1,k}
    ,\dotsc,
    l_{n,k}
  )
  \,,
\end{align*}
thus
\begin{math}
  \textstyle
  \bigsqcap_{k<\omega}
  l^{(i)}_{i}\bigl(
    l_{i+1,k}
    ,\dotsc,
    l_{n,k}
  \bigr)
\end{math}
is the least (pre-) fixed point.

We have shown $\omega$- and $\omega^{\op}$-continuity of $l^{(i)}_{i}$;
for the other interim solutions
\begin{equation*}
  l^{(i)}_{j}(l_{i+1},\dotsc,l_{n})
  \;\coloneqq\;
  l^{(i-1)}_{j}\bigl(\,
    l^{(i)}_{i}(l_{i+1},\dotsc,l_{n}),\,
    l_{i+1},\dotsc,l_{n}
  \,\bigr)
  \,,
\end{equation*}
for $j<i$, continuity is also shown, in the same manner as in~(\ref{eq:fDdaggCont}).
\end{proof}

\subsection{Proof of Lem.~\ref{lem:eqSysCharacterizationOfAcceptingRuns}}
\begin{proof}
For a branch $\pi=(x_1,\sigma_1)\cdot(x_2,\sigma_2)\cdots$ of
a run $\rho\in\Run_{\X}$,
let $|\pi|$ be the length of $\pi$, and %let
$|\pi|_{=j} \coloneqq \bigl|\{k \mid x_{k}\in X_{j}\}\bigr|$.
Note that $|\pi|$ and $|\pi|_{=j}$ can be $\omega$.
%$\rho_{\pi_{\leq m}}$ is the $\pi_{\leq m}$-th subrun of $\rho$.
For $m\in |\pi|$,
let $\rho_{\pi,m}$ be the subrun of $\rho$ that follows after
$(x_1,\sigma_1)\cdot(x_2,\sigma_2)\cdots(x_m,\sigma_m)$.
Moreover, we write $X_{\leq j}$ and $X_{>j}$ for
$\bigcup_{j'\leq j} X_{j'}$ and $\bigcup_{j'>j} X_{j'}$
respectively.
Recall that $l^{(j)}_{i}:\pow(\Run_{\X,X_{j+1}})\times \cdots \times\pow(\Run_{\X,X_{n}})
\rightarrow
%L
\pow(\Run_{\X,X_{i}})$ denotes the $j$-th interim solution 
(Def.~\ref{def:solOfEqSys}).

We first prove that:
for each $j\in[1,n]$,
sets $\seqby{l_{\i}\in\pow(\Run_{\X,X_{\i}})}{j+1}{n}$ of runs,
a priority $i\in[1,j]$, 
a run $\rho\in l^{(j)}_{i}(\seqby{l_{\i}}{j+1}{n})\cap \Run_{\mathcal{X},X_j}$, and
a (possibly-infinite) branch $\pi=(x_1,\sigma_1)\cdot(x_2,\sigma_2)\cdots$ of $\rho$,
we have either of the following conditions.
%
%The proof is by induction on priority $i$.
%The claim on induction is \emph{either of} the following conditions,
%for \emph{any of}:
%sets $\seqby{l_{\i}\in\pow(\Run_{\X,X_{\i}})}{i+1}{n}$ of runs with higher-prioritized roots;
%the priority $j\in[1,i]$ of the root;
%a run $\rho\in l^{(i)}_{j}(\seqby{l_{\i}}{i+1}{n})$ with a $j$-prioritized root; and
%a (possibly-infinite) branch $\pi=(x_1,\sigma_1)\cdot(x_2,\sigma_2)\cdots$ of $\rho$.
\begin{itemize}
  \item
    We have $x_m\in X_{\leq j}$ for each $m\in|\pi|$.
    Moreover,
    \begin{math}
      \max\bigl\{j'\,\big|\,
        |\pi|_{=j'}=\omega
      \bigr\}\,
    \end{math}
    is even when $|\pi|=\omega$. %; or
    %if $\pi$ is infinite-length.
  \item
    There exists $m\in|\pi|$ and such that $x_m\in X_{>j}$.
    Moreover,
    if we choose the minimum $m$ among such (i.e.\
    %to be the minimal, which satisfies
    $x_{m'}\in X_{\leq i}$ for every $m'<m$),
    then $\rho_{\pi,m}\in \bigcup_{j'>j}l_{j'}$.
\end{itemize}

We prove this by induction on $j$.
Note that there is no need of distinguishing the base case ($j=1$) from the step case.
%it is easy to take proper care of the occurrences of $i-1$ in the proof below.

%
%
%

\textbf{\underline{Case: $j$ is odd ($u_{j}$ is $\mu$-variable).}}
It is not hard to see, for each $k\in\omega$, that
\begin{equation}\label{eq:parityEqSysCoincidenceMuChain}
  \rho \in \bigl[\Diamond_{\X}\bigl(
    l^{(j-1)}_{1}(\place,\seqby{l_{\i}}{j+1}{n})
    \cup\cdots\cup
    l^{(j-1)}_{j-1}(\place,\seqby{l_{\i}}{j+1}{n})
    \cup\place\cup
    l_{j+1}
    \cup\cdots\cup
    l_{n}
  \bigr)
  \cap \Run_{\X,X_{j}}\bigr]^{k}(\emptyset)
\end{equation}
if and only if,
for every branch $\pi=(x_1,\sigma_1)\cdot(x_2,\sigma_2)\cdots$ of $\rho$,
either of the following conditions is satisfied.
%every branch $\pi$ in $\rho\in\Run_{\X,X_{i}}$ satisfies:
\begin{itemize}
  \item
    We have $x_m\in X_{\leq j}$ for each $m\in|\pi|$.
    Moreover,
    $|\pi|_{=j} \leq k$ \emph{and} %; \emph{and}
    \begin{math}
      \max\bigl\{j'\,\big|\,
        |\pi|_{=j'}=\omega
      \bigr\}\,
    \end{math}
    is even when $|\pi|=\omega$.
  \item
    There exists $m\in|\pi|$  such that $x_m\in X_{>j}$.
    Moreover,
    if we choose the minimum $m$ among such (i.e.\ %to be the minimal, which satisfies
    $x_{m'}\in X_{\leq j}$ for every $m'<m$),
    then $|(\sigma_1,x_1)\cdots(\sigma_{m-1},x_{m-1})|_{=j}\leq k$ \emph{and}
    $\rho_{\pi,m}\in \bigcup_{j'>j}l_{j'}$.
\end{itemize}
It is easy to see that
the interim solution $l^{(j)}_{i}(\seqby{l_{\i}}{j+1}{n})$ is obtained by taking
the supremum of~(\ref{eq:parityEqSysCoincidenceMuChain}), for $k\in\omega$.
Therefore, for $i=j$, the claim is discharged.
The proof for $i<j$ is easy.

\textbf{\underline{Case: $j$ is even ($u_{j}$ is a $\nu$-variable).}}
As in the former case, we can see that
\begin{equation}\label{eq:parityEqSysCoincidenceNuChain}
  \rho \in \bigl[\Diamond_{\X}\bigl(
    l^{(i-1)}_{1}(\place,\seqby{l_{\i}}{j+1}{n})
    \cup\cdots\cup
    l^{(i-1)}_{i-1}(\place,\seqby{l_{\i}}{j+1}{n})
    \cup\place\cup
    l_{j+1}
    \cup\cdots\cup
    l_{n}
  \bigr)
  \cap \Run_{\X,X_{j}}\bigr]^{k}(\Run_{\X,X_{j}})
\end{equation}
if and only if,
for every branch $\pi=(x_1,\sigma_1)(x_2,\sigma_2)\ldots$ of $\rho$,
either of the following conditions is satisfied.
\begin{itemize}
  \item
    We have $x_m\in X_{\leq j}$ for each $m\in|\pi|$.
    Moreover, $|\pi|_{=j}\geq k$; \emph{or}
    \begin{math}
      \max\bigl\{i'\,\big|\,
        |\pi|_{=j'}=\omega
      \bigr\}\,
    \end{math}
    is even when $|\pi|=\omega$. % ; or
    %if $\pi$ is infinite-length.
  \item
    There exists $m\in|\pi|$ and such that $x_m\in X_{>j}$.
    Moreover,
    if we choose 
    the minimum $m$ among such (i.e.\ %to be the minimal, which satisfies
    $x_{m'}\in X_{\leq j}$ for every $m'<m$),
    then $|(\sigma_1,x_1)\cdots(\sigma_m,x_m)|_{=j}\geq k$ \emph{or}
    $\rho_{\pi,m}\in \bigcup_{j'>j}l_{j'}$.
\end{itemize}
%In this case, 
It is easy to see that
the interim solution $l^{(j)}_{i}(\seqby{l_{\i}}{j+1}{n})$ is obtained by taking
the infimum of~(\ref{eq:parityEqSysCoincidenceNuChain}), for $k\in\omega$.
Therefore, for $i=j$, the claim is discharged and
the proof for $i<j$ is easy.
%we shall take the infimum of the chain for $k<\omega$.

Hence we can prove the claim for all $j\in[1,n]$.
Letting $j=n$, Lem.~\ref{lem:eqSysCharacterizationOfAcceptingRuns} follows.
\end{proof}

\subsection{Proof of
  Lem.~\ref{lem:eqSysCharacterizationOfAcceptingTrees}}
\begin{proof}
In what follows we shall work with the semantic domains
$L_{i}:= \prod_{x\in X_{i}}\pow(\Run_{\X})$ and
$L'_{i}:= \prod_{x\in X_{i}}\pow(\myTree_{\Sigma})$, which are
easily seen to be equivalent to the formulation in
 Lem.~\ref{lem:eqSysCharacterizationOfAcceptingTrees}. 
We write
\begin{displaymath}
 \textstyle
 \varphi_{i}:= 
\prod_{x\in X_{i}}\pow(\DelSt)
 \;\colon\;
% \prod_{x\in X_{i}}\pow(\Run_{\X})
 L_{i}
 \longrightarrow
% \prod_{x\in X_{i}}\pow(\myTree_{\Sigma})
L'_{i}
%\enspace,
\end{displaymath}
 for each $i\in [1,n]$. Here $\pow(\DelSt)\colon \pow(\Run_{\X})\to
 \pow(\myTree_{\Sigma})$ is defined by direct images.
 Furthermore we write $f_{i}, f'_{i}$ for the following functions
 (that occur on the right-hand sides of the relevant equational
 systems), for each $i\in [1,n]$.
 \begin{align*}
  &f_{i}\colon L_{1}\times\cdots\times L_{n}\longrightarrow L_{i}, 
  \quad
  f_{i}(u_{1},\dotsc,u_{n}):=
    \bigl(\,
  \Diamond_{\delta}\langle u_{1},\dotsc,u_{n}\rangle
  \,\bigr)
  \upharpoonright {X_{n}}\enspace,
\\
  &f'_{i}\colon L'_{1}\times\cdots\times L'_{n}\longrightarrow L'_{i}, 
  \quad
  f'_{i}(u'_{1},\dotsc,u'_{n}):=
    \bigl(\,
  \Diamond'_{\delta}\langle u'_{1},\dotsc,u'_{n}\rangle
  \,\bigr)
  \upharpoonright {X_{n}}\enspace.
 \end{align*}
 It is straightforward to see that the following diagram commutes, for each $i\in [1,n]$. 
    \begin{equation}\label{eq:homOfEqSysCompatibilityLALI}
      \vcenter{\xymatrix@R=.6em@C+2em{
        {L_{1}\times\cdots \times L_{n}}
            \ar[d]_{f_{i}}
            \ar[r]^{\varphi_{1}\times\cdots\times\varphi_{n}}
        &
        {L'_{1}\times\cdots \times L'_{n}}
            \ar[d]^{f'_{i}}
        \\
        {L_{i}}
            \ar[r]_{\varphi_{i}}
        &
        {L'_{i}}
     }}
     % \quad
     %  \vcenter{\xymatrix@R=.6em@C+2em{
     %    {L'_{1}\times\cdots \times L'_{n}}
     %        \ar[d]_{f'_{i}}
     %        \ar[r]^{\psi_{1}\times\cdots\times\psi_{n}}
     %    &
     %    {L_{1}\times\cdots \times L_{n}}
     %        \ar[d]^{f_{i}}
     %    \\
     %    {L'_{i}}
     %        \ar[r]_{\psi_{i}}
     %    &
     %    {L_{i}}
     % }}
    \end{equation}

In view of Lem.~\ref{lem:eqSysCharacterizationOfAcceptingRuns} it
 suffices to show that, on the solution $\seq{l^{\sol}_{\i}}{n}$ of 
the equational system $E$
 in~(\ref{eq:eqSysCharacterizationOfAcceptingRuns:eqSys}) and the
 solution
$\seq{{l'}^{\sol}_{\i}}{n}$ of 
the equational system $E'$
 in~(\ref{eq:eqSysCharacterizationOfAcceptingTrees:eqSys}), we have 
 $\varphi_{i}(l^{\sol}_{i})={l'}^{\sol}_{i}$ for each $i\in [1,n]$. 

Towards this end
we shall prove the following by induction on $i\in [1,n]$. 
\begin{quote}
% $(*)$ 
For each $l_{i+1}\in L_{i+1}, \dotsc,l_{n}\in L_{n}$:
 \begin{itemize}
  % \item The function ${f'}^{\ddagger}_{i}\bigl(\,\place,
  % 	\varphi_{i+1}(l_{i+1}),
  %       \dotsc,
  % 	\varphi_{n}(l_{n})
  % 	\bigr)\colon L'_{i}\to L'_{i}$ (in Def.~\ref{def:solOfEqSys},
  % 	for the equational system $E'$) is
  % 	well-defined.
  \item We have
	$\varphi_{i}\bigl(l^{(i)}_{i}(l_{i+1},\dotsc,l_{n})\bigr)=
	{l'}^{(i)}_{i}\bigl(\varphi_{i+1}(l_{i+1}),\dotsc,\varphi_{n}(l_{n})\bigr)
	% \eta_{i}\Bigl[\,{f'}^{\ddagger}_{i}\bigl(\place,
	% \varphi_{i+1}(l_{i+1}),
        % \dotsc,
	% \varphi_{n}(l_{n})
	% \bigr)\,\Bigr]
	$, where 
	% the latter is by definition equal to
	% \begin{math}
	%  	\eta_{i}\Bigl[\,{f'}^{\ddagger}_{i}\bigl(\place,
	% \varphi_{i+1}(l_{i+1}),
        % \dotsc,
	% \varphi_{n}(l_{n})
	% \bigr)\,\Bigr]
	% \end{math}.
%	Here
	$l^{(i)}_{i}\colon
	L_{i+1}\times\cdots\times
	L_{n}\to L_{i}$ is the $i$-th interim solution of $E$ for $u_{i}$
%	(that we have assumed to exist); 
	(Def.~\ref{def:solOfEqSys}); ${l'}^{(i)}_{i}$
	is the same for $E'$. 	

  \item On the other $i$-th interim solutions, too,  we have
	$
 \varphi_{j}\bigl(\,{l}^{(i)}_{j}(l_{i+1},\dotsc,l_{n})\,\bigr)
 =
{l'}^{(i)}_{j}\bigl(\varphi_{i+1}(l_{i+1}),\dotsc,\varphi_{n}(l_{n})\bigr)
    $,
        for each $j\in [1,i-1]$. 
 \end{itemize}
\end{quote}
By showing the above we will obtain $\varphi_{i}(l^{\sol}_{i})=l'^{\sol}_{i}$,
as a special case, for each $i\in[1,n]$.

The main technical difficulty lies in the first item; the second is easy.
% and its proof will  be presented  separately. 
Let us first assume that $i$ is odd, that is,  $\eta_{i}=\mu$. 
In this case, by the
 Cousot-Cousot construction of least fixed points (that is via
 transfinite induction), we have some ordinal $\alpha$ 
 where the increasing approximation sequence
 \begin{displaymath}
  \bot
\le 
\Bigl(f^{\ddagger}_{i}\bigl(\,
\place,\, l_{i+1},\dotsc,l_{n}
\bigr)\Bigr)(\bot)
\le 
\Bigl(f^{\ddagger}_{i}\bigl(\,
\place,\, l_{i+1},\dotsc,l_{n}
\bigr)\Bigr)^{2}(\bot)
 \le\cdots
 \end{displaymath}
 stabilizes, yielding
\begin{align*}
 l^{(i)}_{i}(l_{i+1},\dotsc,l_{n})
 \;&=\;
 \mu\bigl[\,f^{\ddagger}_{i}(\,\place, l_{i+1},\dotsc,l_{n})\,\bigr]
 \qquad\text{by def.\ of $l^{(i)}_{i}$}
\\ 
\;&=\;
 \bigl(f^{\ddagger}_{i}(\,\place, l_{i+1},\dotsc,l_{n})\bigr)^{\alpha}(\bot)
\end{align*}
for $E$. For $E'$ the situation is similar, and 
 ${l'}^{(i)}_{i}\bigl(\varphi_{i+1}(l_{i+1}),\dotsc,\varphi_{n}(l_{n})\bigr)$
 is given as a suitable limit of a (transfinite) increasing sequence. 

Let us note the following. 
\begin{equation}\label{eq:201606212354}
 \begin{aligned}
 & (\varphi_{i}\co f^{\ddagger}_{i})(\,\place, l_{i+1},\dotsc,l_{n})
 \\
 &=
  (\varphi_{i}\co f_{i})\left(
 \begin{array}{l}
  l^{(i-1)}_{1}(\,\place, l_{i+1},\dotsc,l_{n}), \,
 \\
\dotsc,\,
  l^{(i-1)}_{i-1}(\,\place, l_{i+1},\dotsc,l_{n}),
 \\
  \place,\, l_{i+1},\dotsc,l_{n}
 \end{array}
 \right)
 \qquad\text{by def.\ of $f^{\ddagger}_{i}$}
 \\
 &=
   f'_{i}\left(
 \begin{array}{l}
  \varphi_{1}\bigl(l^{(i-1)}_{1}(\,\place, l_{i+1},\dotsc,l_{n})\bigr), \,
 \\\dotsc,\,
    \varphi_{i-1}\bigl(l^{(i-1)}_{i-1}(\,\place, l_{i+1},\dotsc,l_{n})\bigr),
 \\
  \varphi_{i}(\place),\, \varphi_{i+1}(l_{i+1}),\dotsc,\varphi_{n}(l_{n})
 \end{array}
 \right)
 \qquad\text{by~(\ref{eq:homOfEqSysCompatibilityLALI})}
 \\
 &=
   f'_{i}\left(
 \begin{array}{l}
  {l'}^{(i-1)}_{1}\bigl(\,\varphi_{i}(\place), \varphi_{i+1}(l_{i+1}),\dotsc,\varphi_{n}(l_{n})\,\bigr), \,
 \\\dotsc,\,
  {l'}^{(i-1)}_{i-1}\bigl(\,\varphi_{i}(\place), \varphi_{i+1}(l_{i+1}),\dotsc,\varphi_{n}(l_{n})\,\bigr),
 \\
  \varphi_{i}(\place),\, \varphi_{i+1}(l_{i+1}),\dotsc,\varphi_{n}(l_{n})
 \end{array}
 \right)
 \qquad\text{by ind.\ hyp.}
 \\
 &=
 {f'}^{\ddagger}_{i}\bigl(
 \varphi_{i}(\place),\, \varphi_{i+1}(l_{i+1}),\dotsc,\varphi_{n}(l_{n})
 \bigr)
 \qquad\text{by def.\ of ${f'}^{\ddagger}_{i}$.}
 \end{aligned}
\end{equation}
We shall use this in showing that, for each ordinal $\beta$, 
we have the following. 
Here $\bot$ is the least element of $L_{i}$. 
\begin{displaymath}
\Bigl(\, \varphi_{i}\co \bigl(
f^{\ddagger}_{i}(\,\place, l_{i+1},\dotsc,l_{n})
\bigr)^{\beta}
\,\Bigr)(\bot)\;=\;
\Bigl(\,\Bigl({f'}^{\ddagger}_{i}\bigl(\,
\place,\, \varphi_{i+1}(l_{i+1}),\dotsc,\varphi_{n}(l_{n})
\bigr)
\Bigr)^{\beta}\co \varphi_{i}
\,\Bigr)(\bot)
\quad\in L'_{i}\enspace.
\end{displaymath}
Indeed: the base case ($\beta=0$) is obvious;  the step case follows
 from~(\ref{eq:201606212354}); and for the limit case ($\beta$ is a
 limit ordinal), we use the fact that $\varphi_{i}=\prod_{x\in
 X_{i}}\pow(\DelSt)$---defined by direct images---preserves supremums
 (i.e.\ unions). 
 Together with 
 the fact that $\varphi_{i}$ preserves least elements, we see that
 $\varphi_{i}$ carries the Cousot-Cousot sequence in $L_{i}$
 (for computing $l^{(i)}_{i}(l_{i+1},\dotsc,l_{n})$)
 to the one in $L'_{i}$ 
 (for computing
 ${l'}^{(i)}_{i}\bigl(\varphi_{i+1}(l_{i+1}),\dotsc,\varphi_{n+1}(l_{n})\bigr)$).
 This proves 
	$\varphi_{i}\bigl(l^{(i)}_{i}(l_{i+1},\dotsc,l_{n})\bigr)=
	{l'}^{(i)}_{i}\bigl(\varphi_{i+1}(l_{i+1}),\dotsc,\varphi_{n}(l_{n})\bigr)$. 
 
Let us now assume that $i$ is even, that is,  $\eta_{i}=\nu$.  We shall
 again prove the claim by scrutinizing the Cousot-Cousot sequences for 
	$l^{(i)}_{i}(l_{i+1},\dotsc,l_{n})$ and 
	${l'}^{(i)}_{i}\bigl(\varphi_{i+1}(l_{i+1}),\dotsc,\varphi_{n}(l_{n})\bigr)$. Writing 
\begin{align*}
 &\Phi
 \;:=\;
 f^{\ddagger}_{i}(\place, l_{i+1},\dotsc,l_{n})
 \;\stackrel{\text{by def.}}{=}\;
   f_{i}\left(
 \begin{array}{l}
  l^{(i-1)}_{1}(\,\place, l_{i+1},\dotsc,l_{n}), \,
 \\
\dotsc,\,
  l^{(i-1)}_{i-1}(\,\place, l_{i+1},\dotsc,l_{n}),
 \\
  \place,\, l_{i+1},\dotsc,l_{n}
 \end{array}
 \right)
 \quad\text{and}
 \\
&\Phi'
 \;:=\;
 {f'}^{\ddagger}_{i}(\place, \seqby{\varphi_{\i}(l_{\i})}{i+1}{n})
\\
&\qquad\qquad
 \;\stackrel{\text{by def.}}{=}\;
   f'_{i}\left(
 \begin{array}{l}
  {l'}^{(i-1)}_{1}(\,\place, \seqby{\varphi_{\i}(l_{\i})}{i+1}{n}), \,
 \\
\dotsc,\,
  {l'}^{(i-1)}_{i-1}(\,\place, \seqby{\varphi_{\i}(l_{\i})}{i+1}{n}),
 \\
  \place,\, \seqby{\varphi_{\i}(l_{\i})}{i+1}{n}
 \end{array}
 \right)\enspace,
\end{align*}
the relevant Cousot-Cousot sequences are as follows. 
\begin{align}\label{eq:201606222311}
 \top
 \ge
 \Phi(\top)
 \ge\cdots\ge
 \Phi^{\alpha}(\top)
 \ge\cdots
 \quad\text{in $L_{i}$, \quad and }
 \quad
 \top
 \ge
 \Phi'(\top)
 \ge\cdots\ge
 {\Phi'}^{\alpha}(\top)
 \ge\cdots
 \quad\text{in $L'_{i}$.}
\end{align}
Unlike the previous case where $\eta_{i}=\mu$, it is not the case that the first
 sequence is carried \emph{exactly} to the second by $\varphi_{i}$. Instead 
we shall show the following two claims.
\begin{enumerate}
 \item\label{item:06222325} 
      For each ordinal $\alpha$ we have
      $\varphi_{i}\bigl(\Phi^{\alpha}(\top)\bigr)\le 
      {\Phi'}^{\alpha}(\top)$.
 \item\label{item:06222326} 
%       For each ordinal $\alpha$
% % , there exists an ordinal $\beta$ such that
% %       $\varphi_{i}\bigl(\Phi^{\alpha}(\top)\bigr)\ge
% %       {\Phi'}^{\beta}(\top)$.       
%       we have 
%       $\varphi_{i}\bigl(\Phi^{\alpha}(\top)\bigr)\ge
%        \nu \Phi'$.       
  We have $R\in L_{i}$ such that: $R$ is a $\Phi$-postfixed point (i.e.\
      $R\le \Phi(R)$); and 
  $\varphi_{i}(R)=\nu\Phi'$. 
\end{enumerate}
Showing these items~\ref{item:06222325}--\ref{item:06222326} proves
the claim (namely $\varphi_{i}(\nu \Phi)=\nu \Phi'$). Indeed: taking $\alpha'_{0}$ such that
 $\nu \Phi'={\Phi'}^{\alpha'_{0}}(\top)$, we have 
 \begin{displaymath}
  \nu \Phi'={\Phi'}^{\alpha'_{0}}(\top)
%  \stackrel{\text{by~\ref{item:06222325}.}}{\ge}
  \ge
  \varphi_{i}(\Phi^{\alpha'_{0}})
%  \stackrel{\text{$\varphi_{i}$ is monotone}}{\ge}
  \ge
  \varphi_{i}(\nu \Phi)
  \quad\text{where we used monotonicity of $\varphi_{i}$;}
 \end{displaymath}
 conversely, for $R$ in the item~\ref{item:06222326}.\ we have 
 $R\le \nu \Phi$---because $\nu \Phi$ is the greatest $\Phi$-postfixed
 point (the Knaster-Tarski theorem)---hence
 \begin{displaymath}
  \nu \Phi'=\varphi_{i}(R)\le \varphi_{i}(\nu\Phi)\enspace.
 \end{displaymath}

The item~\ref{item:06222325}.\ is shown by (transfinite) induction on
 $\alpha$. The base case is obvious. For the step case, 
\begin{equation}\label{eq:06222356}
  \begin{aligned}
 & {\Phi'}^{\alpha+1}(\top)
 \\
 &\ge 
  \Phi'\bigl(\varphi_{i}(\Phi^{\alpha}(\top))\bigr)
 \quad\text{by ind.\ hyp. (for $\alpha$), and that $\Phi'$ is monotone}
 \\
 &=
   f'_{i}\left(
 \begin{array}{l}
  {l'}^{(i-1)}_{1}(\,\varphi_{i}(\Phi^{\alpha}(\top)), \seqby{\varphi_{\i}(l_{\i})}{i+1}{n}), \,
 \\
\dotsc,\,
  {l'}^{(i-1)}_{i-1}(\,\varphi_{i}(\Phi^{\alpha}(\top)), \seqby{\varphi_{\i}(l_{\i})}{i+1}{n}),
 \\
  \varphi_{i}(\Phi^{\alpha}(\top)),\,
   \seqby{\varphi_{\i}(l_{\i})}{i+1}{n}
 \end{array}
 \right)
  \quad\text{by def.\ of $\Phi'$}
 \\
 &=
   f'_{i}\left(
 \begin{array}{l}
  (\varphi_{1}\co l^{(i-1)}_{1})(\,\Phi^{\alpha}(\top), \seqby{l_{\i}}{i+1}{n}), \,
 \\
\dotsc,\,
  (\varphi_{i-1}\co l^{(i-1)}_{i-1})(\,\Phi^{\alpha}(\top), \seqby{l_{\i}}{i+1}{n}), \,
 \\
  \varphi_{i}(\Phi^{\alpha}(\top)),\,
   \seqby{\varphi_{\i}(l_{\i})}{i+1}{n}
 \end{array}
 \right)
 \quad\text{by ind.\ hyp. (for $i-1$)}
 \\
 &=
 (\varphi_{i}\co f_{i})
  \left(
 \begin{array}{l}
   l^{(i-1)}_{1}(\,\Phi^{\alpha}(\top), \seqby{l_{\i}}{i+1}{n}), \,
 \\
\dotsc,\,
   l^{(i-1)}_{i-1}(\,\Phi^{\alpha}(\top), \seqby{l_{\i}}{i+1}{n}), \,
 \\
  \Phi^{\alpha}(\top),\,
   \seqby{l_{\i}}{i+1}{n}
 \end{array}
 \right)
 \quad\text{by~(\ref{eq:homOfEqSysCompatibilityLALI})}
 \\
 &=
 \varphi_{i}\bigl(\Phi(\Phi^{\alpha}(\top))\bigr)
 =
 \varphi_{i}\bigl(\Phi^{\alpha+1}(\top)\bigr)
   \quad\text{by def.\ of $\Phi$.}
\end{aligned}
\end{equation}
 For the limit case, we have 
 \begin{displaymath}
  \varphi_{i}\bigl(\Phi^{\alpha}(\top)\bigr)
  =
  \varphi_{i}\bigl(\bigwedge_{\alpha'<\alpha}\Phi^{\alpha'}(\top)\bigr)
\le 
  \bigwedge_{\alpha'<\alpha}  \varphi_{i}\bigl(\Phi^{\alpha'}(\top)\bigr)
 \le
  \bigwedge_{\alpha'<\alpha}  {\Phi'}^{\alpha'}(\top)
 =
      {\Phi'}^{\alpha}(\top)\enspace,
 \end{displaymath} 
 where the first inequality is due to monotone of $\varphi_{i}$ and the
 second is by the induction hypothesis (on $\alpha'$). This proves the
 item~\ref{item:06222325}.

 For the item~\ref{item:06222326}.\ we first observe the fixed-point
 property of $\nu\Phi'$, expanding the definition of $\Phi'$ and
 furthermore
that of $f'$:
\begin{equation}\label{eq:06231236}
 (\nu\Phi')_{x} = 
\left\{
 \bigl(\sigma,(\seqby{\tau_{\i}}{1}{|\sigma|})\bigr)
\left|
\begin{array}{l}
\exists \seqby{x_{\i}}{1}{|\sigma|}. 
\\
\; (\sigma,(\seqby{x_{\i}}{1}{|\sigma|}))\in \delta(x),
\quad\text{and}\quad
 \forall k\in [1,|\sigma|]. 
\\
\left(
\begin{array}{l}
  x_{k}\in X_{1} 
 \;\Rightarrow\;
 \tau_{k}\in 
 \Bigl({l'}^{(i-1)}_{1}\bigl(\, \nu\Phi',
 \seqby{\varphi_{\i}(l_{\i})}{i+1}{n}\bigr)\Bigr)_{x_{k}},
 \\
  \dotsc,
 \\
  x_{k}\in X_{i-1} 
 \;\Rightarrow\;
 \tau_{k}\in 
 \Bigl({l'}^{(i-1)}_{i-1}\bigl(\, \nu\Phi',
 \seqby{\varphi_{\i}(l_{\i})}{i+1}{n}\bigr)\Bigr)_{x_{k}},
 \\
  x_{k}\in X_{i}
 \;\Rightarrow\;
  \tau_{k}\in (\nu\Phi')_{x_{k}},
 \\
  x_{k}\in X_{i+1}
 \;\Rightarrow\;
  \tau_{k}\in (\varphi_{i+1}(l_{i+1}))_{x_{k}},
 \\
  \dotsc,
 \\
  x_{k}\in X_{n}
 \;\Rightarrow\;
  \tau_{k}\in (\varphi_{n}(l_{n}))_{x_{k}}.
\end{array}
\right)
\end{array}
\right.
\right\}
\end{equation}
for each $x\in X_{i}$. 
It is then not hard to see that, for each $\Sigma$-tree
$\tau$ that belongs to $(\nu \Phi')_{x}$, we can find at least
one run $\rho$ of $\X$ so that $\DelSt(\rho)=\tau$. This fact is proved 
by decorating each node of $\tau$ with an $X$-label, coinductively from
 top to bottom, starting with $x$. Concretely, once an $X$-label $x'$ is assigned to
 a certain node, we operate as follows.
\begin{itemize}
 \item If $x'\in X_{k}$ with $k\in [i+1,n]$, then the subtree $\tau'$
       starting at the
       current node belongs to the set
       $(\varphi_{k}(l_{k}))_{x'}$. Recalling that
       $\varphi_{k}=\pow(\DelSt)$, we can find a run $\rho'\in l_{k}$
       such that $\DelSt(\rho')=\tau'$; we decorate $\tau'$ according to
       $\rho'$. 
 \item If $x'\in X_{i}$ then the subtree $\tau'$
       starting at the
       current node belongs to $(\nu \Phi')_{x'}$. We invoke the
       fixed-point property~(\ref{eq:06231236}) to find the $X$-labels 
       $\seqby{x_{\i}}{1}{|\sigma|}$ for the children of the current node.
 \item If $x'\in X_{k}$ with $k\in [1,i-1]$,  we note that 
     the set 
     \begin{math}
       \Bigl({l'}^{(i-1)}_{k}\bigl(\, \nu\Phi',
 \seqby{\varphi_{\i}(l_{\i})}{i+1}{n}\bigr)\Bigr)_{x'}
     \end{math}---to which the  subtree $\tau'$
       starting at the
       current node should belong to---consists of those trees $\tau$
       with the following property: $\tau$ has a prefix $\tau_{0}$ that is the
       image under $\DelSt$ of a prefix $\rho_{0}$ of some run of $\X$ starting
       from $x'$; $\rho_{0}$ has $X$-labels from $X_{i}\cup
       X_{i+1}\cup\cdots \cup X_{n}$ only at those nodes where
       $\tau_{0}$ ends but $\tau$ continues; and, at each such node
       $x''$,
\begin{itemize}
 \item 
        $x''\in X_{i}$ implies that the subtree of $\tau$ starting there
       belongs to $(\nu \Phi')_{x''}$, and
 \item 
       $x''\in X_{j}$ (for $j\in [i+1,n]$) implies that 
the subtree of $\tau$ starting there
       belongs to $(\varphi_{j}(l_{j}))_{x''}$.
\end{itemize}       
This fact is shown in the current induction on $i$. We can then 
decorate the prefix $\tau'_{0}$ of $\tau'$ according to $\rho'_{0}$ (in
       the above notations); once we hit $X$-labels from 
$X_{i}\cup
       X_{i+1}\cup\cdots \cup X_{n}$  we continue according to the above
       other cases. 
\end{itemize}
For each $\tau\in (\nu\Phi')_{x}$ we collect its decorations $\rho$; and
 we
 let $R\in L_{i}=\prod_{x\in X_{i}}\pow(\Run_{\X})$ defined by its
 closure under subtrees. It is then obvious that $R\le \Phi(R)$ (since
 $R$ is closed under subtrees) and
 $\varphi_{i}(R)=\nu \Phi'$ (since for each $\tau\in (\nu\Phi')_{x}$ we
 included its decoration). This proves the item~\ref{item:06222326}, and 
 proves the claim.
\end{proof}

\begin{myremark}
\label{rem:cousotcousotseqDoNotMatchStepByStep}
The sequences~(\ref{eq:201606222311}) do not match step-by-step, already
 in the following simple example. Assume that $F=\{*\}\times(\place)$,
 every edge below is labeled with $*$, and every state is accepting.
\begin{displaymath}
  \includegraphics{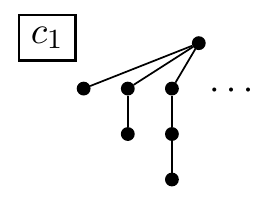}
\end{displaymath}
 Let the top node denoted by $x$. 
 Then after $\omega$ steps in the first Cousot-Cousot sequence  every potential run from $x$ is eliminated
 (one with length $n$ is eliminated after $n$ steps). However in the
 second Cousot-Cousot sequence, the word $*^{\omega}=**\cdots$ is eliminated only
 after $\omega +1$ steps: $*^{\omega}\in
 \bigcup_{n<\omega}{\Phi'}^{n}(\top)$ because, for each $n$, $x$ has a
 run of length $n$. 
\end{myremark}

\subsection{Proof of Lem.~\ref{lem:inverseExists}}
The following fact, which gives an explicit construction of
the final coalgebra $\zeta\colon Z\kto FX$, is standard.

\begin{mysublemma}[\cite{Schubert09tcf}]
 \label{sublem:finalCoalgInMeas}
 Let $F\colon \Meas\to\Meas$ be a (standard Borel) polynomial functor. 
 Let $Z$ be a limit of its final sequence (up to $\omega$)---the
  measurable structure of $Z$ is the weakest one such that all
  projections $\pi_{i}$ are measurable. In this case the functor  $F$
  preserves the limit $Z$ and we have the following mediating isomorphism $\zeta$. 
 \begin{equation}\label{eq:finalSeqInMeas}
  \vcenter{\xymatrix@R=1em{
  &&&{Z\mathrlap{\;\text{(limit)}}}
  \ar@/_1pc/[llld]_(.8){\pi_{0}}
  \ar@/_.7pc/[lld]_(.8){\pi_{1}}
  \ar[ld]_{\pi_{2}}
  \ar@{-->}@/^2pc/[dd]^{\zeta}_{\cong}
  \\
  {1}
  &
   {F1}
   \ar[l]_{\bang }
  &
   {F^{2}1}
   \ar[l]_{F\bang }
   {\cdots}
   \ar[l]
  \\
  &&&{FZ\mathrlap{\;\text{(limit)}}}
  \ar@/^1pc/[lllu]^(.8){\bang }
  \ar@/^.7pc/[llu]^(.8){F\pi_{0}}
  \ar[lu]^{F\pi_{1}}
}}
 \qquad\text{(in $\Meas$)}
 \end{equation}
 By a standard argument like in~\cite{AdamekK79lfp},
  $\zeta\colon Z\to FZ$ is a final coalgebra in $\Meas$. 
  \qed
\end{mysublemma}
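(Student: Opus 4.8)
The plan is to follow the classical Adámek–Koubek recipe, realising the final coalgebra as the limit of the final $\omega$-sequence, so the argument divides into three parts: (i) the limit $Z$ exists in $\Meas$; (ii) the polynomial functor $F$ preserves this limit, which produces the comparison isomorphism $\zeta\colon Z\iso FZ$; and (iii) $\zeta$ is final. Throughout write $W_{0}=1$, $W_{n+1}=FW_{n}$, with connecting maps $w_{0}=\bang\colon F1\to 1$ and $w_{n+1}=Fw_{n}\colon W_{n+2}\to W_{n+1}$, so that $Z=\lim_{n}W_{n}$ carries the projections $\pi_{n}\colon Z\to W_{n}$.

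For (i) I would simply observe that $\Meas$ is complete, so the $\omega^{\op}$-limit exists; concretely $Z$ is the set of compatible sequences $(z_{n})_{n}$ satisfying $w_{n}(z_{n+1})=z_{n}$, equipped with the smallest $\sigma$-algebra rendering all $\pi_{n}$ measurable, exactly as stated. For (ii) I would prove, by structural induction on the grammar $F\Coloneqq\id\mid(A,\sigalg_{A})\mid F_{1}\times F_{2}\mid\coprod_{i\in I}F_{i}$, that every polynomial functor preserves limits of $\omega^{\op}$-chains. The cases $\id$ and constants are immediate, and $F_{1}\times F_{2}$ follows since products commute with limits. The one delicate case---and precisely where the \emph{standard Borel} hypothesis of Definition~\ref{def:polynFunc} is needed---is the countable coproduct $\coprod_{i\in I}F_{i}$, which reduces to the lemma that countable coproducts commute with $\omega^{\op}$-limits in $\Meas$. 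At the level of underlying sets this is clear: the coproduct of connecting maps preserves the coproduct tag, so along any compatible sequence the tag is constant, giving a canonical bijection $\lim_{n}\coprod_{i}V_{i,n}\cong\coprod_{i}\lim_{n}V_{i,n}$. Upgrading this bijection to an isomorphism of $\sigma$-algebras---that the coproduct $\sigma$-algebra of the limit structures agrees with the limit $\sigma$-algebra of the coproduct structures---is the actual measure-theoretic content, and here I would invoke the good behaviour of standard Borel spaces under these countable operations, following \cite{Schubert09tcf}. Limit preservation then yields $FZ\cong\lim_{n}FW_{n}=\lim_{n}W_{n+1}$; since the shifted chain $(W_{n+1})_{n}$ is cofinal, its limit is again $Z$, and the induced comparison map is the asserted $\zeta\colon Z\iso FZ$, with $FZ$ carrying the projections $\bang$ and $F\pi_{n}$ of the lower cone in the sublemma's diagram.

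For (iii) I would carry out the usual finality argument. Given any $F$-coalgebra $c\colon X\to FX$, define a cone $(h_{n}\colon X\to W_{n})_{n}$ by $h_{0}=\bang$ and $h_{n+1}=Fh_{n}\co c$, and verify $w_{n}\co h_{n+1}=h_{n}$ by induction (the base case is uniqueness of maps into $1$; the step uses $w_{n+1}=Fw_{n}$ and functoriality). The universal property of the limit then yields a unique $h\colon X\to Z$ with $\pi_{n}\co h=h_{n}$. To see $h$ is a coalgebra homomorphism, i.e.\ $\zeta\co h=Fh\co c$, I would postcompose both sides with each projection $F\pi_{n}\colon FZ\to W_{n+1}$: the left side gives $\pi_{n+1}\co h=h_{n+1}$ while the right side gives $F(\pi_{n}\co h)\co c=Fh_{n}\co c=h_{n+1}$, so the two agree, and since the family $(F\pi_{n})_{n}$ is jointly monic (being the projections of the limit $FZ$) the equality follows. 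Uniqueness is immediate: any homomorphism $g\colon X\to Z$ induces the same cone $(\pi_{n}\co g)_{n}=(h_{n})_{n}$, whence $g=h$ by uniqueness of the mediating map.

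The main obstacle is entirely in step (ii), the countable-coproduct case. The set-level stabilisation of tags is easy, but it does not by itself establish that the limit $\sigma$-algebra on the space of compatible sequences is generated exactly by the coproduct-of-limits structure; this is the genuine measurable bookkeeping that forces the standard Borel restriction in Definition~\ref{def:polynFunc}, and I would defer its details to \cite{Schubert09tcf} rather than reprove them.
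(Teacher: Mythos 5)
The paper supplies no proof of this sublemma: the statement is imported wholesale from~\cite{Schubert09tcf}, with the finality step attributed to the standard Ad\'{a}mek--Koubek argument. Your proposal reconstructs exactly that argument---existence of the $\omega^{\op}$-limit in $\Meas$, preservation of the limit by the polynomial functor via structural induction on its grammar, and the usual cone-based finality proof---and all three steps are correct; the one piece you defer (the $\sigma$-algebra bookkeeping for countable coproducts) is deferred to the very reference the paper cites, so nothing is missing relative to the paper. A minor remark: the standard Borel (and surjectivity) hypotheses arguably do their real work later, when the \emph{sub-Giry monad} is shown to preserve this limit in Sublemma~\ref{lem:limitInKlGiry}, rather than in the coproduct case of the preservation argument for $F$ itself, which goes through by the tag-stabilisation argument you give together with routine comparison of the generated $\sigma$-algebras.
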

We also use the fact that the Kleisli inclusion functor $J$ lifts the limit to 2-limit in $\Kl(\giry)$.
 \begin{mysublemma}[\cite{UrabeH15cit}]\label{lem:limitInKlGiry}
  The Kleisli inclusion functor $J\colon \Meas\to \Kl(\giry)$ for the
  sub-Giry monad $\giry$ preserves the limits
  in~(\ref{eq:finalSeqInMeas}). This yields, in particular, the 
  following limit.
 \begin{equation}\label{eq:finalSeqLiftedToKlGiry}
  \vcenter{\xymatrix@R=1em{
  &&&{Z\mathrlap{\;\text{(limit)}}}
  \kar@/_1pc/[llld]_(.8){J\pi_{0}}
  \kar@/_.7pc/[lld]_(.8){J\pi_{1}}
  \kar[ld]_{J\pi_{2}}
%  \kar@{-->}@/^2pc/[dd]^{\zeta}_{\cong}
  \\
  {1}
  &
   {F1}
   \kar[l]_{J\bang }
  &
   {F^{2}1}
   \kar[l]_{JF\bang }
  &
   {\cdots}
   \kar[l]
  % \\
  % &&&{FZ\mathrlap{\;\text{(limit)}}}
  % \ar@/^1pc/[lllu]^(.8){\bang }
  % \ar@/^.7pc/[llu]^(.8){F\pi_{0}}
  % \ar[lu]^{F\pi_{1}}
}}
 \qquad\text{(in $\Kl(\giry)$)}
 \end{equation}  
 Moreover $Z$ here is in fact a 2-limit: if two cones 
 $
 \bigl(\,\gamma_{k}\colon
  X\kto F^{k}1
  \,\bigr)_{k\in\omega}$ and 
 $
 \bigl(\,\gamma'_{k}\colon
  X\kto F^{k}1
  \,\bigr)_{k\in\omega}$ 
  satisfy $\gamma_{k}\le \gamma'_{k}$ for each $k\in\omega$, then 
 the mediating arrows
  $\tuple{\gamma_{k}}_{k\in\omega},\tuple{\gamma'_{k}}_{k\in\omega}\colon X\kto Z$
  satisfy $\tuple{\gamma_{k}}_{k\in\omega}\le\tuple{\gamma'_{k}}_{k\in\omega}$.
 \end{mysublemma}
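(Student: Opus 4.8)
The plan is to describe $Z$ concretely as the projective (inverse) limit in $\Meas$ supplied by Sublemma~\ref{sublem:finalCoalgInMeas}: its points are the compatible sequences $(z_k)_k$ with $z_k\in F^k1$ and $F^k!(z_{k+1})=z_k$, and its $\sigma$-algebra is the initial one, generated by the cylinders $\pi_k^{-1}(S)$ for $S\in\sigalg_{F^k1}$. A cone over the image diagram~(\ref{eq:finalSeqLiftedToKlGiry}) with apex $X$ is a family $\gamma_k\colon X\kto F^k1$ in $\Kl(\giry)$, i.e.\ measurable $\gamma_k\colon X\to\giry(F^k1)$ subject to $J(F^k!)\odot\gamma_{k+1}=\gamma_k$; unfolding the Kleisli composite, this says that the pushforward of $\gamma_{k+1}(x)$ along $F^k!$ equals $\gamma_k(x)$ for every $x$. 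First I would note that each bonding map $F^k!$ is a total measurable function, so pushforward preserves total mass; hence $\gamma_k(x)(F^k1)$ is independent of $k$ and every $\gamma_k(x)$ is a measure of one common mass $m(x)\le1$. Constructing the mediating arrow then reduces to a projective-limit (Kolmogorov/Bochner) extension, which is legitimate precisely because every $F^k1$ is standard Borel (Def.~\ref{def:polynFunc}); I follow~\cite{Schubert09tcf} here.

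Concretely, I would set $\gamma(x)\bigl(\pi_k^{-1}(S)\bigr):=\gamma_k(x)(S)$. This is well defined on the cylinder algebra $\mathcal A=\bigcup_k\pi_k^{-1}(\sigalg_{F^k1})$ by the consistency condition, and by the Kolmogorov extension theorem (using standard Borel-ness) it extends uniquely to a subprobability measure $\gamma(x)$ on $Z$, with $\gamma(x)(Z)=\gamma_0(x)(1)=m(x)$. Three routine checks remain. Measurability of $x\mapsto\gamma(x)$ as a map $X\to\giry Z$ reduces, by a Dynkin-class argument, to the evaluation maps at cylinders, and $\mathrm{ev}_{\pi_k^{-1}(S)}\circ\gamma=\mathrm{ev}_S\circ\gamma_k$ is measurable since $\gamma_k$ is. The mediating identity $J\pi_k\odot\gamma=\gamma_k$ holds because $(J\pi_k\odot\gamma)(x)$ is the pushforward of $\gamma(x)$ along $\pi_k$, whose value on $S$ is $\gamma(x)(\pi_k^{-1}(S))=\gamma_k(x)(S)$. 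Uniqueness follows because the cylinders form a $\pi$-system generating $\sigma(\mathcal A)$, so any two mediating arrows agreeing on cylinders agree everywhere by Dynkin's $\pi$--$\lambda$ theorem. This establishes that $J$ carries the cone to a genuine limit cone.

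For the 2-limit claim, suppose $\gamma_k\le\gamma'_k$ for all $k$ and write $\gamma=\tuple{\gamma_k}_k$, $\gamma'=\tuple{\gamma'_k}_k$ for the mediators. On cylinders the inequality is immediate: $\gamma(x)(\pi_k^{-1}(S))=\gamma_k(x)(S)\le\gamma'_k(x)(S)=\gamma'(x)(\pi_k^{-1}(S))$. The work is to propagate this from the generating algebra $\mathcal A$ to all measurable $U\subseteq Z$. I would argue through the finite signed measure $\theta:=\gamma'(x)-\gamma(x)$, which is nonnegative on $\mathcal A$; fixing $U$ and $\varepsilon>0$, I approximate $U$ by some $C\in\mathcal A$ with $\bigl(\gamma(x)+\gamma'(x)\bigr)(U\triangle C)<\varepsilon$ (possible since $\mathcal A$ generates the $\sigma$-algebra and $\gamma(x)+\gamma'(x)$ is finite), whence $|\theta(U)-\theta(C)|<\varepsilon$ and $\theta(U)\ge\theta(C)-\varepsilon\ge-\varepsilon$; letting $\varepsilon\to0$ yields $\theta(U)\ge0$, i.e.\ $\gamma(x)\le\gamma'(x)$.

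The main obstacle is exactly this last step. The uniqueness of the mediating arrow is an equality settled by the $\pi$--$\lambda$ theorem, but monotonicity is an \emph{inequality} and cannot be read off a $\pi$-system in the same manner; the signed-measure-plus-approximation argument above is the cleanest remedy, and it is where finiteness of the measures (guaranteed by the \emph{sub}-Giry monad, Def.~\ref{def:powersetMonadAndSubGiryMonad}) is essential. A secondary point to keep honest is the appeal to the extension theorem: it is valid only because the constants of $F$ are standard Borel (Def.~\ref{def:polynFunc}), the one nontrivial hypothesis inherited from~\cite{Schubert09tcf,Doberkat09scl}.
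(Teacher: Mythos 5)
Your proposal is correct, but it takes a more hands-on route than the paper. The paper's proof is a two-step reduction: it first invokes the result of Schubert (the reference \cite{Schubert09tcf} you also cite) that $\giry$ preserves limits of $\omega^{\op}$-sequences of standard Borel spaces with surjective bonding maps, obtaining $\giry Z$ as the limit of $\giry 1\leftarrow\giry F1\leftarrow\cdots$ in $\Meas$; it then observes that a triangle in $\Kl(\giry)$ over an arrow $Jf$ commutes iff the corresponding triangle over $\giry f$ commutes in $\Meas$, so cones and mediating arrows over the $J$-image of the sequence correspond exactly to cones and mediating arrows over its $\giry$-image, and the limit transfers formally. You instead inline the content of the cited theorem: you unfold a Kleisli cone into a consistent family of subprobability measures and build the mediating measure directly by Kolmogorov--Bochner extension on the cylinder algebra, with the measurability, mediation and uniqueness checks done by hand. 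The trade-off is the usual one: the paper's argument is shorter and isolates exactly which external fact is being used (standard Borel constants, surjective bonding maps), while yours is self-contained and makes visible where each hypothesis enters (finiteness of the measures, standard Borel-ness for the extension). Your treatment of the 2-limit claim is a genuine improvement in rigor: the paper dismisses it as ``easy, exploiting the fact that the measurable structure of $Z$ is the weakest one making the $\pi_i$ measurable,'' which only yields the inequality on cylinder sets; as you correctly point out, an inequality does not propagate from a $\pi$-system by Dynkin's theorem the way an equality does, and your approximation argument for the finite signed measure $\gamma'(x)-\gamma(x)$ (nonnegative on the generating algebra, hence nonnegative everywhere by $\mu(U\triangle C)<\varepsilon$ approximation) is the right way to close that gap. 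One small point to keep explicit: the extension step needs not just standard Borel-ness but the surjectivity of the maps $F^{k}\bang$ (or an equivalent tightness condition), which holds in the present setting and is part of the hypothesis of the cited preservation theorem.
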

 \begin{proof}
  The claim follows from the result
  in~\cite{Schubert09tcf}
  that: the sub-Giry monad $\giry$ preserves limits over an
  $\omega^{\text{op}}$-sequence, provided that the latter consists of standard Borel spaces 
and surjective measurable functions. This is indeed the setting
  in~(\ref{eq:finalSeqInMeas}), and the result yields the following
  limit. 
 \begin{equation}\label{eq:finalSeqInMeasCarriedBySubGiry}
  \vcenter{\xymatrix@R=1em{
  &&&{\giry Z\mathrlap{\;\text{(limit)}}}
  \ar@/_1pc/[llld]_(.8){\giry\pi_{0}}
  \ar@/_.7pc/[lld]_(.8){\giry\pi_{1}}
  \ar[ld]_{\giry\pi_{2}}
%  \ar@{-->}@/^2pc/[dd]^{\zeta}_{\cong}
  \\
  {\giry 1}
  &
   {\giry F1}
   \ar[l]_{\giry \bang }
  &
   {\giry F^{2}1}
   \ar[l]_{\giry F\bang }
  &
   {\cdots}
   \ar[l]
  % \\
  % &&&{FZ\mathrlap{\;\text{(limit)}}}
  % \ar@/^1pc/[lllu]^(.8){\bang }
  % \ar@/^.7pc/[llu]^(.8){F\pi_{0}}
  % \ar[lu]^{F\pi_{1}}
}}
 \qquad\qquad\text{(in $\Meas$)}
 \end{equation}
 It is straightforward to see that: cones over the sequence
  in~(\ref{eq:finalSeqLiftedToKlGiry}) are precisely those over
  the sequence in~(\ref{eq:finalSeqInMeasCarriedBySubGiry}); and the
  correspondence carries over to mediating arrows. Here the following 
 easy observation plays a crucial role: for any $f\colon Y\to X$,
  $g\colon Z\to \giry X$ and $h\colon Z\to \giry Y$, 
 \begin{equation}\label{eq:PreComposingJf}
  \vcenter{\xymatrix@R=1em{
  {X}
  &
  {Y}
    \kar[l]_{Jf}
  \\
  &
   {Z}
    \kar[lu]^{g}
    \kar[u]_{h}
}}
  \;\text{in $\Kl(\giry)$}
  \quad\text{if and only if}\quad
  \vcenter{\xymatrix@R=1em{
  {\giry X}
  &
  {\giry Y}
    \ar[l]_{\giry f}
  \\
  &
   {Z}
    \ar[lu]^{g}
    \ar[u]_{h}
}}
   \;\text{in $\Meas$.}
\end{equation}
The last ``monotonicity'' condition is easy, too, exploiting the fact
that the measurable structure of $Z$ is the weakest one such that all
projections $\pi_{i}$ are measurable.
\end{proof}

Now we shall prove Lem.~\ref{lem:inverseExists}.

\noindent
\begin{proof}
We first define $\Delta^{g_{B}}\colon\mathfrak{H}^{g_{B}}\to \mathfrak{G}^{g_{B}}$.
Let $(h_{A}\colon X\kto 1)\in \mathfrak{H}^{g_{B}}$.

For each $k\in\omega$, we define an arrow
$\gamma^{A}_{k}\colon X_{A}\kto\oF^{k}1$ by induction on $k$ as follows:
\begin{align*}
  \gamma^{A}_{0}
  &\;\coloneqq\;
  h_{A}
  \\
  \gamma^{A}_{k+1}
  &\;\coloneqq\;
  \overline{F}[\gamma^{A}_{k}, J\pi_{k}\odot g_{B}]\odot c_{A}
  \enskip.
\end{align*}
Here $c_{[A,B]} = c \odot \kappa_{[A,B]}$; and
$\kappa_{[A,B]}\colon X_{[A,B]} \kto X_{A}+X_{B}$
denotes the canonical coprojection.

We show that
$(X_A,(\gamma^{A}_{k}\colon X_{A}\kto\oF^{k}1)_{k\in\omega})$ 
is a cone over the sequence
$1\overset{J!_{F1}}{\longkto}
\oF 1\overset{\oF J!_{F1}}{\longkto}
\oF^{2}1\overset{\oF^{2}J!_{F1}}{\longkto}\cdots$\,.
To this end,
%By induction on $a\in\omega$, 
we show that for each $k\in\omega$,
$\oF^{k}J!_{F1}\odot \gamma^{A}_{k+1}=\gamma^{A}_{k}$ by induction on $k$.
If $k=0$, then:
\begin{align*}
&\overline{F}^kJ!_{F1}\odot \gamma^A_{k+1} \\
&=J!_{F1}\odot 
\overline{F}[\gamma^{A}_{0},J\pi_{0}\odot g_{B}]\odot c_A & (\text{by definition})\\
&=J!_{F1}\odot
\overline{F}[h_{A},J\pi_{0}\odot g_{B}]\odot c_A & (\text{by definition})\\
&=J!_{F1}\odot 
\overline{F}[h_{A},J!_Z\odot g_{B}]\odot c_A & (\text{$\pi_0=!_Z$})\\
&=J!_{F1}\odot 
\overline{F}[h_{A},\Gamma_{B}(g_{B})]\odot c_A & (\text{by definition})\\
&=
h_A & (\text{$h_{A}\in\mathfrak{H}^{g_{B}}$})\\
&=
\gamma^A_{0} & (\text{by definition})
\,.
\end{align*}
For $k>0$, we have:
%The proof for step cases where $j>0$ are similar,
%except that instead of $h_{A}\in \mathfrak{H}^{g_{B}}$, 
%we use the induction hypothesis.
\begin{align*}
&\overline{F}^kJ!_{F1}\odot \gamma^A_{k+1} \\
&=\overline{F}^kJ!_{F1}\odot
\overline{F}[\gamma^A_{k},J\pi_{k}\odot g_{B}]\odot c_A & (\text{by definition})\\
&=\overline{F}[\:
  \overline{F}^{k-1}J!_{F1}\odot\gamma^A_{k},\:
  \overline{F}^{k-1}J!_{F1}\odot J\pi_{k}\odot g_{B}
\:]\odot c_A \\
&=\overline{F}[\:
  \gamma^A_{k-1},\:
  \overline{F}^{k-1}J!_{F1}\odot J\pi_{k}\odot g_{B}
\:]\odot c_A & (\text{by induction hypothesis})\\
&=\overline{F}[\:
  \gamma^A_{k-1},\:
  J\pi_{k-1}\odot g_{B}
\:]\odot c_A & (\text{$(Z,(\pi_j)_j)$ is a cone})\\
&=\gamma^A_k & (\text{by definition})\,.
\end{align*}
Hence
$(X_A,(\gamma^{A}_k:X_A\kto\overline{F}^k1)_{k\in\omega})$ 
is a cone  over
the sequence
$1\overset{J!_{F1}}{\longkto}
\oF 1\overset{\oF J!_{F1}}{\longkto}
\oF^{2}1\overset{\oF^{2}J!_{F1}}{\longkto}\cdots$\,,
and this implies that there uniquely exists a mediating arrow
$h^{\dagger}_A\colon X_A\kto Z$.

We show that $h^{\dagger}_{A}$ belongs to $\mathfrak{G}^{g_{B}}$,
that is,
$h^{\dagger}_A=J\zeta^{-1}\odot \overline{F}[h^{\dagger}_{A},g_{B}]\odot c_A$.
To this end, by the definition of $h^{\dagger}_A$,
it suffices to show that for each $k\in\omega$
we have
\begin{displaymath}
J\pi_k\odot (J\zeta^{-1}\odot \overline{F}[h^{\dagger}_A,g_{B}]\odot c_A)
=\gamma^A_k\,.
\end{displaymath}
%$J\zeta^{-1}\odot \overline{F}[l^{\dagger}_1,\ldots,l^{\dagger}_i,l_{i+1},\ldots,l_n]\odot c_j$
%is a mediating arrow from a cone $(X,(\gamma^j_a)_a)$ to $(Z,(J\pi_a)_a)$.
%We prove it by induction on $a$.
If $k=0$, then we have:
\begin{align*}
&J\pi_k\odot (J\zeta^{-1}\odot\overline{F}[h^{\dagger}_A,g_B]\odot c_A) \\
&=J!_Z\odot J\zeta^{-1}\odot\overline{F}[h^{\dagger}_A,g_B]\odot c_A
& (\text{$\pi_0=!_Z$}) \\
&=J!_{F1}\odot JF!_Z\odot\overline{F}[h^{\dagger}_A,g_{B}]\odot c_A
&(\text{$!_Z\circ \zeta^{-1}=!_{FZ}=!_{F1}\circ F!_Z$}) \\
&=J!_{F1}\odot\overline{F}[J!_{Z}\odot h^{\dagger}_A,J!_{Z}\odot g_B]\odot c_A
& (\text{$JF=\overline{F}J$}) \\
&=J!_{F1}\odot\overline{F}[J\pi_0\odot h^{\dagger}_A,J!_{Z}\odot g_B]\odot c_A
& (\text{$\pi_0=!_Z$}) \\
&=J!_{F1}\odot\overline{F}[\gamma^A_k,J!_Z\odot g_B]\odot c_A  & 
(\text{each $l^\dagger_j$ is a mediating arrow})\\
&=J!_{F1}\odot\overline{F}[h_{A},\Gamma(g_{B})]\odot c_A
& (\text{by definition})\\
&=h_{A}
& (\text{$h_{A}\in\mathfrak{H}^{g_{B}}$})\\
&=\gamma^A_k & (\text{by definition})\,.
\end{align*}
If $k>0$, then we have:
\begin{align*}
&J\pi_k\odot(J\zeta^{-1}\odot\overline{F}[h^{\dagger}_A,g_B]\odot c_A) \\
&=JF\pi_{k-1}\odot\overline{F}[h^{\dagger}_A,g_{B}]\odot c_A
& (\text{$\zeta$ is a mediating arrow}) \\
&=\overline{F}[J\pi_{k-1}\odot h^{\dagger}_A,J\pi_{a-1}\odot g_B]\odot c_A
& (\text{$JF=\overline{F}J$})\\
&=\overline{F}[\gamma^A_{k-1},J\pi_{k-1}\odot g_{B}]\odot c_A
& (\text{$g^\dagger_A$ is a mediating arrow})\\
&=\gamma^A_k
& (\text{by definition})
\,.
\end{align*}

We shall define
$\Delta^{g_B}\colon\mathfrak{H}^{g_{B}}\to \mathfrak{G}^{g_B}$ by
$\Delta^{g_B}(h_A)\coloneqq h^{\dagger}_A$\,; and
let us show the monotonicity of $\Delta^{g_B}$ here.
Assume that $h_A\sqsubseteq h'_A\colon X_A\kto 1$.
Let $\bigl(X_A,(\gamma^{A}_k\colon X_A\kto\overline{F}^k1)_{k\in\omega}\bigr)$ and
$\bigl(X_A,(\gamma'^{A}_k\colon X_A\kto\overline{F}^k1)_{k\in\omega}\bigr)$ be cones 
%and $l^{1,\dagger}_j:X_j\kto Z$ and $l^{2,\dagger}_j:X_j\kto Z$ be mediating arrows 
that are induced by  $h_A$ and $h'_A$ as above,
respectively.
Then by induction on $k\in\omega$, we can show that
$\gamma^A_k\sqsubseteq\gamma'^A_k$ for each $k\in\omega$.
As $\bigl(Z,(J\pi_k\colon Z\kto \overline{F}^k1)_{k\in\omega}\bigr)$ is a $2$-limit,
it implies that the mediating arrow induced by
$\bigl(X_A,(\gamma^{A}_k\colon X_A\kto \overline{F}^k1)_{k\in\omega}\bigr)$
is less than or equal to the one induced by
$\bigl(X_A,(\gamma^{A}_k\colon X_A\kto \overline{F}^k1)_{k\in\omega}\bigr)$---which means
$\Delta^{g_{B}}(h_{A}) \sqsubseteq \Delta^{g_{B}}(h'_{A})$, by definition.
%Hence $\Delta^{g_{B}}$ is monotone.

%It remains to
To conclude the proof, we show that $\Delta$ and $\Gamma$ indeed
constitute an isomorphism, that is,
\begin{enumerate}
\item\label{item:lem:inverseExists1}
  $\Delta^{g_{B}}\bigl(\Gamma_{A}(g_{A})\bigr)=g_{A}$
  if $g_{A} \in \mathfrak{G}^{g_{B}}$; and
\item\label{item:lem:inverseExists2}
  $\Gamma_{A}\bigl(\Delta^{g_{B}}(h_{A})\bigr)=h_{A}$
  if $h_{A} \in \mathfrak{H}^{g_{B}}$.
\end{enumerate}

\subparagraph{\ref{item:lem:inverseExists1}}
Let $g_A\in\mathfrak{G}^{g_{B}}$.
Let $h_A=\Gamma_A(g_{A})$ and define
a cone $\bigl(X,(\gamma^{A}_k\colon X_A\kto\overline{F}^k1)_{k\in\omega}\bigr)$
as above.
Note that by definition of $\Delta^{g_B}$,
$\Delta^{g_B}\bigl(\Gamma(g_{A})\bigr) = h^{\dagger}_A$ where
$h^\dagger_A\colon X_A\kto Z$ is the unique mediating arrow from
$\bigl(X,(\gamma^{A}_k\colon X_A\kto\overline{F}^k1)_{k\in\omega}\bigr)$ to
$\bigl(Z,(J\pi_k:Z\kto \overline{F}^k1)_{k\in\omega}\bigr)$.
%defined as above where each .

For each $k\in\omega$, we prove $J\pi_k\odot g_{A}=\gamma^A_k$ by induction on $k$.
If $k=0$, then
\begin{align*}
J\pi_k\odot g_{A}
&=J!_Z\odot g_{A}
& (\text{$\pi_0=!_Z$})\\
&=h_{A}
& (\text{by definition}) \\
&=\gamma^A_k
& (\text{by definition})
\,.
\end{align*}
If $k>0$, we have:
%
%By , we have 
%$l_j=J\zeta\odot \overline{F}[l_1,\ldots,l_n]\odot c_j$.
%
%For each $j\in[1,i]$, 
%For a family $(l_j:X_j\kto Z)_{1\leq j\leq i}$, we have:
\begin{align*}
&J\pi_k\odot g_A \\
&=J\pi_k\odot J\zeta^{-1}\odot \overline{F}[g_A,g_B]\odot c_A
& (\text{$g_A\in\mathfrak{G}^{g_B}$})\\
&=JF\pi_{k-1}\odot\overline{F}[g_A,g_B]\odot c_A
& (\text{$\zeta$ is a mediating arrow})\\
&=\overline{F}[J\pi_{a-1}\odot g_A,J\pi_{a-1}\odot g_B]\odot c_A
& (\text{$JF=\overline{F}J$})\\
&=\overline{F}[\gamma^A_{k-1},J\pi_{k-1}\odot g_A]\odot c_A
&(\text{by induction hypothesis}) \\
&=\gamma^A_k & (\text{by definition})
\,.
\end{align*}
Therefore by uniqueness of the mediating arrow,
we have $g_A=h^\dagger_A$, and
this implies Cond.~\ref{item:lem:inverseExists1}.

\subparagraph{\ref{item:lem:inverseExists2}}
By definition, $\Delta^{g_B}(h_A)=h^\dagger_A$
where each $h^\dagger_A$ is the unique mediating arrow from a cone
$\bigl(X,(\gamma^{A}_k\colon X_A\kto\overline{F}^k1)_{k\in\omega}\bigr)$
to the limit
$\bigl(Z,(J\pi_k\colon Z\kto \overline{F}^k1)_{k\in\omega}\bigr)$ where
the former is defined as above.
%Therefore we have $J\pi_a\odot l^\dagger_j=\gamma^j_a$ for each $j$ and $a$.
Letting $k=0$,
%by definition of $\gamma^j_a$, 
we have:
\begin{align*}
\Gamma_A(h^\dagger_A)
&=J!_{Z}\odot h^\dagger_A  & (\text{by definition}) \\
&=J\pi_0\odot h^\dagger_A  & (\text{$\pi_0=!_Z$}) \\
&=\gamma^A_0  & (\text{$h^\dagger_A$ is a mediating arrow}) \\
&=h_A & (\text{by definition})\,.
\end{align*}
This implies Cond.~\ref{item:lem:inverseExists2}
\end{proof}

\subsection{Proof of Lem.~\ref{lem:eqSysCoincidenceTraceAndAccProb}}
\begin{proof}
 It is straightforward  that 
 $\Kl(\giry)(X,1)$  is both a pointed $\omega$-cpo and a pointed
 $\omega^{\op}$-cpo (here restriction to $\omega$ is crucial for
 compatibility with measurable structures). 
 % The homset $\Kl(\giry)(X,1)$ has both the greatest element
 % $\top:X\kto 1$ and $\bot:X\kto 1$.
 % Here, the former is given by $\top(x)(\{*\})=1$ and 
 % the latter is given by $\bot(x)(\{*\})=0$. 
 Moreover,
 Kleisli composition $\odot$ in $\Kl(\giry)$ is seen to be
 $\omega$- and $\omega^{\op}$-continuous, similarly
 to  the proof of \cite[Prop.~4.20]{BrengosMP15bef}---thus the equational
 system $E'$ in~(\ref{eq:eqSysCoincidenceTraceAndAccProb:eqSysTrace})
 indeed has a solution $\seq{l'^{\sol}_{\i}}{n}$, by Lem.~\ref{lem:eqSysSolvableOmegaCont}.

  \begin{displaymath}
    \vcenter{\xymatrix@R=.6em@C+2em{
      {\overline{F}X}
        \kar[r]^-{\oF g}
      &
      \oF Z
        \kar[d]^{J\zeta^{-1}}
        \kar[r]^-{\oF J\bang_{Z}= JF\bang_{Z}}
        \ar@{}[rd]|{=}
      &
      {\overline{F}1}
        \kar[d]^{J\bang_{F1}}
      \\
      {X}
        \kar[u]^{c_{A}}
        \kar[r]_-{g}
      &
      {Z}
        \kar[r]_-{J\bang_{Z}}
      &
      {1}
    }}
  \end{displaymath}
 % Notice that diagrams in $\mathfrak{G}$ and $\mathfrak{H}$, respectively,
 % clearly correspond to $\Phi$ and $\Psi$.
 Recall the similarity between $\Phi_{\X}, \Psi_{\X}$ and the diagrams
 in~(\ref{eq:mathfrakGAndMathfrakH}). 
 We can prove $\Gamma\co\Phi_{\X}=\Psi_{\X}\co\Gamma$ (where $\Gamma$ is from
 Lem.~\ref{lem:inverseExists}),
 as shown in the above diagram; indeed
 $(\Gamma\co\Phi_{\X})(g) = J\bang_{Z} \kco J\zeta^{-1} \kco \oF g \kco c_{A}$, and
 $(\Psi_{\X}\co\Gamma)(g) = J\bang_{F1} \kco \oF J\bang_{Z} \kco \oF g \kco c_{A}$.
 This discharges  Cond.~\ref{item:homOfEqSysCompatibility} of
 Lem.~\ref{lem:eqSysWithFixedPtIso}, where $E$ and $E'$ are taken as in~(\ref{eq:eqSysCoincidenceTraceAndAccProb:eqSysTrace});
 Cond.~\ref{item:homOfEqSysStrongRestriction} is discharged by
 Lem.~\ref{lem:inverseExists}. 
 Therefore by taking $\Gamma$ as $\varphi$ and
 $\Delta^{[\seqby{l_{\i}}{i+1}{n}]}$ as 
 $\psi^{(\seqby{l_{\i}}{i+1}{n})}$ 
 in Lem.~\ref{lem:eqSysWithFixedPtIso}, we conclude existence of a
 solution
 $\seq{l^{\sol}_{\i}}{n}$ of $E$, and that
 $\Gamma([\seq{l^{\sol}_{\i}}{n}])=[\seq{l'^{\sol}_{\i}}{n}]$. 

 % with $E$ and $E'$ in Lem.~\ref{lem:eqSysWithFixedPtIso} being
 % (\ref{eq:eqSysCoincidenceTraceAndAccProb:eqSysTrace}) and
 % (Def.~\ref{def:acceptedLangParityTFSys});
 % and we let $\varphi=\Gamma$ and
 % $\psi^{\seqby{l_{\i}}{i+1}{n}}=\Delta^{(g_{B}=\seqby{l_{\i}}{i+1}{n})}$.

 Finally we realize that 
 $E$ in~(\ref{eq:eqSysCoincidenceTraceAndAccProb:eqSysTrace})
 is the same one as $E_{\X}$ in Def.~\ref{def:acceptedLangParityTFSys}; 
 therefore 
 $\trp(\X)=[\seq{l^{\sol}_{\i}}{n}]$.
 % We also note that the equational
 % system $E$ in~(\ref{eq:eqSysCoincidenceTraceAndAccProb:eqSysTrace})
 % is the same one as $E_{\X}$ (in Def.~\ref{def:acceptedLangParityTFSys});
 % thus we have
 % $\Gamma(\trp(\X))=\Gamma([\seq{l^{\sol}_{\i}}{n}])=[\seq{l'^{\sol}_{\i}}{n}]$.
\end{proof}

\subsection{Proof of Lem.~\ref{lem:fixedPtCharacterizationOfAcceptanceProb}}
\begin{proof}
Without loss of generality, we can assume that $n$ is even.
%For the sake of simplicity we assume that $n$ is even.
We shall append a state $\dlstate$ and a unary letter $\dlletter$, that
represent divergence explicitly, by trapping every divergence
into the non-accepting infinite loop $(\dlletter,\dlstate)(\dlletter,\dlstate)\cdots$.

More concretely, we define a new PPTA
$\X_{\dlstate}=((X_1,\dotsc,X_n,\{\dlstate\}),\Sigma+(\dlletter),\delta_{\dlstate},s)$,
where
\begin{math}
  \delta^{\dlstate}\colon
  (X+\{\dlstate\}) \to \giry\bigl(\textstyle\coprod_{\sigma\in\Sigma+(\dlletter)}X^{|\sigma|}\bigr)
\end{math}
is defined as follows. %by the following.
\begin{align*}
&\delta^{\dlstate}(x)(\sigma,(\seq{x_{\i}}{n})):=  \\
&\qquad
\begin{cases}
\delta(x)\bigl(\sigma,(\seq{x_{\i}}{n})\bigr) & (x,\seq{x_{\i}}{n}\in X, \sigma\in\Sigma) \\
1-\sum\nolimits_{
    (\sigma,(\seq{x_{\i}}{|\sigma|}))
    \in \coprod_{\sigma\in \Sigma} X^{|\sigma|}}
  \delta\bigr(\sigma,(\seq{x_{\i}}{|\sigma|})\bigl) & (n=1, x\in X,  x_1=\dlstate, \sigma=\dlletter) \\
1 &   (n=1, x=x_1=\dlstate, \sigma=\dlletter) \\
0 & (\text{otherwise}) \,.
\end{cases}
\end{align*}
%\begin{align*}
%  &\delta^{\dlstate}(x)\bigl(\sigma,(\seq{x_{\i}}{n})\bigr)
%  &&\hspace{-7em}\;\coloneqq\;
%  \delta(x)\bigl(\sigma,(\seq{x_{\i}}{n})\bigr)
%  \\
%  &\delta^{\dlstate}(x)\bigl(\sigma,(\seq{x_{\i}}{n})\bigr)
%  &&\hspace{-7em}\;\coloneqq\;
%  \textstyle
%  \sum\nolimits_{
%    (\sigma,(\seq{x_{\i}}{|\sigma|}))
%    \in \coprod_{\sigma\in \Sigma} X^{|\sigma|}
%  }
%  \delta\bigr(\sigma,(\seq{x_{\i}}{|\sigma|})\bigl)
%  \\
%  &\delta^{\dlstate}(x)\bigl(\sigma,(\spadesuit)\bigr)
%  &&\hspace{-7em}\;\coloneqq\;
%  0
%  \\
%  &\delta^{\dlstate}(x)\bigl(\dlletter,(\spadesuit)\bigr)
%  &&\hspace{-7em}\;\coloneqq\;
%  1
%\end{align*}
Notice that $\{\dlstate\}$ has an \emph{odd} priority $n+1$ that is \emph{maximum}.
Let $\seq{\tilde{l}^{\sol}_{\i}}{n+1}$ be the solution of 
%We shall consider 
the following equational system
over $[0,1]^{X+\{\dlstate\}}$.
%and let $\seq{\tilde{l}^{\sol}_{\i}}{n+1}$ be its solution.
\begin{equation}\label{eq:eqSysPsiWithDL}
  \begin{array}{rll}
    u'_{1}
    &=_{\mu}&
    \Psi'_{\X_{\dlstate}}([u'_{1},\cdots,u'_{n},u'_{n+1}]) \upharpoonright X_{1}
    \\
    &\;\vdots&
    \\
    u'_{n}
    &=_{\nu}&
    \Psi'_{\X_{\dlstate}}([u'_{1},\cdots,u'_{n},u'_{n+1}]) \upharpoonright X_{n}
    \\
    u'_{n+1}
    &=_{\mu}&
    \Psi'_{\X_{\dlstate}}([u'_{1},\cdots,u'_{n},u'_{n+1}]) \upharpoonright \{\dlstate\}
  \end{array}
\end{equation}
%
%Regarding this $\delta^{\dlstate}$ and
%$X_{1}+\cdots+ X_{n}+\{\dlstate\}$,
%we define $\Psi^{\dlstate}_{\X}:[0,1]^{X+\dlstate}\to[0,1]^{X+\dlstate}$ 
%in a similar manner to $\Psi'_\X$.
%$\Psi^{\dlstate}_{\X}$ is defined obviously.
%
The 
%last part of 
$(n+1)$-th
solution 
$\tilde{l}^{\sol}_{n+1}$ is $[\dlstate \mapsto 0]$,
since it is defined by the least fixed point of the identity function.
%namely, $\lfp[\{\dlstate \mapsto p\} \mapsto \{\dlstate \mapsto p\}]$.
Thus we can ignore the last equation and obtain the following equational system,
without changing the other part of the solution $\seq{\tilde{l}^{\sol}_{\i}}{n}$.
\begin{equation*}%\label{eq:fixedPtCharacterizationOfAcceptanceProb:eqSys}
  \begin{array}{rll}
    u'_{1}
    &=_{\mu}&
    \Psi'_{\X_{\dlstate}}([u'_{1},\cdots,u'_{n},[\dlstate\mapsto 0]]) \upharpoonright X_{1}
    \\
    &\;\vdots&
    \\
    u'_{n}
    &=_{\nu}&
    \Psi'_{\X_{\dlstate}}([u'_{1},\cdots,u'_{n},[\dlstate\mapsto 0]]) \upharpoonright X_{n}
  \end{array}
\end{equation*}
It is easy to see that
\begin{math}
  \Psi'_{\X_{\dlstate}}(\seq{l_{\i}}{n},[\dlstate\mapsto 0])
  =
  \Psi'_{\X}(\seq{l_{\i}}{n})
\end{math}.
Thus the solution $\seq{\tilde{l}^{\sol}_{\i}}{n}$
coincides with  $\seq{l^{\sol}_{\i}}{n}$.

We shall define $\Run_{\X}^{\dlstate}$,
in the similar manner to $\Run_{\X_{\dlstate}}$ (Def.~\ref{def:runOfProbTreeAutom}),
except that
%except the notable difference:
any $\rho\in\Run_{\X_{\dlstate}}$ that contains a
label $(\sigma,\dlstate)$ where $\sigma\in\Sigma$ does \emph{not}
belong to $\Run^{\dlstate}_{\X}$.
(Recall that in the current \emph{probabilistic} setting,
$\Run_{\X}$ is defined to permit \emph{arbitrary} transitions
between the states.)

We augment the equational system~(\ref{eq:eqSysCharacterizationOfAcceptingRuns:eqSys})
(in Lem.~\ref{lem:eqSysCharacterizationOfAcceptingRuns}),
which characterizes the accepting runs, with $\dlstate$.
Though the system~(\ref{eq:eqSysCharacterizationOfAcceptingRuns:eqSys})
is defined with respect to $Run_{\X}$ of an NBTA $\X$,
its definition naturally extends to runs of PBTAs.
The definition of this augmented equational system is as follows.
%We also augment the equational
%system~(\ref{eq:eqSysCharacterizationOfAcceptingRuns:eqSys})
%(in Lem.~\ref{lem:eqSysCharacterizationOfAcceptingRuns})
%with $\dlstate$, as follows. %shown in what follows.
\begin{equation}\label{eq:eqSysBoxWithDL}
  \begin{array}{rll}
    u_{1}
    &=_{\mu}&
    \Diamond_{\X_{\dlstate}}(u_{1} \cup\cdots\cup u_{n}\cup \{\dlstate\}) \cap \Run^{\dlstate}_{\X,X_{1}}
    \\
    &\;\vdots&
    \\
    u_{n}
    &=_{\nu}&
    \Diamond_{\X_{\dlstate}}(u_{1} \cup\cdots\cup u_{n} \cup \{\dlstate\}) \cap \Run^{\dlstate}_{\X,X_{n}}
    \\
    u_{n+1}
    &=_{\mu}&
    \Diamond_{\X_{\dlstate}}(u_{1} \cup\cdots\cup u_{n} \cup \{\dlstate\}) \cap \Run^{\dlstate}_{\mathcal{X},\{\dlstate\}}
  \end{array}
\end{equation}
Much like in the last case of~(\ref{eq:eqSysPsiWithDL}),
we can easily see that the (non-last) solution of
the equational system (\ref{eq:eqSysBoxWithDL}) coincides with
one of~(\ref{eq:eqSysCharacterizationOfAcceptingRuns:eqSys}),
which is $\AccRun_{\X}$.
Note that here the definition of $\Run^{\dlstate}_{\X}$, which
excludes a run with a $(\sigma,\dlstate)$-labeled node, is crucial.

Now we aim to apply Lem.~\ref{lem:eqSysWithOmegaChainMap},
sending the solution of (\ref{eq:eqSysBoxWithDL}) (accepting runs)
to one of (\ref{eq:eqSysPsiWithDL}) (acceptance probabilities), by
\begin{math}
  \mu^{\Run}_{\X_{\dlstate},\place}
% \colon x\mapsto
% \mu^{\Run}_{\X_{\dlstate},x}
\end{math}.
Notice that first: for the equational system (\ref{eq:eqSysBoxWithDL}),
each interim solution can be defined as
either the $\omega$-supremum or the $\omega$-infimum
(as in the proof of Lem.~\ref{lem:eqSysCharacterizationOfAcceptingRuns}),
essentially because $\Diamond_{\X_{\dlstate}}$ is both $\omega$-continuous and $\omega^{\op}$-continuous;
thus (\ref{eq:eqSysBoxWithDL}) can be solved within measurable spaces.
This observation is required, since
\begin{math}
  \mu^{\Run}_{\X_{\dlstate},\place}
\end{math}
is defined only over measurable sets of runs.
Preservation of $\bot$, is almost trivial; and
$\Psi_{\X_{\dlstate}}$ and $\mu^{\Run}_{\X_{\dlstate},\place}$
are both $\omega$-continuous and $\omega^\op$-continuous
by measurability.
%by elementary measure theory.

The other conditions required in
Lem.~\ref{lem:eqSysWithOmegaChainMap} are as follows.
\begin{itemize}
  \item Commutativity:
    \hspace{1em}
    \begin{math}
      \mu^{\Run}_{X_{\dlstate},\place}
      (\Diamond_{\X_{\dlstate}} R)
      \;=\;
      \Psi^{\dlstate}_{\X}
      \left(\mu^{\Run}_{\X_{\dlstate},\place}(R)\right)
    \end{math}
    \enskip
    for
    $R\in\pow(\Run^{\dlstate}_{\X})$
  \item Preservation of $\top$:
    \hspace{0.05em}
    \begin{math}
      \mu^{\Run}_{\X_{\dlstate},\place}
      \left(\Run_{\X_{\dlstate}}\right)
      \;=\;
      1
    \end{math}
\end{itemize}
The commutativity condition is easily seen; and
the preservation of $\top$ is due to the definition of $\delta_{\dlstate}$---in which
the ``missing'' probability is filled by the transitions to $\dlstate$.

Then by applying Lem.~\ref{lem:eqSysWithOmegaChainMap}, we have
\begin{equation*}
  \mu^{\Run}_{\X_{\dlstate},\place}
  (\AccRun_{X,i})
  \;=\;
  l'^{\sol}_{i}
  \,.
\end{equation*}
Since
\begin{math}
  \AccProb(x)
  =
  \mu^{\Run}_{\X,x}(\AccRun_{X})
\end{math}
by definition,
it suffices to show, for any $x\in X$,
\begin{equation*}
  \mu^{\Run}_{\X,x}(\AccRun_{\X})
  \;=\;
  \mu^{\Run}_{\X_{\dlstate},x}(\AccRun_{\X})
  \,.
\end{equation*}

In fact, thanks to measurability, we only need to show that
for any partial run $\xi$ of $\X$:
\begin{equation}\label{eq:muRunDLCoincidesForCylXi}
  \mu^{\Run}_{\X,x}(\Cyl_{\X}(\xi))
  \;=\;
  \mu^{\Run}_{\X_{\dlstate},x}(\Cyl_{\X}(\xi))
  \,.
\end{equation}
We note that $\Cyl_{\X}(\xi)$ does \emph{not} contain any of $\dlletter$ or $\dlstate$,
%just 
because $\xi$ is a run of $\X$ and is not a run of $\X_{\dlstate}$.
Therefore, by the inductive definition of $\mu^\Run_\X$ in Def.~\ref{def:NoDivergence},
(\ref{eq:muRunDLCoincidesForCylXi}) can be straightforwardly confirmed.
This concludes the proof.
%Let us explain of the difference between $\mu^{\Run}_{\X,x}$ and
%$\mu^{\Run}_{(X+\{\dlstate\},\Sigma\cup\{\dlletter\},\delta^{\dlstate},s),x}$:
%the former assigns $\NDL_{\X}$ to a $(x',\ast)$-labeled state,
%and the latter assigns 1 to such a state.
\end{proof}

\subsection{Proof of Thm.~\ref{thm:coincidenceForPPTA}}
\begin{proof}
We identify $\X$ with a $(\giry,\FSigma)$-system
$\bigl((\seq{X_{\i}}{n}),\delta\colon X\kto\overline{\FSigma}X,s\colon 1\kto X\bigr)$, and
let $1=\{\bullet\}$.
We can easily see that
$\Psi_{\X}$ 
(in Lem.~\ref{lem:eqSysCoincidenceTraceAndAccProb})
and $\Psi'_{\X}$
(in Lem.~\ref{lem:fixedPtCharacterizationOfAcceptanceProb})
define exactly the same function.
Therefore, by the claim of these two lemmas, we have
\begin{math}
  \Gamma\bigl(\trp(\X)\bigr) = \AccProb_{\X}
\end{math}.

Now we note the following:
\begin{multline*}
  \Gamma\bigl([x \mapsto \mu^{\myTree}_{\X,x}]\bigr)
  \;=\;
  J\bang_{\myTree}\kco\bigl([x \mapsto \mu^{\myTree}_{\X,x}]\bigr)
  \;=\;
  \mu^{\myTree}_{\X,x}(\myTree_\Sigma)
  \;=\;
  \\
  \mu^{\myTree}_{\X,x}(\Cyl_{\Sigma}(\ast))
  \overset{\text{Def.~\ref{def:NoDivergence}}}{=}
  \mu_{\X,x}^{\Run}\bigl(\,
    \DelSt^{-1}(\Cyl_{\Sigma}(\ast))\cap\AccRun_{\X}
  \,\bigr)
  \,,
\end{multline*}
where $\ast$ denotes the partial tree consisting of one node labeled by
 $\ast$ (``continuation'', Def.~\ref{def:measurableStrOfTreeAndRunBriefly}).
As $\DelSt^{-1}(\Cyl_{\Sigma}(\ast))$ is nothing but the set of all runs $\Run_{\X}$,
we have
\begin{equation*}
  \Gamma\bigl([x \mapsto \mu^{\myTree}_{\X,x}]\bigr)
  \;=\;
  \mu_{\X,x}^{\Run}\bigl(\,
    \AccRun_{\X}
  \,\bigr)
  \;=\;
  \AccProb_{\X}
  \,
\end{equation*}
by the definition of $\AccProb_{\X}$
(in Lem.~\ref{lem:fixedPtCharacterizationOfAcceptanceProb}).

Combining the above two facts 
 we obtain
$\Gamma\bigl(\trp(\X)\bigr)=\Gamma\bigl([x \mapsto \mu^{\myTree}_{\X,x}]\bigr)$.
Recall that there is an inverse of $\Gamma$, namely
$\Delta$ in Lem.~\ref{lem:inverseExists}; this yields
$\trp(x) = \mu^{\myTree}_{\X,x}$.

Now the claim is immediate, as below, where we have only to consider
 cylinder sets $\Cyl_{\Sigma}(\lambda)$ that generate the relevant
 $\sigma$-algebra. 
\begin{equation*}
  \trp(\X)(\bullet)(\Cyl_{\Sigma}(\lambda))
  \;=\;
  \sum_{x\in X}s(x)\cdot\mu^{\myTree}_{\X,x}(\Cyl_{\Sigma}(\lambda))
  \;=\;
  \mu^{\myTree}_{\X}(\Cyl_{\Sigma}(\lambda))
  \;=\;
  \Lang(\X)(\Cyl_{\Sigma}(\lambda))
  \qedhere
\end{equation*}
\end{proof}

\fi
\end{document}

\newpage
\setcounter{section}{25}
\section{Obsolete Contents}
\subsection{Original Proof of Fixed Point Coincidence in $\Kl(\giry)$}
{\color{blue}
Though those proofs are well-written with diagrams,
they do not have the ``parameter'' $g_B$ in Lem.~\ref{lem:inverseExists}.
Because we don't have time to generalize following contents,
we omit those and use a proof written by Natsuki.
}
 \begin{mylemma}[$\tr(c,f)$]\label{lem:fromFixedPtToOneToTreeProbMeasure}
  Let $c\colon X\kto FX$ be a coalgebra in $\Kl(\giry)$, and 
  Let $f\colon X\kto 1$ be an arrow with the following fixed-point
  property. 
 \begin{equation}\label{eq:fixedPtToOne}
  \vcenter{\xymatrix@R=.8em@C+3em{
   {FX}
      \kar[r]^-{\oF f}
      \ar@{}[rd]|{=}
   &
   {F1}
      \kar[d]^{J\bang }
   \\
   {X}
      \kar[u]^{c}
      \kar[r]_-{f}
   &
   {1}
}}
  \qquad\text{(in $\Kl(\giry)$)}
 \end{equation}
 Once such $c$ and $f$ are fixed, they induce a cone
 $\bigl(\,\gamma(c,f)_{k}\colon X\kto F^{k}1\,\bigr)_{k\in\omega}$ 
 over the sequence 
 $\xymatrix@1@C=1.2em{1 & F1 \kar[l]_{\bang } & F^{2}1 \kar[l]_{J\bang } &\cdots
  \kar[l]}$ in~(\ref{eq:finalSeqLiftedToKlGiry}) by:
  \begin{equation}\label{eq:coneByCAndF}
\begin{aligned}
  \gamma(c,f)_{0}&:=
  \bigl(\,
  \xymatrix@1@C=1.5em{
   X \kar[r]^{f}
  &  1}\bigr)\enspace,\\
  \gamma(c,f)_{k+1}&:=  \bigl(\,
  \xymatrix@1@C=3.5em{
    {X}
     \kar[r]^{c}
  &
    {FX}
   \kar[r]^-{\oF( \gamma(c,f)_{k})}
  &
     F^{k+1}1
}
\,\bigr)\enspace.
\end{aligned}
  \end{equation}
%\,\bigr)
% \begin{equation}\label{eq:coneByCAndF}
% \begin{array}{rclcl}
%  \gamma(c,f)_{0}&:= &
%\bigl(\,
%  \xymatrix@1@C=1.5em{
%   X \kar[r]^{f}
%  &  1
%}
%\,\bigr)\enspace,\quad
%  \gamma(c,f)_{k+1}&:= & \bigl(\,
%  \xymatrix@1@C=3.5em{
%    {X}
%     \kar[r]^{c}
%  &
%    {FX}
%   \kar[r]^-{\oF( \gamma(c,f)_{k})}
%  &
%     F^{k+1}1
%}
%\,\bigr)\enspace.
% \end{array} 
% \end{equation}
 By Lem.~\ref{lem:limitInKlGiry} the cone induces a mediating arrow 
 $\tr(c,f)\colon X\kto Z$:
 \begin{equation}\label{eq:traceCFInduced}
  \vcenter{\xymatrix@R=1em@C+1em{
  &&&{ Z\mathrlap{\;\text{(limit)}}}
  \kar@/_1pc/[llld]_(.8){J\pi_{0}}
  \kar@/_.7pc/[lld]_(.8){J\pi_{1}}
  \kar[ld]_{J\pi_{2}}
  \\
  { 1}
  &
   { F1}
   \kar[l]_{J \bang }
  &
   { F^{2}1}
   \kar[l]_{J F\bang }
  &
   {\cdots}
   \kar[l]
  \\
  &&&{X}
  \kar@/^1pc/[lllu]^(.9){\gamma(c,f)_{0}}
  \kar@/^.7pc/[llu]^(.9){\gamma(c,f)_{1}}
  \kar[lu]^(.9){\gamma(c,f)_{2}}
  \kar@{-->}@/_2pc/[uu]_{\tr(c,f)}
}}
  \qquad\text{in $\Kl(\giry)$.}
 \end{equation}
  We claim the following. Let $k\in\omega$, and $t\in
  \mathfrak{F}_{F^{k}1}$
  be any measurable subset of $F^{k}1$. 
  Let $\Cyl(t):=\pi_{k}^{-1}(t)\subseteq Z$ be the \emph{cylinder set}
  induced by  $t$. Then 
  \begin{equation}\label{eq:10161946}
   \tr(c,f)(x)\bigl(\Cyl(t)\bigr) 
   \;=\; \bigl( (\oF^{k} f) \odot c^{k}\bigr)(x)(t)\enspace,
  \end{equation}
  where $c^{k}:=
  \bigl(\xymatrix@1@C=2.3em{
   X \kar[r]^-{c}
  &  FX \kar[r]^-{\oF c}
  & {\cdots}
\kar[r]^-{\oF^{k-1} c}
  &
   F^{k} X
}
\bigr)
% \oF^{k-1} c\odot \cdots \odot \oF c \odot c\colon X\kto F^{k}X
  $ and $\oF^{k}f\colon F^{k}X \kto F^{k}1$.
  
  Moreover $\tr(c,f)$ is monotone in $f$: $f\le f'$ (with respect to the
  pointwise order in $X\to \giry 1$) implies $\tr(c,f)\le \tr(c,f')$. 
  \end{mylemma}
 \begin{proof}
 Firstly we show that
 $\gamma(c,f)_{k}= \oF^{k} f\odot c^{k}$ holds for any $k\in \nat$, by
  induction.
 The base case is obvious; for the step case, 
 \begin{multline*}
 \gamma(c,f)_{k+1} = 
 \oF(\gamma(c,f)_{k}) \odot c  \\
  \stackrel{\text{ind.\ hyp.}}{=}
 \oF( \oF^{k} f)\odot \oF c^{k} \odot c 
\stackrel{(*)}{=} 
 \oF^{k+1} f\odot  
c^{k+1}
\enspace.
 \end{multline*}
 For the equality $(*)$ we used
 % $JF=\oF J$, a property of $\oF$ as a lifting of
 %  $F$ (see~(\ref{eq:FAndOF})), and that
 $c^{k+1}=\oF c^{k}\odot c$. 
 An immediate consequence is that
 $\bigl(\,\gamma(c,f)_{k}\colon X\kto F^{k}1\,\bigr)_{k\in\omega}$ 
 is indeed a cone (the lower half of~(\ref{eq:traceCFInduced})): for each $k\in\omega$,
 \begin{multline*}
  JF^{k}\bang \odot \gamma(c,f)_{k+1} = 
 JF^{k}\bang \odot \oF^{k+1}f\odot c^{k+1}\\
  \stackrel{(\dagger)}{=}
\oF^{k}J\bang \odot \oF^{k+1}f\odot \oF^{k} c\odot c^{k}
  \\
  =
  \oF^{k}(J\bang\odot \oF f\odot c)\odot c^{k}
  \stackrel{(\dagger)}{=}
  \oF^{k}f\odot c^{k} = \gamma(c,f)_{k}\enspace,
 % \\
 % %
 %  &(J\bang) \odot \gamma(c,f)_{1} = 
 %  (J\bang) \odot (\oF f)\odot c  = f = \gamma(c,f)_{0}
 %  \quad\text{by~(\ref{eq:fixedPtToOne}),}
 %  \\
 %   &
 %   (JF^{k}\bang) \odot \gamma(c,f)_{k+1} = 
 %   (JF^{k}\bang) \odot (\oF\gamma(c,f)_{k}) \odot c \stackrel{(*)}{=} 
 %   (\oF JF^{k-1}\bang) \odot (\oF\gamma(c,f)_{k}) \odot c
 %  \\&\qquad
 %  \stackrel{\text{ind.\ hyp.}}{=}
 %   (\oF \gamma(c,f)_{k-1})\odot c
 %   = \gamma(c,f)_{k}\enspace.
 \end{multline*}  
 where $(\dagger)$ is by the fixed-point
  property~(\ref{eq:fixedPtToOne}). 

Let us turn to the equality~(\ref{eq:10161946}). 
 \begin{align*}
&\bigl( (\oF^{k} f) \odot c^{k}\bigr)(x)(t)
\\
& = 
  \bigl(\gamma(c,f)_{k}\bigr)(x) (t)
  \quad\text{by what is in the above}
\\
&=
  \bigl(J\pi_{k}\odot \tr(c,f) \bigr)(x)(t)
  \quad\text{by~(\ref{eq:traceCFInduced})} 
\\
&=
  \bigl(\giry \pi_{k}\circ \tr(c,f) \bigr)(x)(t)
  \quad\text{by~(\ref{eq:PreComposingJf})} 
\\
&=
  \Bigl((\giry \pi_{k}) \bigl(\tr(c,f)(x)\bigr)\Bigr) (t)
  \quad\text{where $\tr(c,f)(x)\in \giry Z$}
\\
&=
\bigl(
\tr(c,f)(x)\bigr)\bigl(\pi_{k}^{-1}(t)\bigr) \\
&  \quad\text{by  $\giry$'s action on arrows via a push-forward measure}
\\
 &=
\bigl(
 \tr(c,f)(x)\bigr)\bigl(\Cyl(t)\bigr)\enspace,
 \end{align*}
  as required.

  The last monotonicity claim follows easily from the fact that $Z$
  in~(\ref{eq:traceCFInduced}) is a 2-limit
  (Lem.~\ref{lem:limitInKlGiry}),
  that is, 
  inequality between cones induces inequality between mediating arrows. 
 \end{proof}
  \begin{myremark}
  \label{rem:exampleFixedPtToOne}
 Note that no extremal property is required of a fixed point $f$ in~(\ref{eq:fixedPtToOne}). An example of $f$ is
 the function $\NDL_{\X}\colon X\to [0,1]$
 (Def.~\ref{def:LangForPPTAs}) when $c$ is induced by a
 probabilistic tree automaton $\X$; this is a greatest fixed point
 (Lem.~\ref{lem:NDLAsAGreatestFixedPoint}).
 Another example is
 $\AccProb_{\X}$
 in Lem.~\ref{lem:fixedPtCharacterizationOfAcceptanceProb} for the
 probability of generating an accepting run; this is neither greatest
 nor least. 
  \end{myremark}

 \begin{mylemma}\label{lem:fixedPointsToOneAndToZ}
  There exists a bijective correspondence between the sets
 \begin{multline}
 \mathfrak{A}= \bigl\{
  f\colon X\kto 1
  \,\bigl|\bigr.\,
  \vcenter{\xymatrix@R=.8em@C+1em{
   {FX}
      \kar[r]^-{\oF f}
      \ar@{}[rd]|{=}
   &
   {F1}
      \kar[d]^{J\bang }
   \\
   {X}
      \kar[u]^{c}
      \kar[r]_-{f}
   &
   {1}
}}
\;\text{(\ref{eq:fixedPtToOne})}\,
\bigr\} \\
 \quad\text{and}\quad 
  \mathfrak{B}=\bigl\{
  g\colon X\kto Z
  \,\bigl|\bigr.\,
  \vcenter{\xymatrix@R=.8em@C+1em{
   {FX}
      \kar[r]^-{\oF g}
      \ar@{}[rd]|{=}
   &
   {FZ}
   \\
   {X}
      \kar[u]^{c}
      \kar[r]_-{g}
   &
   {Z}
      \kar[u]^{\cong}_{J\zeta}
}}
\bigr\}\enspace,
 \end{multline}
 given by: $f\mapsto \tr(c,f)$ for one direction (where $\tr(c,f)$ is as
  in Lem.~\ref{lem:fromFixedPtToOneToTreeProbMeasure}), and
 $g\mapsto 
  \bigl(\xymatrix@1@C=1.5em{
   X \kar[r]^-{g}
  &  Z \kar[r]^-{J\bang}
  & 1
}
\bigr)
$ for the other. Moreover the correspondence is an order isomorphism. 
 \end{mylemma}
\begin{proof}
 We must first check that the two mappings indeed yield elements of the
 desired set. For the former mapping, it is straightforward to see that
 $\tr(c,f)$ is a coalgebra homomorphism from $c$ to
 $\zeta$---Lem.~\ref{lem:limitInKlGiry} lifts (not only the upper limit,
 but)
 the lower limit of~(\ref{eq:finalSeqInMeas}) to a limit in
 $\Kl(\giry)$. Here we need the fact that
 $\gamma(c,f)_{k+1}=\oF(\gamma(c,f)_{k})\odot c$; this is
 straightforward by $\gamma(c,f)_{k}= \oF^{k} f\odot c^{k}$. 
 For the latter mapping, it suffices to show that 
 \begin{equation}\label{eq:fixedPointsToOneAndToZ:commute1}
    \vcenter{\xymatrix@R=.8em@C+3em{
   {FZ}
      \kar[r]^-{\oF J\bang = JF\bang}
      \ar@{}[rd]|{=}
   &
   {F1}
      \kar[d]^{J\bang}
   \\
   {Z}
      \kar[u]^{J\zeta}
      \kar[r]_-{J\bang}
   &
   {1\mathrlap{\enspace;}}
}}
 \end{equation}
  this is simply the finality of $1$ in $\Meas$, carried by $J$. 

 To see that the correspondence is bijective, let us first start
 $f\colon X\kto 1$ with the required fixed-point property. We have
 \begin{align*}
  &J\bang\odot \tr(c,f)
  =
   J\pi_{0}\odot \tr(c,f)
  \stackrel{\text{(\ref{eq:traceCFInduced})}}{=} 
\gamma(c,f)_{0}
  \stackrel{\text{(\ref{eq:coneByCAndF})}}{=} 
  f\enspace,
 \end{align*}
 as required. Conversely, let $g\colon X\kto Z$ be a homomorphism from
 $c$ to $J\zeta$; we shall check that $g=\tr(c,J\bang\odot g)$, that is, 
 $g\colon X\kto Z$ is the mediating arrow in~(\ref{eq:traceCFInduced})
 where $f$ is set to be $J\bang\odot g$. Indeed we have, for each $k\in\omega$:
 \begin{align*}
 &
%----------------------------------------------
    \bigl(\xymatrix@1@C=1.5em{
   X \kar[r]^-{g}
  &  Z \kar[r]^-{J\pi_{k}}
  & F^{k}1
}
\bigr)
%----------------------------------------------
 \\
 &=
%----------------------------------------------
    \bigl(\xymatrix@1@C=3em{
   X \kar[r]^-{g}
  &  Z 
   \kar[r]^-{J\zeta}
  &
    FZ
  \kar[r]^-{\oF J\pi_{k-1}}
  & F^{k}1
}
\bigr)
%----------------------------------------------
\quad\text{by~(\ref{eq:finalSeqInMeas}) and $JF=\oF J$}
\\
&=\cdots
 =
%----------------------------------------------
    \bigl(\xymatrix@1@C=2.5em{
   X \kar[r]^-{g}
  &  Z 
   \kar[r]^-{J\zeta}
  &
    FZ
   \kar[r]^-{\oF J\zeta}
  &
   {\cdots}
   \kar[r]^-{\oF^{k-1}J\zeta}
  &
   F^{k}Z
  \kar[r]^-{\oF^{k} J\pi_{0}}
  & F^{k}1
}
\bigr)
%----------------------------------------------
\\
&
 =
%----------------------------------------------
    \bigl(\xymatrix@1@C=2.5em{
   X \kar[r]^-{c}
  &  FX
   \kar[r]^-{\oF c}
  &
   {\cdots}
   \kar[r]^-{\oF^{k-1}c}
  &   
   F^{k}X
    \kar[r]^{\oF^{k}g}
  &
   F^{k}Z
  \kar[r]^-{\oF^{k} J\bang}
  & F^{k}1
}
\bigr)
%----------------------------------------------
\\
&\hspace{.2\textwidth}
\text{since $g$ is a homomorphism and
  $\pi_{0}=\bang$}
\\
&=\oF^{k}(J\bang\odot g)\odot c^{k}
=
\gamma(c,J\bang\odot g)_{k}\enspace,
 \end{align*}
 as required.

 Monotonicity of the correspondence from left to right has been
 established in Lem.~\ref{lem:fromFixedPtToOneToTreeProbMeasure}; 
 the correspondence from right to left is monotone, too, since Kleisli
 composition $\odot$ is monotone. 
 This concludes the proof.
\end{proof}

\begin{mydefinition}
\label{def:deltaGammaPhiPsi}
 Let us express the order isomorphism in
 Lem.~\ref{lem:fixedPointsToOneAndToZ}
 in the following way, separating the domain $X$ into $X_{1}$ and
 $X_{2}$.
 \begin{align*}
 \vcenter{\xymatrix@R=1em{
 %  {\Kl(\giry)(X_{1},Z)\times\Kl(\giry)(X_{1},Z)}
 %  \ar[r]^{\Gamma}
 % \ar@(lu,ru)[]^{\tuple{\Phi_{1},\Phi_{2}}}
 %  &
 % {\Kl(\giry)(X_{1},1)\times\Kl(\giry)(X_{1},1)}
 %  \ar@(lu,ru)[]^{\tuple{\Psi_{1},\Psi_{2}}}
 % \\
 %-------------------------------------------------------
{P:=
  \Bigl\{
  \left(
   \begin{array}{l}
    f_{1}\colon X_{1}\kto Z,
     \\
        f_{2}\colon X_{2}\kto Z
   \end{array}
  \right)
 \,\Bigl|\Bigr.\,
\begin{array}{l}
 f_{1}=\Phi_{1}(f_{1},f_{2}),
  \\
    f_{2}=\Phi_{2}(f_{1},f_{2})
\end{array}  
\Bigr\}
  }
 %-------------------------------------------------------
 \ar@<.5em>[dd]^{\Gamma}
 \ar@{}[dd]|{\rotatebox{270}{$\cong$}}
% \ar@{^{(}->}[u]
 %-------------------------------------------------------
 \\
 \\
Q :=
{
  \Bigl\{
  \left(
   \begin{array}{l}
    g_{1}\colon X_{1}\kto 1,
     \\
        g_{2}\colon X_{2}\kto 1
   \end{array}
  \right)
 \,\Bigl|\Bigr.\,
\begin{array}{l}
 g_{1}=\Psi_{1}(g_{1},g_{2}),
  \\
    g_{2}=\Psi_{2}(g_{1},g_{2})
\end{array}  
  \Bigr\}
  }
 %-------------------------------------------------------
 \ar@<.5em>[uu]^{\Delta}
%  \ar@{^{(}->}[u]
 }
 }
 \end{align*}
Here
$\Phi_{i}(f_{1},f_{2})=J\zeta^{-1}\odot\oF[f_{1},f_{2}]\odot
c_{i}$ and
$\Psi_{i}(g_{1},g_{2})=J\bang_{F1}\odot\oF[g_{1},g_{2}]\odot c_{i}$ (for
each $i\in\{1,2\}$, see Lem.~\ref{lem:fixedPointsToOneAndToZ}).
Moreover $\Gamma(f_{1},f_{2})=(J\bang_{Z}\odot f_{1}, J\bang_{Z}\odot f_{2})$
and $\Delta$ is the construction in
Lem.~\ref{lem:fromFixedPtToOneToTreeProbMeasure}. 
We shall use the
notational convention $\Gamma=\tuple{\Gamma_{1},\Gamma_{2}}$; the
same for $\Delta,\Phi$ and $\Psi$.
\end{mydefinition}

\begin{mycorollary}
\label{cor:eqSysCoincidencePseudoTop}
Let $g_\nu$ be the greatest fixed point of $\Psi$ ($g_\nu$ necessarily exists since $[0,1]^X$ is a complete lattice).
Then $\Delta(g_\nu)$ is the greatest fixed point of $\Phi$.
\end{mycorollary}
\begin{proof}
Suppose there is a fixed point $g'\colon X\kto Z$ that is not smaller than $\Delta(g_\nu)$.
That means $\Delta(\Gamma(g')) = g' \nless \Delta(g_\nu)$.
By the monotonicity of $\Delta$, $\Gamma(g') \nless g_\nu$, contradicting to that $g_\nu$ is the maximum.
Therefore there is no such a fixed point $g'$, concluding that $\Delta(g_\nu)$ is also the greatest fixed point.
\qed
\end{proof}

%  LocalWords:  rstea FX uchi dr deadend FZ TFX FTX Tg ev TF Fo ur drr
%  LocalWords:  rr NFAs bbb ul aa LTL versa Asm iff rllr NPTA th jw JF
%  LocalWords:  rclrcl kokokara typecheck PPTA PPTAs subtree Carath hoc
%  LocalWords:  odory's measurability subrun myref wi wj iw ww lfp gfp
%  LocalWords:  rll dl pc llld lld lllu lu Jf kJ NBTA Tarski's w'i rcr
%  LocalWords:  Shunsuke's PBTA Shunsuke Natsuki's uu hyp extremal ru
%  LocalWords:  Ichiro's lfp's gfp's RoughProof jpg llllllll PBTAs
%  LocalWords:  publabbr jrsrabbr procabbr